\newcommand{\subtitle}[1]{%
  \posttitle{%
    \par\end{center}
    \begin{center}\large#1\end{center}
    \vskip0.5em}%
}
\newtheorem{theorem}{Theorem}
\newtheorem{lemma}{Lemma}
\newtheorem{proposition}{Proposition}
\newtheorem{assumption}{Assumption}
\def\E{{\mathbb {E}}}
\def\X{{\mathcal{X}}}
\def\N{{\mathcal{N}}}
\def\T{\mathrm{T}}
\def\A{{\mathcal{A}}}
\def\N{{\mathcal{N}}}
\def\sumi{\sum_{i=1}^{m}}
\def\T{\text{T}}
\def\ols{\mathrm{ols}}
\def\heur{\mathrm{heur}}
\def\plugin{\mathrm{plug-in}}
\def\var{\mathrm{var}}
\def\supp{\mathrm{Supp}}
\def\tr{\mathrm{trace}}
\def\all{\mathrm{all}}
\def\id{\mathrm{ID}}
\def\ad{\mathrm{AD}}
\def\inv{{\mathrm{inv}}}
\def\aug{{\mathrm{aug}}}
\def\uri{{\mathrm{uri}}}
\def\aug{\text{aug}}
\def\gf{\text{g}}
\def\lmt{\texttt{lm}} 
\newcommand{\pb}{{*+}}
\newcommand{\pu}{{*-}}
\def\sm{the Supplementary Materials}
\def\dpp{\mathbb{P}}
\def\dt{\mathbb{T}}
\def\dq{\mathbb{Q}}
\def\pp{\mathcal{P}}
\def\pt{\mathcal{T}}
\def\pq{\mathcal{Q}}
\def\sumij{\sum_{i=1}^m\sum_{j=1}^{n_i}}
\def\sumijz{\sum_{i=1}^m\sum_{j: Z_{ij}=z}}
\def\sumq{\sum_{ij \in \pq}}
\def\sump{\sum_{ij \in \pp}}
\def\sumpt{\sum_{ij \in \pt}}
\def\sumz{\sum_{ij \in \pp: Z_{ij} =z}}
\def\sumc{\sum_{ij: Z_{ij} =0}}
\def\sumt{\sum_{ij: Z_{ij} =1}}
\def\uij{u_{ij}}
\def\indz{1\{Z_{ij}=z\}}
\newcommand{\cp}{\overset{p}{\to}}
\newcommand{\cd}{\overset{d}{\to}}
\newcommand{\R}{{{\mathbb R}}} 
\newcommand{\zeroitem}{\setlength{\itemsep}{0pt}
\setlength{\parsep}{0pt}
\setlength{\parskip}{0pt}}
\newcommand{\indep}{\perp \!\!\! \perp}
\DeclareMathOperator*{\argmin}{arg\,min}
\renewcommand{\hat}{\widehat}
\renewcommand{\bar}{\overline}
\begin{document}
\pagestyle{plain}

\newcommand{\blind}{0}

\newcommand{\tit}{\Large On the Use of Weighting for Personalized and Transparent Evidence Synthesis}

\if0\blind

{\title{\Large \tit
\thanks{
We thank David Bruns-Smith and Maya Mathur for helpful comments. 
This work was partly supported by an award from the Patient Centered Outcomes Research Initiative
(PCORI, ME-2024C2-40180).}\vspace*{.3in}}
\author{\normalsize Wenqi Shi\thanks{Department of Statistics, Harvard University, 1 Oxford Street, Cambridge, MA 02138; email: \url{wenqi_shi@g.harvard.edu}.} \and  \normalsize Jos\'{e} R. Zubizarreta\thanks{Departments of Health Care Policy, Biostatistics, and Statistics, Harvard University, 180 Longwood Avenue, Office 215-A, Boston, MA 02115; email: \url{zubizarreta@hcp.med.harvard.edu}.}}
\date{}

\maketitle
\date{}
}\fi

\if1\blind
\title{ \tit}
\date{}
\maketitle
\fi

\vspace{-.5cm}
\begin{abstract}

Over the past few decades, statistical methods for causal inference have made impressive strides, enabling progress across a range of scientific fields. However, much of this methodological development has been confined to individual studies, limiting its ability to draw more generalizable conclusions. Achieving a thorough understanding of cause and effect typically relies on the integration, reconciliation, and synthesis from diverse study designs and multiple data sources. Furthermore, it is crucial to direct this synthesis effort toward understanding the effect of treatments for specific patient populations. To address these challenges, we present a weighting framework for evidence synthesis that handles both individual- and aggregate-level data, encompassing and extending conventional regression-based meta-analysis methods. We use this approach to tailor meta-analyses, targeting the covariate profiles of patients in a target population in a sample-bounded manner, thereby enhancing their personalization and robustness. We propose a technique to detect studies that meaningfully deviate from the target population, suggesting when it might be prudent to exclude them from the analysis. We establish multiple consistency conditions and demonstrate asymptotic normality for the proposed estimator. We demonstrate the effectiveness of the method through a simulation study and two real-world case studies.

\end{abstract}

\begin{center}
\noindent Keywords: 
{Causal Inference; Meta-analysis; Personalized Medicine; Regression Methods; Weighting Methods}
\end{center}
\clearpage
\doublespacing

\singlespacing
\pagebreak
\tableofcontents
\pagebreak
\doublespacing

\section{Introduction}
\label{sec:intro}

In recent years, the field of causal inference has seen remarkable progress, yet much of this innovation has been confined to isolated studies, each conducted under varying conditions. 
Differences in study design, enrolled populations, and outcome measurements can obscure treatment effects, yielding conflicting results. 
Consequently, study integration and meta-analytic techniques have emerged as key tools to synthesize evidence from multiple data sources, often yielding more accurate estimates than those of individual studies \citep{colnet2024causal,rosenbaum2021replication,schmid2020handbook}.
The increasing sophistication and adoption of these techniques are evident in the growing literature, underscoring the central role that meta-analysis plays in assembling credible and generalizable evidence (Figure \ref{fig:paper_counts}).

\begin{figure}[!htbp]
     \centering
     \includegraphics[width=0.875\linewidth]{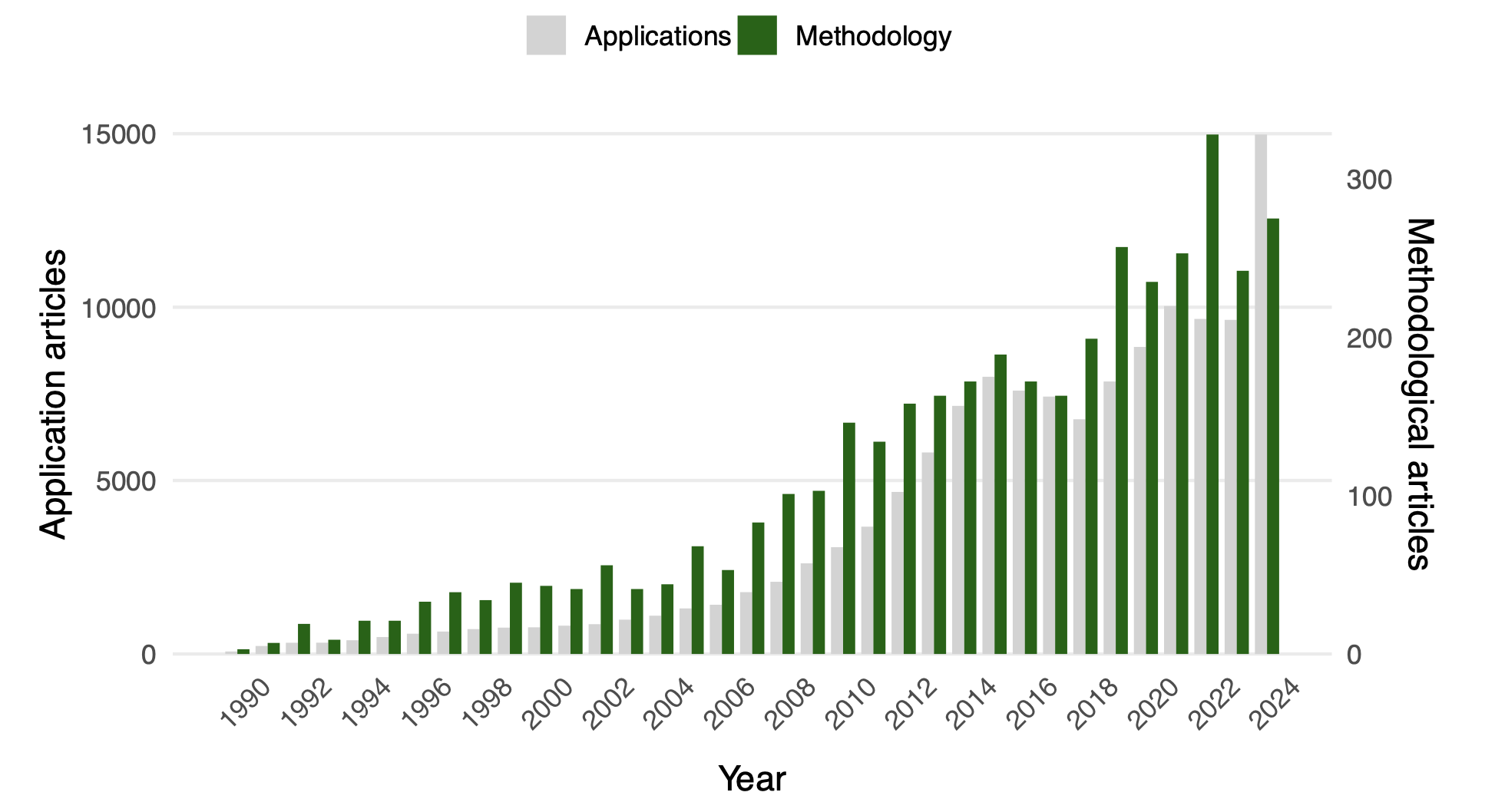}
     \caption{Number of articles with the word ``meta-analysis'' in the title, abstract or keywords from 1990 to 2024 in applications and methodology journals.}
     \label{fig:paper_counts}
\end{figure}

In meta-analysis, there are two primary frameworks, defined by the level of granularity of the available data: individual-level and aggregate-level data \citep{simmonds2005meta}. 
Analyses using individual-level data are conducted with multiple, separately collected datasets and generally offer greater modeling flexibility and improved statistical efficiency.
However, such analyses remain relatively uncommon due to practical and organizational barriers in data sharing and acquisition.
In contrast, meta-analyses based on aggregate-level data are much more common, but they are inherently constrained by the lack of granular information, which limits the flexibility of statistical adjustments and the scope of effect estimates.
In what follows, we use the term ID (individual-level data) in place of the more common IPD (individual participant data) terminology to emphasize that our framework is not restricted to patient settings, and to maintain symmetry with the term AD (aggregate-level data).

ID meta-analyses are commonly conducted using either one-stage or two-stage regression methods.
The one-stage approach simultaneously analyzes all raw individual-level data using mixed-effects multilevel regression models, while the two-stage approach first derives summary estimates for each study and then combines these in a standard meta-analysis. 
AD meta-analyses often rely on study-level summary statistics and are commonly implemented through meta-regression, where aggregated covariates are incorporated into the analysis to explore heterogeneity or potential effect modifiers \citep{tipton2019history}.

Despite their widespread use, traditional meta-analytical approaches face several important limitations. 
First, they often fail to specify a well-defined target population and thus lack a clear causal interpretation beyond the study samples \citep{berenfeld2025causal, dahabreh2020toward, vo2019novel}.
As a result, study populations may not represent the broader target population where treatment decisions are ultimately made, resulting in estimates that do not necessarily apply to the intended group of individuals \citep{degtiar2023review}.
To address this issue, \citet{dahabreh2020toward} proposed a framework for causally interpretable meta-analysis to estimate a treatment effect in a specific target population by modeling both treatment assignment and study selection.
Building on this work, a growing body of methods has been developed for causally interpretable meta-analysis \citep{clark2023causally, rott2024causally, steingrimsson2024systematically}.
While some recent approaches can combine individual- and aggregate-level data \citep{rott2025causally, vo2025integration}, to our knowledge, no existing causally interpretable methods can accommodate the setting where only aggregate-level data are available.
Furthermore, most approaches require individual-level data from the target population, which are often unavailable \citep{dahabreh2023efficient, hong2025estimating, rott2025causally}. 
Finally, it is crucial not only to define a population clearly, but also to deliberately select it so that medical recommendations can be tailored to the specific characteristics of patients, enabling personalized healthcare.

Second, effectively and transparently targeting effect estimates to specific patient populations requires a clear understanding of how individual studies or patient measurements contribute to the overall estimate. 
This requires prioritizing data sources that closely align with the target population and using estimators that give more weight to relevant information sources and avoid extrapolation beyond the support of the actual measurements. 
Traditional methods may inadvertently give undue weight to less similar studies, affecting the accuracy and usefulness of the results. 
Moreover, known issues such as the sign-reversal phenomenon in regression (\citealt{small2017instrumental}; which can lead to counterintuitive negative weights, where increasing an individual's outcome under a treatment can decrease the average outcome for that treatment group) underscore the need for transparent diagnostics. 
Ideally, a meta-analytic framework should incorporate diagnostics that quantify the contribution and influence of each study on the final estimate, detect issues such as extrapolation and sign-reversal, and identify which ``donor'' studies have the greatest weight on the analysis.

Third, existing methods often assume complete overlap between treatment and control groups \citep{brantner2023methods, dahabreh2023efficient}. 
This assumption places strict global constraints on the differences between covariate distributions in treated and control populations, which may be stronger than investigators realize, especially in settings with many covariates \citep{d2021overlap}. 
To address this issue, several methods such as propensity score trimming \citep{crump2009dealing}, classification trees \citep{traskin2011defining}, and overlap weighting \citep{li2018balancing} have been developed to help define a target population with sufficient overlap. 
However, traditional meta-analytic approaches have not fully addressed the challenges posed by populations that do not overlap or proposed estimators and inference procedures that are robust to violations of the overlap assumption. 
In addition, it is desirable to select contributing, or donor, studies in a principled way, for example by enforcing sparsity requirements to prioritize a small number of high quality studies.

To address these challenges, we introduce a weighting framework for evidence synthesis that accommodates both individual- and aggregate-level data, and that both encompasses and extends conventional regression-based meta-analysis methods. 
Our approach enables targeting meta-analyses to the covariate profiles of specific patient populations while ensuring that effect estimates are bounded by the observed data, thereby improving both personalization and robustness. 
To enhance transparency, we also propose a diagnostic technique for detecting measurements that substantially deviate from the defined target profile, providing guidance on when their exclusion may be warranted. 
A key theoretical result of our work, which is of independent interest in regression analysis, demonstrates that, in large samples, our method tends to assign positive weights to observations (or studies) that fall within the support of the target population, while systematically excluding those outside it by assigning them zero weight (Theorem \ref{thm::throw}).
In meta-analysis, this diagnostic provides a practical means to detect and trim deviant studies directly from the original data, for example, without the need to estimate propensity scores.
Finally, we present a general weighting estimator for meta-analysis, establish its consistency under multiple conditions, and demonstrate its asymptotic normality.

Our work is organized as follows.
Section 2 outlines the problem setup, defines a generic target population, and discusses identification with both individual- and aggregate-level data.
Section 3 introduces the personalized and sample bounded meta-analysis weighting method, and shows that conventional one-stage and two-stage ID meta analyses are special cases within this framework.
Section 4 derives formal properties of the method, including the selection mechanism induced by sample boundedness, multiple consistency conditions, and asymptotic normality.
Section 5 details diagnostic procedures and provides theoretical guarantees that the method can identify relevant study donors under correctly specified models.
Section 6 evaluates the method's empirical performance through simulation studies, emphasizing robustness when the target and study populations do not fully overlap.
Finally, Section 7 presents two case studies, one using ID and the other using AD, to illustrate the application of the proposed method.

\section{Target estimand and identification assumptions}
\label{sec:setup}
\vspace{-.25cm}

Consider a population of $n$ individuals nested within $m$ studies, where study $i$ contains $n_i$ individuals so that $\sum_{i=1}^m n_i = n$.
Denote individual $j$ in study $i$ as $ij$. 
For each individual $ij$, we observe the study indicator $G_{ij} = i$, treatment assignment $Z_{ij} \in \{0,1\}$, baseline covariates $X_{ij} = (X_{ij,1}, \ldots, X_{ij,p})^\top \in \mathbb{R}^p$, and a scalar outcome $Y_{ij} \in \mathbb{R}$. 
Let $Y_{ij}(0)$ and $Y_{ij}(1)$ be the potential outcomes under control and treatment, respectively. 
In the observed data, only one of these outcomes is realized for each individual: $Y_{ij} = Z_{ij} Y_{ij}(1) + (1-Z_{ij}) Y_{ij}(0)$ \citep{imbens2015causal}.

In settings where only AD are available, individual-level covariates, treatments, and outcomes are not directly observed. 
Instead, for each study $i$, we observe the estimated causal effect $\hat{\tau}_i$, its variance estimate $\hat{\sigma}_i^2$, and the average covariate vector $\bar{X}_i$.

Suppose the $n$ individuals constitute a simple random sample from the study population $\mathcal{P}$. 
Let $\mathcal{T}$ be the target population of interest. 
Denote $\mathbb{P}$ and $\mathbb{T}$ as the probability measures that characterize $\mathcal{P}$ and $\mathcal{T}$, with respective densities $p(\cdot)$ and $t(\cdot)$.
Define the probability measures for the treated and control subpopulations as $\mathbb{P}_1 = \mathbb{P}_{\cdot \mid Z=1}$ and $\mathbb{P}_0 = \mathbb{P}_{\cdot \mid Z=0}$, with corresponding densities $p_1(\cdot)$ and $p_0(\cdot)$. 
Similarly, put $\mathbb{P}_{(i)} = \mathbb{P}_{\cdot \mid G=i}$ for the measure of the subpopulation of study $i$, with density $p_{(i)}(\cdot)$, for $i = 1, \ldots, m$.
Our estimand of interest is the average treatment effect for the target population $\mathcal{T}$:
\[
\tau := \mathbb{E}_{\mathbb{T}} \{ Y(1) - Y(0) \}.
\]
Further, write $\mathcal{Q} = \mathcal{T} \cup \mathcal{P}$ for the combined target and study population, with corresponding probability measure $\mathbb{Q}$.

Without loss of generality, we model the sampling mechanism as a Bernoulli selection process. Let $S$ denote the indicator for inclusion in the study population ($S = 1$) versus the target population ($S = 0$). We assume that, conditional on covariates, the probability of being included in the study is
\[
\Pr_{\mathbb{Q}}(S=1 \mid X = x) = \frac{p(x)}{\alpha\, t(x) + p(x)},
\]
where $\alpha > 0$ reflects the relative size of the target population to the study population, specifically $\alpha = \Pr_{\mathbb{Q}}(S=0) / \Pr_{\mathbb{Q}}(S=1)$. For example, $\alpha$ will be large when the study population is a small subset of a much larger target population.

\subsection{Identification with ID}
\vspace{-.25cm}
\label{sec:identification_with_id}

We can identify the target average treatment effect, $\tau$, from individual-level data (ID) in both randomized experiments and observational studies, provided the following conditions are satisfied.
\begin{assumption}
\label{cond::1}
Assume the following conditions hold:
\begin{itemize}
    \item[(a)] Overlapping support: $\supp(\dt_X) \subseteq \supp(\dpp_{1,X}) \cap \supp(\dpp_{0, X})$.
    \item[(b)] Conditional exchangeability across treatment groups: 
    $\dpp_{1,\{Y(1), Y(0)\} \mid X} = \dpp_{0,\{Y(1), Y(0)\} \mid X}$.
    \item[(c)] Conditional exchangeability across the study and target populations:
    $\dt_{\{Y(1), Y(0)\} \mid X} = \dpp_{\{Y(1), Y(0)\} \mid X}$.
\end{itemize}
\end{assumption}

Assumption~\ref{cond::1}(a) ensures that the support of the covariate distribution in the target population is contained within the support of both treatment groups in the study population, allowing for partial (rather than complete) overlap between the study and target populations. 
Let $V := \supp(\dt_X)$ denote the support of the target population, and define the propensity score as $e(x) := \Pr_{\dpp}(Z = 1 \mid X = x)$. 
Under Assumption~\ref{cond::1}(a), both $p(x)$ and $t(x)$ are strictly positive and $e(x) \in (0,1)$ for all $x \in V$. 
Furthermore, Assumption~\ref{cond::1}(a) can be viewed as the conjunction of two positivity conditions. 
First, it implies positivity of trial participation. 
Second, it requires a weaker form of treatment positivity, stipulating that the probability of receiving either treatment is strictly between 0 and 1, but only for covariate values in the target support $V$. 
This avoids imposing unnecessary restrictions on covariate regions outside those relevant to the target population.

Assumptions~\ref{cond::1}(b) and~\ref{cond::1}(c) jointly imply that, conditional on covariates, the distributions of the potential outcomes are exchangeable across treatment groups and between the study and target populations. 
Taken together, the conditions in Assumption~\ref{cond::1} ensure that the target average treatment effect is identifiable and can be expressed as
\begin{equation}
\label{eq:est}
\tau = \E_{\dpp_{1}}\left\{ \frac{t(X)}{p(X) e(X)} Y \right\}
      - \E_{\dpp_{0}}\left\{ \frac{t(X)}{p(X)\{1-e(X)\}} Y \right\}.
\end{equation}
The weighting factors in~\eqref{eq:est} correct for potential selection biases arising from differences both between the study and target populations, and between treatment groups. 
They involve the density ratio $t(x)/p(x)$ as well as the propensity score $e(x)$. 
Intuitively, individuals who are underrepresented in the study population or in a particular treatment group relative to the target population will receive a higher weight.

Under Assumption~\ref{cond::1}, the weighting factor is non-negative and satisfies the  balancing condition in~\eqref{eq:balance}, which guarantees that the distribution of any function of the pretreatment covariates is balanced towards its expectation in the target population. 
Specifically, setting $f(x) = 1$ in~\eqref{eq:balance} demonstrates that the expectation of the weighting factor is one. 
{\small
\begin{eqnarray}
\label{eq:balance}
\E_{\dpp_{1}} \left\{ \frac{t(X)}{p(X)e(X)} f(X) \right\} = \E_{\dpp_{0 }} \left\{ \frac{t(X)}{p(X)\{1-e(X)\}} f(X) \right\} 
= \E_{\dt} \left\{f(X) \right\} \text{ for any  function $f$}.
\end{eqnarray}}

These desirable properties of non-negativity and balancing motivate the weighting method proposed in Section~\ref{sec:mthd}, where the weights are estimated by directly solving for the sample analog of these properties in a single step.
Although in theory the true weighting factors can balance any function of the covariates in large samples, in practice these factors are unknown and data is often limited.
To address this, our method prioritizes balance over a set of basis functions that grows with the sample size.
This increasing flexibility allows for greater expressiveness and more accurate approximation as more data become available.

The weighting factor in~\eqref{eq:est} corresponds to a special case of the Riesz representer. 
A common approach for estimating it is to separately model the propensity score and the density ratio, and then multiply the two estimated components. 
However, this strategy can produce weights that are highly variable and fail to achieve adequate covariate balance \citep{chattopadhyay2024one}.  
In contrast, estimating the Riesz representer directly through balancing does not require a known analytic form \citep{chernozhukov2022debiased}, improves finite-sample covariate balance \citep{imai2014covariate}, and often produces more stable estimates \citep{zubizarreta2015stable}.

\subsection{Identification with AD}
\vspace{-.25cm}

When only aggregate-level data (AD) are available, identification of the target causal effect becomes more challenging. 
Here, we formalize the conditions required for identification in this setting. 
A common approach in this context is to synthesize evidence by forming a linear combination of study-level estimates. 
However, the absence of individual-level covariates forces all individuals within a study to share the same weight, limiting the granularity of the adjustments. 
This constraint hinders our ability to construct weights that accurately align the source studies with the target population.

As a result, with this approach it is difficult to identify the target estimand, especially when there are no restrictions on the outcome model. 
To understand this challenge, we consider a general outcome model:
\begin{eqnarray}
\label{eq:general_y}
Y_{ij}(z) = g_z(X_{ij}, \varepsilon_{ij,z}; \beta_{i,z}), \quad z \in \{0,1\},
\end{eqnarray}
where $g_z(\cdot)$ is an unrestricted real-valued function, and $\varepsilon_{ij,z}$ and $\beta_{i,z}$ are unobserved random vectors representing individual-level variability and study-specific heterogeneity, respectively. 
Let $\mathcal{Y}_\all$ denote the class of all such outcome models. 
A classical example within this class is the linear fixed effects model, where $Y_{ij}(z) = \beta_{i,z}^\T X_{ij} + \epsilon_{ij,z}$.

Identification with AD requires several additional assumptions.
Assumption~\ref{cond:ad} (a) states that potential outcomes are conditionally exchangeable across studies given the covariates. 
Part (b) requires the study-level estimators to be unbiased. 
Part (c) further stipulates the existence of a set of weights that align the study-level covariate distributions with that of the target population. 
This last condition is particularly stringent because, in the AD setting, each study $i$ is assigned a single weight $\tilde w_{(i)}$ that applies uniformly to all individuals within that study and cannot adapt to individual-level covariates. 
As a result, a feasible solution for the weights $\tilde w_{(i)}$ may not exist, especially when the number of studies $m$ is small.

\begin{assumption} Assume the following conditions hold:
\label{cond:ad}
\begin{itemize}
    \item[(a)] Conditional exchangeability across studies: $\{ Y(1), Y(0) \} \indep G \mid X$.
    \item[(b)] Unbiased study estimators: $\E\{\hat\tau_i\} = \E_\dpp\{Y(1) - Y(0) \mid G=i\}$ for $i \in \{1, \ldots,m\}$.
    \item[(c)] Existence of covariate balancing weights: $\sumi \tilde w_{(i)} p_{(i)}(x) = t(x)$ for all $x \in \supp(\dpp_X)$.
\end{itemize}
\end{assumption}

Proposition~\ref{prop:ad_identification} shows that, despite its strictness, Assumption~\ref{cond:ad}(c) is not only sufficient but also necessary for identification when the outcome model is unrestricted, provided that parts (a) and (b) of the assumption also hold.

\begin{proposition}
\label{prop:ad_identification}
Suppose the conditions in Assumptions \ref{cond::1} and (a)--(b) hold.
Then Assumption~\ref{cond:ad}(c) is both necessary and sufficient for identification without any restriction on the outcome model; that is, $\E\left\{\sum_{i=1}^m \tilde{w}_{(i)} \hat\tau_i\right\} = \tau$ holds for all $\{Y(1), Y(0)\} \in \mathcal{Y}_\all$.
\end{proposition}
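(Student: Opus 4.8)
The plan is to collapse both sides of the target identity onto a single conditional average treatment effect (CATE) function and then read off sufficiency and necessity as the two directions of a variational argument. Write $\tau(x) := \E\{Y(1)-Y(0) \mid X = x\}$. First I would invoke Assumption~\ref{cond:ad}(a) to obtain $\E_\dpp\{Y(1)-Y(0)\mid G=i\} = \int \tau(x)\,p_{(i)}(x)\,dx$, so that the unbiasedness in Assumption~\ref{cond:ad}(b) gives $\E\{\hat\tau_i\} = \int \tau(x)\,p_{(i)}(x)\,dx$. In parallel, Assumption~\ref{cond::1}(c) equates the study and target conditional outcome laws, hence $\tau = \E_\dt\{Y(1)-Y(0)\} = \int \tau(x)\,t(x)\,dx$. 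Linearity then yields the key identity
\begin{equation*}
\E\Big\{\sumi \tilde w_{(i)}\hat\tau_i\Big\} - \tau = \int \tau(x)\Big(\sumi \tilde w_{(i)}\,p_{(i)}(x) - t(x)\Big)\,dx,
\end{equation*}
valid for every model in $\Y_\all$, with $\tau(\cdot)$ the only object that changes with the model.

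For sufficiency I would note that Assumption~\ref{cond:ad}(c) annihilates the parenthesized factor on $\supp(\dpp_X)$; since each $p_{(i)}$ and, by Assumption~\ref{cond::1}(a), also $t$ are supported within $\supp(\dpp_X)$, the integrand vanishes identically and the right-hand side is zero for every $\tau(\cdot)$. Thus identification holds for all $\{Y(1),Y(0)\}\in\Y_\all$.

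The real work is necessity, which I expect to be the main obstacle. Here I would exploit that $g_z$ is unrestricted to realize arbitrary CATE profiles: for any bounded measurable test function $h$ supported on $\supp(\dpp_X)$, set $Y(1)=h(X)$ and $Y(0)=0$, degenerate in both $\varepsilon_{ij,z}$ and $\beta_{i,z}$. This model lies in $\Y_\all$, has CATE exactly $h$, and---being a deterministic function of $X$---trivially satisfies the conditional-exchangeability conditions in Assumptions~\ref{cond::1}(b),~\ref{cond::1}(c) and~\ref{cond:ad}(a). Substituting into the key identity, the assumed identification forces $\int h(x)\big(\sumi \tilde w_{(i)}\,p_{(i)}(x)-t(x)\big)\,dx=0$ for every admissible $h$. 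Taking $h$ equal to the sign of the (integrable) integrand over $\supp(\dpp_X)$ then forces $\sumi \tilde w_{(i)}\,p_{(i)}(x)=t(x)$ for almost every $x\in\supp(\dpp_X)$, which is Assumption~\ref{cond:ad}(c).

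Two points would require care. First, I must confirm that the constructed outcome models genuinely belong to $\Y_\all$ and stay compatible with the maintained estimator-unbiasedness in Assumption~\ref{cond:ad}(b), so that the key identity legitimately applies to each of them. Second, the passage from ``the integral against every $h$ vanishes'' to ``the integrand vanishes almost everywhere'' needs the family of admissible test functions to be rich enough and a clear choice of dominating measure; the resulting almost-everywhere equality is harmless relative to the pointwise statement of Assumption~\ref{cond:ad}(c), since the weights enter only through integrals against $\tau(\cdot)$.
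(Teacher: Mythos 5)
Your proposal is correct and follows essentially the same route as the paper's proof: both directions hinge on writing $\E\{\sum_{i=1}^m \tilde w_{(i)}\hat\tau_i\}$ and $\tau$ as integrals of the CATE against $\sum_{i=1}^m \tilde w_{(i)} p_{(i)}(x)$ and $t(x)$ respectively, with sufficiency being immediate and necessity obtained by plugging in degenerate outcome models with $Y(0)=0$ and $Y(1)$ a function of $X$. The only cosmetic difference is in the necessity step, where the paper takes $Y(1)=1\{X\in A\}$ for arbitrary Borel sets $A$ (so the two measures agree on all Borel sets), while you take $Y(1)=h(X)$ for bounded measurable $h$ and conclude via the sign of the integrand; these are interchangeable instances of the same variational argument.
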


The study-level weights that satisfy the identification condition in Proposition~\ref{prop:ad_identification} must also satisfy the following balancing condition in \eqref{eq:ad_weights_bal}. 
In particular, by setting $f(x) = 1$, we have $\sum_{i=1}^m \tilde w_{(i)} = 1$. 
These considerations motivate the design of our proposed method for the AD setting:
\begin{eqnarray}
\label{eq:ad_weights_bal}
\sum_{i=1}^m \tilde w_{(i)} \E_{\dpp}\{f(X) \mid G=i\} = \E_{\dt}\{f(X)\}, \;
\text{for any function $f$.}
\end{eqnarray}

\section{Methods}
\label{sec:mthd}
\vspace{-.25cm}

In this section, we present a method for conducting personalized and sample-bounded meta-analysis. 
We first introduce the method in settings where ID are available.
We then extend it to a two-stage procedure that accommodates the more general setting in which some, or even all, studies provide only aggregate-level data. 
Finally, we show that classical one- and two-stage ID meta-analysis methods emerge as special cases of our framework.

Our method can be formulated within the following convex optimization framework:
\begin{eqnarray}
\label{eq:generic_opt}
\hat{w} = \arg\min_{w} \left\{ \mathcal{D}(w) : \, w \in \mathcal{B} \cap \mathcal{A} \right\},
\end{eqnarray}
where $\mathcal{D}(w)$ measures the dispersion of the weights, $\mathcal{B}$ encodes covariate balance constraints, and $\mathcal{A}$ imposes additional adjustment constraints such as normalization and sample-boundedness \citep{wang2020minimal}.
Crucially, this formulation allows the weighting factor in Equation \ref{eq:est} to be estimated in a single step, avoiding the multiplicative structure of common multi-stage weighting procedures and yielding more stable estimates \citep{chattopadhyay2024one}.
Specific estimators arise from different specifications of $\mathcal{D}$, $\mathcal{B}$, and $\mathcal{A}$. 
We describe these choices in detail in the following sections. 
A complete summary of the notation is provided in Table~\ref{tab_notation} in the Supplementary Materials.

\subsection{One-stage method for ID}
\vspace{-.25cm}

We begin by considering the setting in which all trials provide ID on covariates, treatments, and outcomes. 
Here, the personalized and sample-bounded estimator is defined as
\begin{eqnarray}
\label{eq:est_theory}
\hat\tau^\pb_\id = \sum_{i=1}^m \sum_{j: Z_{ij}=1} \hat w_{1,ij}^\pb Y_{ij} - \sum_{i=1}^m \sum_{j: Z_{ij}=0} \hat w_{0,ij}^\pb Y_{ij},
\end{eqnarray}
where the weights $\hat w^\pb_{z,ij}$ for $z \in \{0,1\}$ are obtained by solving the following optimization problem:
\begin{eqnarray}
\label{eq:PBM0}
\hat{w}_{z,ij}^\pb = \arg\min_{w_z} \left\{ \mathcal{D}(w_z) : w_z \in \mathcal{B}_\id^* \cap \A^+ \right\}.
\end{eqnarray}

In the proposed formulation, we explicitly minimize a dispersion measure of the weights, defined as $\mathcal{D}(w_z) := \sum_{i,j : Z_{ij}=z} \psi(w_{z,ij})$, where $\psi(\cdot)$ is a convex function. 
For example, one common choice is the squared $l_2$ norm, i.e., $\mathcal{D}^{l_2}(w_z) := \sum_{i,j : Z_{ij}=z} w_{z,ij}^2$.  
Let $B_k (\cdot)$ for $k = 1, \ldots, K$ denote a set of $K$ basis functions of the covariates, say polynomial terms. 
We allow $K$ to increase with the sample size $n$, enabling greater expressiveness approximation accuracy as more data become available.
Denote $\delta_k \geq 0$ as a prespecified imbalance tolerance for each $B_k$.
The balancing constraint $\mathcal{B}_\id^*$ is defined as
\begin{eqnarray*}
\mathcal{B}_\id^*: = \left\{
\begin{array}{l}
\left| \sumz w_{z,ij} B_k(x_{ij}) - B_k^* \right| \leq \delta_k \quad \text{for } k \in A, \\[0.5em]
\left| \sum_{j:Z_{ij}=z} w_{z,ij} \{ B_k(x_{ij}) - B_k^* \}\right| \leq \delta_k \quad \text{for } k \in W, \, i=1,\ldots,m,
\end{array}
\right.
\end{eqnarray*}
where $B_k^*$ is the target value for the $k$-th basis function. 

This balancing constraint operates at two distinct levels. 
First, across-study balancing ensures that the basis functions in set $A$ are aligned with the target covariate profile across the entire dataset. 
Second, within-study balancing enforces alignment of the basis functions in set $W$ within each study, addressing potential between-study heterogeneity. 
Together, these two levels of balancing provide flexibility, allowing the method to effectively address both global and local covariate imbalances. 
The sets $A$ and $W$ partition the full set of basis functions $\{1, \ldots, K\}$.
Personalization is made possible by specifying a target covariate profile as a vector $B^* = (B_1^*, \ldots, B_K^*)^\T \in \mathbb{R}^K$. 
The adjustment constraint is defined as $\A^+ := \{\sum_{i,j:Z_{ij}=z} w_{z,ij} = 1; \, w_{z,ij} \geq 0\}$, which enforces both normalization and sample-boundedness. 
Specifically, normalization guarantees that the weights sum to one, while non-negativity prevents extrapolation, ensuring the estimator remains sample-bounded.
In sections~\ref{sec:theory} and~\ref{sec:diagnostics}, we show that this sample-boundedness constraint translates into a selection mechanism, which helps to detect the data points (e.g., study donors) that are most relevant to the target population.

In the notation $\hat\tau^\pb_\id$, the superscript $*$ indicates that the estimator incorporates the target covariate profile, while $+$ indicates the inclusion of the sample-boundedness (non-negativity) constraint. 
The subscript $\id$ denotes that the estimator is defined in the individual-level data setting. 
Similarly, $\hat\tau^\pu_\id$ denotes the personalized but unbounded estimator, where unbounded refers to the absence of non-negativity constraints. 
In this case, the adjustment constraint is given by $\A^- := \{\sumijz w_{z,ij} = 1\}$, which enforces normalization alone, without restricting the sign of the weights.

The proposed optimization framework~\eqref{eq:PBM0} provides a flexible way to tailor inferences for a specific population $\pt$ by incorporating the target covariate profile vector $B^*$. 
This profile may correspond to a particular population of interest, thereby facilitating the generalization and transportation of causal inferences. 
Crucially, balancing toward this profile does not require access to individual-level data in the target population, which may be unavailable for confidentiality reasons. 
By using summary-level information, such as the sample means of covariates, the proposed framework allows for causally-interpretable inference while maintaining data privacy, making it both flexible and practical for real-world applications where individual-level data may be limited.

\subsection{Two-stage method for AD and mixed data}

\vspace{-.25cm}
Our method also extends to the mixed data setting, where some or even all studies provide only aggregate-level data (AD). 
Algorithm~\ref{alg:two_stage} outlines a two-stage procedure for this setting. 

In the first stage, we perform within-study estimation for each study with available ID using the proposed personalized, sample-bounded approach. 
Specifically, for each study, we solve the same optimization problem~\eqref{eq:PBM0} as in the one-stage method, but restrict the analysis to data from that study alone. 
We then compute the study-specific causal effect estimator $\hat\tau_i$ and its corresponding variance estimator $\hat\sigma^2_i$ using the resulting weights. 
For variance estimation, we consider two approaches: a heuristic and a plug-in method, with further details provided in Section~\ref{sec:inference}. 
The resulting estimates, $\hat\tau_i$ and $\hat\sigma^2_i$, serve as analogues to the aggregate-level statistics available from AD studies. 
A key feature of this stage is its flexibility, as each within-study analysis can use a different set of covariates based on data availability.

In the second stage, we combine all study-level estimates, whether obtained from the first-stage ID analysis or supplied directly as AD from source studies. 
We obtain study-level weights $\hat w^\pb_{(i)}$ by solving optimization problem~\eqref{eq:AD}, which incorporates a target covariate profile and imposes non-negativity constraints for personalization and sample-boundedness.
The objective function minimizes a weighted measure of dispersion among the weights, defined as $\mathcal{D}(w;c):=\sum_i c_i \psi(w_{(i)})$, where each term is scaled by a study-specific factor $c_i>0$. 
Common choices for $c_i$ include the estimated study-level variance $\hat\sigma^2_i$ or the inverse sample size $1/n_i$, to account for study-level precision.
Balancing across studies is enforced through the constraint $\mathcal{B}_\ad^*:=\left\{\left|\sum_{i} w_{(i)}  \bar B_i - B^*\right| \leq \delta \right\}$, which ensures marginal balance only at the aggregate (across-study) level. 
The final variance estimator, $\hat V_\ad^\heur$, is calculated using a heuristic approach, as described in Section~\ref{sec:inference} in \sm.
Both stages maintain the features of personalization and sample-boundedness. 
When only AD are available, we perform the second-stage meta-analysis directly.
\begin{eqnarray}
\label{eq:AD}
\hat{w}^\pb_{(i)} = \arg\min_{w} \left\{ \mathcal{D}(w;c) : \, w \in \mathcal{B}^*_\ad \cap \A^+ \right\}.
\end{eqnarray}

\begin{algorithm}[ht]
\caption{Two-stage personalized and sample-bounded method \label{alg:two_stage}}
\textbf{Stage I: Within-study estimation} 

\KwIn{Individual-level data, target profile $B^*$}
\KwOut{Study-level estimates $\hat\tau_{i}$, $\hat \sigma^2_i$, average basis function values $\bar B_{i}$}
\For{study $i\in\{1, \ldots, m\}$ with available individual-level data}{
    Solve Program~\eqref{eq:PBM0} to obtain weights $\hat w^\pb_{Z_{ij},ij}$

    Compute the study-level effect estimator $\hat\tau_i := \hat\tau^\pb_{\id,i} = \sum_{j=1}^{n_i} Z_{ij} \hat w^\pb_{1,ij} Y_{ij} - \sum_{j=1}^{n_i} (1-Z_{ij}) \hat w^\pb_{0,ij} Y_{ij}$ 

    Compute the study-level variance estimator $\hat\sigma^2_i:= \hat V^\heur_{\id,i}$ in \eqref{eq:v_heur} using $\hat w^\pb_{Z_{ij},ij}$

    Compute the study-level average of basis functions $\bar B_i = 1/n_i \sum_{j=1}^{n_i} B(X_{ij})$
}

\textbf{Stage II: Across-study meta-analysis} 

\KwIn{Study-level estimates $\hat\tau_{i}$, $\hat \sigma^2_i$, scaling factors $c_i$, average basis function values $\bar B_{i}$, target profile $B^*$}
\KwOut{Final estimates $\hat\tau^\pb_\ad$, $\hat V^\heur_\ad$}

Solve Program~\eqref{eq:AD} to obtain weights $\hat w^\pb_{(i)}$

Compute the final effect estimator $\hat\tau^\pb_\ad = \sumi \hat w^\pb_{(i)} \hat\tau_i$

Compute the final variance estimator $\hat V^\heur_\ad$ from~\eqref{eq:v_heur_ad}
\end{algorithm}

\subsection{Connection to classical meta-analysis methods}
\vspace{-.25cm}

In this section, we show that traditional one- and two-stage meta-analyses using ID can be viewed as special cases within our framework.
This connection provides new insights into these classical methods by revealing their implicit weighting schemes.

When ID from all studies are available, a common one-stage approach is to fit a mixed-effects model to the pooled data:
\begin{eqnarray}
\label{eq:one-stage}
    Y_{ij} = \phi_i + \tau_i Z_{ij} + \beta_{i}^\T X_{ij} +  \gamma_{i}^\T X_{ij} Z_{ij} + \epsilon_{ij},
\end{eqnarray}

where $\beta_i= (\beta_{i,1}, \ldots, \beta_{i,p})^\T \in \R^p$ and $\gamma_i = (\gamma_{i,1}, \ldots, \gamma_{i,p})^\T \in \R^p$. 
Here, $\phi_i$ is a study-specific intercept, $\beta_{i}$ controls for the covariates, and $\gamma_{i}$ captures the treatment–covariate interactions. 
The primary parameter of interest is the treatment effect $\tau_i$.
 
These parameters allow for various modeling choices. 
Each parameter can be treated as a fixed effect (i.e., with distinct fixed values for each study), a common effect (e.g., $\tau_i = \tau$ for all $i$), or as a random effect (e.g., $\tau_i \sim \mathcal{N}(\tau, \sigma^2_{\tau})$). 
In particular, if $\tau_i$ is modeled as a fixed effect, the model yields study-specific treatment estimates rather than a single global treatment effect across studies, since each study is assigned its own coefficient.
In our analysis, we assume that both $\tau_i$ and $\phi_i$ are common effects, i.e., $\tau_i = \tau$ and $\phi_i = \phi$ for all studies. 
For the covariates, we allow for either fixed or common effects. 
Let $F$ denote the set of covariates with fixed effects and $C$ the set with common effects, so that $F \cup C = \{1, \ldots, p\}$. 
Specifically, for $l \in C$, we impose $\beta_{i, l} = \beta_{l}$ and $\gamma_{i, l} = \gamma_{l}$, corresponding to common effects across studies.

Two-stage meta-analysis first uses ID to estimate $\tau_i$ within each study by fitting a linear model analogous to \eqref{eq:one-stage}. 
These study-specific estimates are then combined in a second stage using standard inverse-variance meta-analysis to obtain an overall treatment effect estimate.
We show that both one- and two-stage ID meta-analysis can be recovered as special cases of the proposed framework. 
In particular, the one-stage method corresponds to the optimization problem in~\eqref{eq:framework0}, which is a specific instance of the more general formulation in~\eqref{eq:PBM0}.
Here, the objective function $\mathcal{D}^{l_2}(w_z)$, defined as the squared $l_2$ norm of the weights, corresponds to the general dispersion measure with $\psi(w) = w^2$. 
The balancing constraint in~\eqref{eq:PBM0} defined as
\begin{eqnarray*}
\mathcal{B}^\ols_\id:=
\left\{
\begin{array}{l}
\sum_{i=1}^m \sum_{j:Z_{ij}=z} w_{z,ij} X_{ij,l} = 0 \quad \text{for } l \in C \\[0.5em]
\sum_{j:Z_{ij}=z} w_{z,ij} X_{ij,l} = 0 \quad \text{for } l \in F, \, i=1,\ldots,m
\end{array}
\right.
\end{eqnarray*}
is a special case of the general constraint $\mathcal{B}^*_\id$, obtained by choosing the identity basis functions, setting the target profile $B^* = 0$, imposing exact balance ($\delta=0$), and taking the groupings as ${A} = {C}$ and ${W} = {F}$.
Proposition~\ref{prop:one-stage} shows that this special case recovers the classical one-stage ID meta-analysis estimator
\begin{eqnarray}
\label{eq:framework0}
\hat{w}^\ols_{z,ij} = \arg\min_{w_z} \left\{ \mathcal{D}^{l_2}(w_z) : \, w_z \in \mathcal{B}_\id^\ols  \cap \A^- \right\}. 
\end{eqnarray}

\begin{proposition}
\label{prop:one-stage}
Let $\hat{\tau}^\ols$ denote the estimator obtained by fitting the one-stage ID meta-analysis model~\eqref{eq:one-stage}.  
Then, $\hat{\tau}^\ols$ can be recovered by the optimization problem in~\eqref{eq:framework0}, in the sense that
\begin{eqnarray*}
    \hat{\tau}^\ols = \sum_{i=1}^m \sum_{j: Z_{ij}=1} \hat{w}_{1,ij}^\ols Y_{ij} - \sum_{i=1}^m \sum_{j: Z_{ij}=0} \hat{w}_{0, ij}^\ols Y_{ij},
\end{eqnarray*}
where $\hat{w}_{z,ij}^\ols$ are the optimal weights for treatment group $z\in\{0,1\}$ as defined in~\eqref{eq:framework0}.
\end{proposition}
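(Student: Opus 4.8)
The plan is to exploit the classical duality between ordinary least squares and minimum-variance weighting, after first showing that the pooled regression \eqref{eq:one-stage} decouples into two group-specific fits whose intercepts differ by exactly $\hat\tau^\ols$. First I would reparametrize \eqref{eq:one-stage} in terms of treatment- and control-group coefficients: writing $a_0 = \phi$ and $a_1 = \phi + \tau$ for the two group intercepts and collecting the control slopes $\beta_i$ against the treatment slopes $\beta_i + \gamma_i$ (each inheriting the common/fixed pattern of $C$ and $F$), the map $(\phi,\tau,\{\beta_{i,l}\},\{\gamma_{i,l}\}) \mapsto (a_0, a_1, \text{group slopes})$ is a bijection, because the interaction terms $\gamma$ match the main-effect structure of $\beta$ and $\tau$ is an unrestricted common parameter. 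Since the residual sum of squares then splits additively over the two treatment groups with no shared parameters, the OLS fit separates into an independent regression within each group, and the target contrast satisfies $\hat\tau^\ols = \hat a_1 - \hat a_0$, the difference of the two estimated intercepts.

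Next I would express each group intercept as a linear functional of the outcomes. Let $D_z$ denote the within-group design matrix whose columns are the intercept (a column of ones), the common covariates indexed by $C$, and the study-specific fixed covariates indexed by $F$; assume $D_z$ has full column rank so that $D_z^\top D_z$ is invertible. Then $\hat a_z = e_1^\top (D_z^\top D_z)^{-1} D_z^\top Y_z = w_z^\top Y_z$, with $w_z = D_z (D_z^\top D_z)^{-1} e_1$ and $e_1$ selecting the intercept coordinate. Substituting into $\hat\tau^\ols = \hat a_1 - \hat a_0$ already produces the stated weighting representation, so it remains only to identify $w_z$ with the optimizer of \eqref{eq:framework0}.

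The key step is a projection lemma: for a full-rank design $D_z$, the vector $w_z = D_z(D_z^\top D_z)^{-1} e_1$ is the unique solution of $\min_w \|w\|_2^2$ subject to $D_z^\top w = e_1$. Feasibility is immediate, and for optimality I would decompose any feasible $w = w_z + v$ with $D_z^\top v = 0$; since $w_z$ lies in the column space of $D_z$ while $v$ is orthogonal to it, $\|w\|_2^2 = \|w_z\|_2^2 + \|v\|_2^2$, forcing $v = 0$ at the optimum. Finally I would read the constraint $D_z^\top w = e_1$ column by column: the intercept column gives the normalization $\sum_{ij:Z_{ij}=z} w_{z,ij} = 1$ in $\A^-$, the columns for $l \in C$ give the across-study balance $\sum_{ij:Z_{ij}=z} w_{z,ij} X_{ij,l} = 0$, and the study-specific columns for $l \in F$ give the within-study balance $\sum_{j:Z_{ij}=z} w_{z,ij} X_{ij,l} = 0$. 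These are exactly the constraints defining $\mathcal{B}_\id^\ols \cap \A^-$, so $w_z = \hat w_z^\ols$ and the representation follows.

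I expect the main obstacle to be the decoupling argument in the first step: one must verify carefully that the common treatment effect, together with the matched fixed/common interaction structure, makes the group parameters genuinely unconstrained and non-overlapping, so that OLS separates into two independent within-group regressions and $\hat\tau^\ols$ is precisely the intercept difference. The projection lemma and the column-by-column matching of constraints are then standard, provided one tracks the full-rank (identifiability) condition on each $D_z$ that is needed for both the regression and the quadratic program to admit unique solutions.
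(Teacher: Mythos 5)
Your proposal is correct, but it follows a genuinely different route from the paper's proof. The paper proceeds by invoking an external implied-weights result (Lemma~\ref{lem::url_res}, from Theorem~3 of \citet{chattopadhyay2023implied}): it rewrites model~\eqref{eq:one-stage} as a single pooled regression of $Y_{ij}$ on $Z_{ij}$ and the concatenated regressor $\underline{\tilde X}_{ij} = (\tilde X_{ij}^\T, Z_{ij}\tilde X_{ij}^\T)^\T$, applies the lemma to obtain a quadratic program whose balance target is the convex-combination quantity $\underline{\tilde S}_c (\underline{\tilde S}_t + \underline{\tilde S}_c)^{-1} \underline{\overline{\tilde X}}_t + \underline{\tilde S}_t (\underline{\tilde S}_t + \underline{\tilde S}_c)^{-1} \underline{\overline{\tilde X}}_c$, and then shows by block-matrix inversion that this target is exactly zero, so the constraints collapse to $\mathcal{B}^\ols_\id \cap \A^-$. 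You instead exploit the fact that \eqref{eq:one-stage} is fully interacted: the reparametrization $(a_0, a_1) = (\phi, \phi+\tau)$ with matched slope structure decouples the pooled OLS fit into two independent within-group regressions, so $\hat\tau^\ols = \hat a_1 - \hat a_0$, and each intercept's hat-vector $w_z = D_z(D_z^\T D_z)^{-1}e_1$ is identified, via the standard minimum-norm projection lemma, as the unique solution of $\min \|w\|_2^2$ subject to $D_z^\T w = e_1$ --- whose columns read off precisely as $\A^-$ (normalization), the across-study constraints for $l \in C$, and the within-study constraints for $l \in F$. Your argument is more elementary and self-contained: it avoids both the external lemma and the block inversion, and it makes transparent \emph{why} the balance target is zero (each group intercept is the fitted value at $X = 0$, so the weights must annihilate every covariate column while summing to one). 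What the paper's route buys in exchange is generality and thematic continuity: Lemma~\ref{lem::url_res} applies even to regressions that are not fully interacted, where your decoupling step fails and the balance target is a genuinely nonzero combination of group means, and it ties the proposition directly to the implied-weights/URI machinery used throughout the paper. Both arguments require the same full-column-rank condition on the group design matrices (for your step this gives uniqueness of $\hat a_z$ and of the QP solution; strict convexity of $\|w\|_2^2$ then makes the identification of $w_z$ with $\hat w^\ols_{z,ij}$ well posed), so no substantive gap remains.
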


Proposition~\ref{prop:one-stage} builds on \citet{chattopadhyay2023implied} and establishes that the implied weights for the one-stage ID meta-analysis estimator can be obtained by solving a quadratic programming problem that minimizes the squared $l_2$ norm of the weights and enforces a special form covariate balance, where fixed effects produce within-study balance and common effects result in balance across all studies. 
Because the weights are not required to be non-negative, extrapolation is possible.
Analogously, the standard two-stage ID meta-analysis is a special case of our two-stage approach in Algorithm~\ref{alg:two_stage} by setting $B^* = 0$, $\delta = 0$, $\psi(w) = w^2$ and $c_i = \hat\sigma_i^2 + \sigma_\tau^2$, and omitting non-negativity constraints.

The proposed weighting framework encompasses estimators from both one- and two-stage ID meta-analysis, providing new insights into these classical methods. 
However, the formulation of Program \eqref{eq:framework0} does not permit to balance covariates toward a personalized target profile. 
In practice, the study and target populations often differ due to a range of practical reasons. 
To address some of these differences, it is important to integrate the target covariate profile into the optimization framework.
Furthermore, the weights produced by \eqref{eq:framework0} can be negative, which means the final estimator may extrapolate beyond the convex hull of the study samples and is not necessarily sample-bounded. 
In particular, negative weights can produce sign-reversal phenomena, where a higher outcome, say, for a treated unit can result in a lower outcome average for the treated group, making interpretation difficult in practice.

Finally, we note that by selecting an alternative dispersion measure for the weights, such as the $l_1$ norm instead of the squared $l_2$ norm, one can induce sparsity in the optimal weights. 
Specifically, minimizing the sum of the absolute values of the weights (the $l_1$ norm) tends to produce many weights exactly equal to zero. 
Consequently, only a subset of observations or studies receive nonzero weights and are included in the effect estimate. 
While this approach promotes interpretability by selecting a parsimonious set of contributing data points, it may exclude potentially informative measurements and reduce statistical efficiency.

\section{Theoretical results}
\label{sec:theory}
\vspace{-.25cm}

This section presents the theoretical properties of the proposed method for settings with ID. 
We establish several conditions under which the estimator is consistent and prove its asymptotic normality. 
Due to space constraints, when only AD are available, consistency of $\hat\tau^\pb_\ad$ is established separately in Section~\ref{sec:AD_appendix} of the Supplementary Materials.

We focus on the estimator $\hat{\tau}^\pb_\id$ as defined in~\eqref{eq:est_theory}. 
Throughout this section, we consider only across-study balance, where the set of balancing basis functions is $A = \{1, \ldots, K\}$.
To additionally achieve within-study balance, the basis function $B(x)$ can be augmented with interactions between study indicators and basis functions in the set $W$. 
For notational convenience, we include the constant basis function $B_1(x) = 1$, set the target value $B^*_1 = 1$, and the corresponding tolerance $\delta_1 = 0$, so that the first balancing constraint ensures normalization, i.e., $\sum_{i,j: Z_{ij}=z} w_{z,ij} = 1$.

\subsection{Selection mechanism introduced by sample-boundedness}
\vspace{-.25cm}

The optimization framework~\eqref{eq:framework0} for the one-stage ID meta-analysis does not require unit weights to be nonnegative. 
As discussed, allowing negative weights can result in the sign-reversal phenomenon.
Non-negativity constraints on the weights prevent this issue and ensure that the final estimate is sample-bounded. 
This section offers further insights into the role and interpretation of such non-negativity constraints.

\begin{theorem}
\label{thm:bounded}
The solution to Program~\eqref{eq:PBM0} for the personalized and sample bounded estimator $\hat\tau^\pb_\id$ is equivalent to the following two-stage procedure:
\begin{enumerate}
    \item[(I)] Restriction step: Find the set $R \subseteq \{(i, j) : i = 1, \ldots, m;\ j = 1, \ldots, n_i\}$ of units that are assigned non-zero weights. Discard all units not in $R$.
    \item[(II)] Relaxation step: Solve a relaxation of Program~\eqref{eq:PBM0} over the units in $R$ without the non-negativity constraint on weights.
\end{enumerate}
\end{theorem}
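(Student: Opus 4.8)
The plan is to treat Program~\eqref{eq:PBM0} as a convex program and read off the claimed two-stage structure directly from its Karush--Kuhn--Tucker (KKT) conditions. Concretely, the dispersion $\mathcal{D}(w_z)=\sum_{i,j:Z_{ij}=z}\psi(w_{z,ij})$ is convex, the balancing set $\mathcal{B}_\id^*$ is cut out by affine inequalities, and the adjustment set $\A^+$ consists of the affine normalization constraint together with the nonnegativity constraints $w_{z,ij}\ge 0$. Hence the problem is convex with affine constraints, so the KKT conditions are both necessary and sufficient for optimality. The key idea is to classify the nonnegativity constraints by whether they are active at the optimum and then invoke complementary slackness.

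First I would fix an arm $z$, let $\hat w^\pb$ denote the optimal weight vector from Program~\eqref{eq:PBM0}, and set $R:=\{(i,j):Z_{ij}=z,\ \hat w^\pb_{z,ij}>0\}$; this is precisely the restriction step (I). Introducing net multipliers $\lambda_k$ for the balancing constraints in $\mathcal{B}_\id^*$ (which subsume normalization through the constant basis $B_1\equiv 1$) and multipliers $\nu_{ij}\ge 0$ for the nonnegativity constraints, stationarity reads $\psi'(\hat w^\pb_{z,ij})+\sum_k \lambda_k B_k(x_{ij})=\nu_{ij}$ for every $(i,j)$ with $Z_{ij}=z$. Complementary slackness $\nu_{ij}\,\hat w^\pb_{z,ij}=0$ forces $\nu_{ij}=0$ for each $(i,j)\in R$, so on $R$ the stationarity equations collapse to $\psi'(\hat w^\pb_{z,ij})+\sum_k \lambda_k B_k(x_{ij})=0$, with no contribution from the nonnegativity multipliers. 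These are exactly the stationarity equations of the relaxed program in step (II), in which the units outside $R$ are discarded (equivalently held at weight zero) and the nonnegativity constraints are dropped.

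Next I would verify that $\hat w^\pb$ restricted to $R$ is optimal for this relaxed program. The relaxation is again convex; its balancing and normalization constraints, together with their active/inactive status, are inherited unchanged from Program~\eqref{eq:PBM0}; and the surviving KKT system---stationarity on $R$ together with primal feasibility and the $\lambda_k$ complementary-slackness conditions---is satisfied by $\hat w^\pb|_R$ with the same multipliers $\lambda_k$. By convexity these conditions are sufficient, so $\hat w^\pb|_R$ is a global minimizer of the relaxation. When $\psi$ is strictly convex (as for the squared $\ell_2$ dispersion $\psi(w)=w^2$), this minimizer is unique and hence equals $\hat w^\pb|_R$, so the two-stage procedure returns exactly the solution of Program~\eqref{eq:PBM0}.

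The step I expect to be the main obstacle is the converse: ensuring that solving the relaxation cannot yield a different, possibly sign-changing, optimum. This is where convexity and KKT sufficiency are decisive. Since $\hat w^\pb|_R$ is already a global minimizer of the convex relaxation and, under strict convexity, the minimizer is unique, no feasible point of the relaxation---including points with negative coordinates---can achieve a smaller objective, which forces the relaxed solution to be nonnegative and equal to $\hat w^\pb|_R$. A secondary point to check is that the balancing inequalities never interfere with the restriction step: their active sets and multipliers $\lambda_k$ transfer verbatim between the full and relaxed problems, entering both only through the linear term $\sum_k \lambda_k B_k(x_{ij})$ and never through the nonnegativity multipliers $\nu_{ij}$ that define $R$.
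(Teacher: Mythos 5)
Your proof is correct, and it rests on the same underlying fact as the paper's: complementary slackness forces the multipliers of the non-negativity constraints to vanish on the support $R$, so the optimality conditions of the constrained and relaxed problems coincide there. The difference is one of packaging. The paper routes the argument through its explicit Lagrangian dual (Theorem~\ref{thm:dual}): it writes the dual of the restricted program with non-negativity over $R$, notes that the non-negativity multipliers $u^\dag_{z,ij}$ vanish on $R$, observes that this dual is then literally identical to the dual of the relaxation, and recovers equality of the primal solutions through the weight formula $\hat w^\pb_{z,ij} = \rho'\{B(X_{ij})^\T \lambda^\dag_z\}$. You instead work directly with the primal KKT system, verify that $\hat w^\pb$ restricted to $R$ together with the original balancing multipliers satisfies the KKT conditions of the relaxation, and invoke KKT sufficiency for convex programs with affine constraints plus strict convexity of $\psi$ for uniqueness. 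Your route is more elementary and self-contained, needing no convex conjugates or the function $\rho$, whereas the paper's route yields the dual formulation and the closed-form weight expression as byproducts, which it reuses in Theorems~\ref{thm::consistency} and~\ref{thm::throw}. One point in your favor: you make the uniqueness step explicit (needed so that stage (II) cannot return a different, sign-changing optimum) via strict convexity of $\psi$; the paper handles this only implicitly, since its recovery formula $\rho' = -(h')^{-1}$ already presupposes that $\psi'$ is invertible.
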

 
Theorem~\ref{thm:bounded} states that solving the weighting problem with non-negativity constraints is equivalent to solving the same problem without these constraints but restricted to a selected subset $R$ of the units. 
In the context of the linear mixed-effects model in~\eqref{eq:one-stage}, this means that imposing non-negativity constraints is mathematically equivalent to fitting the regression model on a carefully chosen subset of the data, thereby preventing extrapolation. 
In this sense, the result provides a principled approach to sample-bounded regression.
This result highlights the implicit selection mechanism induced by non-negativity constraints, which prioritize individuals or studies closely aligned with the target population. 
This finding is closely related to Proposition D.2 in \citet{bruns2025augmented} and extends to a broader class of models beyond OLS (e.g., ridge regression).

The subset $R$ is determined in a data-driven manner, and its explicit form is provided in Theorem~\ref{thm:dual} in the Supplementary Materials. 
The geometry of the boundary defining $R$ is governed by the choice of basis functions $B_k(\cdot)$.
Empirically, units included in $R$ tend to cluster near the target profile $B^*$, while those excluded are typically farther from $B^*$. 
Furthermore, Theorem~\ref{thm::throw} in Section~\ref{sec:diagnostics} shows that, under correctly specified models, the proposed method can asymptotically detect relevant data points or study donors.

\subsection{Consistency and asymptotic normality}
\vspace{-.25cm}

This section establishes the consistency and asymptotic normality of $\hat\tau^\pb_\id$. 
Throughout, we require that the target covariate profile $B^*$ in Program \eqref{eq:PBM0} contains relevant summary measures of the target population.
Specifically, $B^*_k = \frac{1}{n^*} \sum_{ij \in \pt} B_k(X_{ij})$ for $k= 1, \ldots, K$. 
Our method makes use only of the sample means of the basis functions and does not need individual-level data from the target population.

Assumption~\ref{cond::1} in Section \ref{sec:identification_with_id} guarantees that the target average treatment effect is identifiable. 
The next two conditions specify the modeling assumptions for the weighting mechanism and the outcome model used in our method. 
Consistency of the personalized and sample-bounded estimator $\hat\tau^\pb_\id$ requires only one of the modeling assumptions to hold.

Assumption~\ref{cond::inverse_prop} which follows specifies that the probability weighting factor in Equation~\eqref{eq:est} is a function of the basis functions $B(x)$, and places conditions on the overlap between the target and study populations. 
The function $\rho(\cdot)$ arises from the dual formulation of Problem~\eqref{eq:PBM0} and is determined by the choice of dispersion measure $\psi(\cdot)$. 
Specifically, $\rho(v) = v(\psi^\prime)^{-1}(v) + \psi\{(\psi^\prime)^{-1}(v)\}$ for $v \in \R$.
Assumption~\ref{cond::outcome_model} states that the conditional mean of the potential outcomes, given the covariates, is a linear function of the basis functions $B(x)$.

\begin{assumption}
\label{cond::inverse_prop}
For all $x \in \supp(\dpp_z)$, there exists some $\tilde\lambda_{1z} \in \mathbb{R}^K$ such that
$
 \tilde w_z(x) = n \rho^\prime \{B(x)^\T \tilde\lambda_{1z}\} \cdot 1\big\{ \rho^\prime \{B(x)^\T \tilde\lambda_{1z}\} \geq 0 \big\},
$
where the probability weight factors $\tilde w_z(x)$ are defined as $\tilde w_1(x) = \frac{t(x)}{p(x)e(x)} \cdot 1\{x \in V\}$ and $\tilde w_0(x) = \frac{t(x)}{p(x)[1-e(x)]} \cdot 1\{x \in V\}$.
\end{assumption}

\begin{assumption}
\label{cond::outcome_model}
The conditional mean of the potential outcome under treatment $Z=z$ satisfies
$
\E_{\dpp} \left[ Y(z) \mid X = x \right] = B(x)^\T \tilde \lambda_{2z}
$
for some $\tilde \lambda_{2z} \in \mathbb{R}^K$, where $\Vert \tilde\lambda_{2z} \Vert_2 \Vert \delta \Vert_2 = o(1)$.
\end{assumption}

Assumptions~\ref{cond::inverse_prop} and~\ref{cond::outcome_model} specify the modeling requirements for our personalized and sample-bounded meta-analysis estimator. 
The probability weighting model depends on the choice of dispersion measure $\psi(\cdot)$, whereas the outcome model is independent of this choice. 
Both models, however, share the same set of basis functions. 
Therefore, the dispersion measure should be chosen based on prior knowledge of the weighting mechanism, and the basis functions should be selected to adequately represent both the outcome model and the probability factors.

\begin{theorem}
\label{thm::consistency}
Suppose Assumption~\ref{cond::1} and the regularity conditions \ref{cond::regularity} in the Supplementary Materials hold.
Then, as $n \to \infty$, $\hat \tau^\pb_\id \cp \tau$ if at least one of the following conditions is satisfied:
\begin{itemize}
\zeroitem
    \item[(a)] The probability weighting model is correctly specified: Assumption~\ref{cond::inverse_prop} holds for $z \in \{0,1\}$.
    \item[(b)] The potential outcome model is correctly specified: Assumption~\ref{cond::outcome_model} holds for $z \in \{0,1\}$.
    \item[(c)] Models are correctly specified for different groups: Assumption~\ref{cond::inverse_prop} holds for $z=0$ and Assumption~\ref{cond::outcome_model} holds for $z=1$, or vice versa.
\end{itemize}
\end{theorem}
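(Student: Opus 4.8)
The plan is to establish the stronger per-arm statement that, for each $z\in\{0,1\}$, the weighted mean $\hat\mu_z:=\sum_{ij:Z_{ij}=z}\hat w^\pb_{z,ij}Y_{ij}$ satisfies $\hat\mu_z\cp\E_\dt\{Y(z)\}$ whenever \emph{either} Assumption~\ref{cond::inverse_prop} \emph{or} Assumption~\ref{cond::outcome_model} holds for that particular $z$. This suffices because $w_1$ and $w_0$ solve decoupled instances of Program~\eqref{eq:PBM0} (the dispersion objective, the balance set $\mathcal{B}^*_\id$, and the adjustment set $\A^+$ all separate by arm), and $\hat\tau^\pb_\id=\hat\mu_1-\hat\mu_0$; applying the per-arm result to $z=1$ and $z=0$ with the appropriate model correct then yields cases (a), (b), and (c) at once. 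Writing $m_z(x):=\E_\dpp\{Y(z)\mid X=x\}$ for the true outcome regression and using the normalization $\sum_{ij:Z_{ij}=z}\hat w^\pb_{z,ij}=1$, the backbone is the doubly robust decomposition
\[
\hat\mu_z-\E_\dt\{Y(z)\}=\underbrace{\sum_{ij:Z_{ij}=z}\hat w^\pb_{z,ij}\{Y_{ij}-m_z(X_{ij})\}}_{T_1}+\underbrace{\Big(\sum_{ij:Z_{ij}=z}\hat w^\pb_{z,ij}m_z(X_{ij})-\E_\dt\{m_z(X)\}\Big)}_{T_2},
\]
where the centering identity $\E_\dt\{m_z(X)\}=\E_\dt\{Y(z)\}$ follows from the conditional exchangeability in Assumption~\ref{cond::1}(b)--(c) and iterated expectations. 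I would verify this identity first, since it uses only Assumption~\ref{cond::1}.

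Under the outcome branch, where Assumption~\ref{cond::outcome_model} holds for $z$, we have $m_z(x)=B(x)^\T\tilde\lambda_{2z}$, so $T_2=\big(\sum\hat w^\pb_{z,ij}B(X_{ij})-B^*\big)^\T\tilde\lambda_{2z}+\big(B^{*\T}\tilde\lambda_{2z}-\E_\dt\{m_z(X)\}\big)$. The first piece is bounded by $\|\delta\|_2\|\tilde\lambda_{2z}\|_2=o(1)$ using the balance constraints in $\mathcal{B}^*_\id$ and Cauchy--Schwarz, while the second vanishes because $B^*$ is the target sample mean of the basis functions, so $B^*\cp\E_\dt\{B(X)\}$. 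The term $T_1$ is a weighted average of residuals that are mean zero conditional on the covariates within arm $z$ (here Assumption~\ref{cond::1}(b) identifies $\E\{Y_{ij}\mid X_{ij},Z_{ij}=z\}$ with $m_z$), which I would drive to zero through a conditional Chebyshev argument, controlling the weight dispersion via the regularity conditions~\ref{cond::regularity} (for instance $\sum_{ij:Z_{ij}=z}(\hat w^\pb_{z,ij})^2\cp0$) together with bounded conditional outcome variance.

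Under the weighting branch, where Assumption~\ref{cond::inverse_prop} holds for $z$, I would instead work through the dual of Program~\eqref{eq:PBM0}. Its KKT conditions express the primal weights as $\hat w^\pb_{z,ij}\propto\rho'\{B(X_{ij})^\T\hat\lambda_z\}\,1\{\rho'\{B(X_{ij})^\T\hat\lambda_z\}\ge0\}$, matching the form of the true weight factor $\tilde w_z$ posited in Assumption~\ref{cond::inverse_prop}. The crux is dual consistency, $\hat\lambda_z\cp\tilde\lambda_{1z}$, which I would obtain by identifying $\hat\lambda_z$ as the minimizer of a strictly convex sample dual criterion whose population analog is uniquely minimized at $\tilde\lambda_{1z}$, and then invoking a standard argmin/M-estimation continuity argument under~\ref{cond::regularity}. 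Given dual consistency the normalized estimated weights converge to $\tilde w_z$, and a law of large numbers gives $\hat\mu_z\cp\E_{\dpp_z}\{\tilde w_z(X)Y\}=\E_\dt\{Y(z)\}$ by the identification formula~\eqref{eq:est}; equivalently, both $T_1$ and $T_2$ vanish because $\tilde w_z$ balances $B$ and annihilates the conditional residuals in the limit.

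I expect the weighting branch to be the main obstacle, specifically proving dual consistency $\hat\lambda_z\cp\tilde\lambda_{1z}$ as the number of basis functions $K$ grows with $n$. The non-smoothness introduced by the sample-boundedness truncation $1\{\cdot\ge0\}$, the increasing dimension of $\lambda_z$, and the need for a well-separated minimizer of the population dual criterion are the delicate points, and these are precisely what the moment, identifiability, and growth-rate conditions collected in~\ref{cond::regularity} should supply. Controlling the noise term $T_1$ when weights may be large is a secondary issue, resolved by the same dispersion bounds.
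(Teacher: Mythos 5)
Your proposal is correct and takes essentially the same route as the paper: the paper likewise reduces to the per-arm statement that $\sum_{ij:Z_{ij}=z}\hat w^\pb_{z,ij}Y_{ij}(z)\cp\E_{\dt}\{Y(z)\}$ under either model, handles the outcome branch exactly as you do (balance constraints give $\Vert\tilde\lambda_{2z}\Vert_2\Vert\delta\Vert_2=o(1)$, the target sample mean $B^*$ centers the estimand, and the weighted residual term is killed using the uniform weight bound from the regularity conditions), and handles the weighting branch by proving dual consistency $\Vert\lambda_z^\dag-\tilde\lambda_{1z}\Vert_2=O_P\{K^{1/4}(\log K)^{1/2}n^{-1/2}\}$ (Lemma~\ref{lem:lambda_diff}, via the growing-$K$ convexity-on-the-boundary argument you anticipate as the main obstacle) and deducing uniform convergence of $n\hat w^\pb_z$ to $\tilde w_z$ (Theorem~\ref{thm::unif_convg_prob}). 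The only minor difference is in the noise term $T_1$, where you invoke conditional Chebyshev with $\sum_{ij:Z_{ij}=z}(\hat w^\pb_{z,ij})^2\cp 0$ while the paper uses the bound $|n\rho^\prime|\leq C_0$ together with a law of large numbers; both rest on the same regularity condition, and your version is if anything the cleaner way to make that step rigorous.
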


Theorem~\ref{thm::consistency} establishes that $\hat{\tau}^\pb_\id$ is consistent under multiple sets of conditions. 
Analogously to doubly robust estimators, consistency is achieved if the model for the probability weighting factor, the model for the potential outcomes, or an appropriate combination of the two is correctly specified. 
Likewise, when only AD are available, the proposed estimator $\hat\tau^\pb_\ad$ exhibits similar multiple consistency conditions, as detailed in Section~\ref{sec:AD_appendix} of the Supplementary Materials.

Finally, Theorem~\ref{thm:clt} states that if both outcome models and probability weighting models are correctly specified, under some additional regularity conditions, the personalized and sample-bounded estimator $\hat\tau^\pb_\id$ is asymptotically normal for the target average treatment effect $\tau$. 

\begin{theorem}
\label{thm:clt}
Suppose Assumptions~\ref{cond::1},~\ref{cond::inverse_prop} and \ref{cond::outcome_model}, and regularity condition~\ref{cond::regularity} in the Supplementary Materials hold. Then, as $n \to \infty$,
$
\sqrt{n} (\hat \tau^\pb_\id - \tau) \xrightarrow{d} \mathcal{N}(0, V),
$
where
$
V = \E_{\dt} \left[ \tilde w_1(X)\, \operatorname{Var}_{\dt} \left\{Y(1) \mid X \right\} +  \tilde w_0(X)\,  \operatorname{Var}_{\dt} \left\{Y(0) \mid X \right\}  \right] + \frac{1}{\alpha} \operatorname{Var}_{\dt} \left[  \E_{\dt} \left\{ Y(1) - Y(0) \mid X\right\} \right].
$
\end{theorem}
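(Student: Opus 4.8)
The plan is to establish asymptotic normality through an augmented (AIPW-type) linearization of $\hat\tau^\pb_\id$, exploiting the dual representation of the balancing weights together with the correctly specified outcome model. Throughout I write $m_z(x):=\E_\dpp\{Y(z)\mid X=x\}=B(x)^\T\tilde\lambda_{2z}$ and $\eta_{z,ij}:=Y_{ij}-m_z(X_{ij})$, so that $\E\{\eta_{z,ij}\mid X_{ij},Z_{ij}=z\}=0$ and $\var\{\eta_{z,ij}\mid X_{ij}\}=\var_\dt\{Y(z)\mid X_{ij}\}$ by the exchangeability in Assumption~\ref{cond::1}. First I would invoke the dual characterization of Program~\eqref{eq:PBM0} (Theorem~\ref{thm:dual} in the Supplementary Materials) to write $\hat w_{z,ij}=\rho'\{B(X_{ij})^\T\hat\lambda_z\}\cdot 1\{\rho'\{B(X_{ij})^\T\hat\lambda_z\}\ge 0\}$ for dual variables $\hat\lambda_z$ solving the balancing moment conditions. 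The consistency argument behind Theorem~\ref{thm::consistency} already gives $\hat\lambda_z\cp\tilde\lambda_{1z}$; here I would sharpen this to a rate, $\|\hat\lambda_z-\tilde\lambda_{1z}\|_2=O_p(\cdot)$, by a Taylor expansion of the balancing equations around $\tilde\lambda_{1z}$, using the positive-definiteness of the dual Hessian supplied by regularity condition~\ref{cond::regularity}. On the target support $V$ the oracle weights $\tilde w_z$ are strictly positive, so asymptotically the truncation indicator is inactive at the relevant covariate values and the kink in $\rho'_+$ does not bind, allowing me to treat $n\hat w_{z,ij}$ as a smooth function of $\hat\lambda_z$.

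Next I would split
\[
\hat\tau^\pb_\id-\tau=\sum_{z\in\{0,1\}}(-1)^{1-z}\!\!\sum_{ij:Z_{ij}=z}\!\!\hat w_{z,ij}\,\eta_{z,ij}\;+\;\Big[\sum_{z\in\{0,1\}}(-1)^{1-z}\!\!\sum_{ij:Z_{ij}=z}\!\!\hat w_{z,ij}\,m_z(X_{ij})-\tau\Big].
\]
By the linearity of $m_z$ and the balancing constraints $|\sum_{ij:Z_{ij}=z}\hat w_{z,ij}B_k(X_{ij})-B^*_k|\le\delta_k$, the bracketed profile term equals $B^{*\T}(\tilde\lambda_{21}-\tilde\lambda_{20})-\tau$ up to an imbalance error bounded by $\|\delta\|_2(\|\tilde\lambda_{21}\|_2+\|\tilde\lambda_{20}\|_2)$. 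Since $B^*_k=\tfrac{1}{n^*}\sum_{ij\in\pt}B_k(X_{ij})$, this reduces the profile term to $\tfrac{1}{n^*}\sum_{ij\in\pt}\{m_1(X_{ij})-m_0(X_{ij})\}-\tau$, a centered sample average of the conditional treatment effect over the independent target sample. I would require $\sqrt n\,\|\delta\|_2\|\tilde\lambda_{2z}\|_2\to0$, a strengthening of the $o(1)$ bound in Assumption~\ref{cond::outcome_model} that I expect is embedded in condition~\ref{cond::regularity}, so the imbalance error is $o_p(n^{-1/2})$.

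For the noise term I would replace $\hat w_{z,ij}$ by the oracle $\tfrac1n\tilde w_z(X_{ij})=\rho'\{B(X_{ij})^\T\tilde\lambda_{1z}\}$. The difference is, to first order, $\rho''\{\cdot\}B(X_{ij})^\T(\hat\lambda_z-\tilde\lambda_{1z})$, so the induced error in the noise term is $(\hat\lambda_z-\tilde\lambda_{1z})^\T\{\tfrac1n\sum_{ij:Z_{ij}=z}\rho''\{\cdot\}B(X_{ij})\eta_{z,ij}\}$. Because $\eta_{z,ij}$ is conditionally mean-zero, the bracketed average is $O_p(\sqrt{K/n})$, and combined with the rate from the first step this term is $o_p(n^{-1/2})$ provided $K$ grows slowly enough (a restriction I expect condition~\ref{cond::regularity} to impose, controlling $K$, the eigenvalues of the dual Hessian, and moments of $B(X)\eta$). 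This uniform control over a growing number $K$ of basis functions, together with showing the truncation boundary carries vanishing mass, is the technically hardest part; it is the orthogonality that prevents the plug-in of the estimated weights from inflating the asymptotic variance, and it is where I expect the main difficulty to lie.

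After these reductions,
\[
\sqrt n(\hat\tau^\pb_\id-\tau)=\frac{1}{\sqrt n}\sum_{ij:Z_{ij}=1}\!\tilde w_1(X_{ij})\eta_{1,ij}-\frac{1}{\sqrt n}\sum_{ij:Z_{ij}=0}\!\tilde w_0(X_{ij})\eta_{0,ij}+\sqrt n\Big[\tfrac{1}{n^*}\!\sum_{ij\in\pt}\{m_1(X_{ij})-m_0(X_{ij})\}-\tau\Big]+o_p(1).
\]
The three averages are taken over the disjoint and mutually independent sets of treated study units, control study units, and target-sample units, so their variances add and a Lindeberg--Feller CLT applies. For the first term I would compute, by iterated expectations and the change of measure $e(x)\tilde w_1(x)^2p(x)=\tilde w_1(x)t(x)$ (which holds since $\tilde w_1(x)=t(x)/\{p(x)e(x)\}$ on $V$), that its variance converges to $\E_\dt[\tilde w_1(X)\var_\dt\{Y(1)\mid X\}]$, and symmetrically for the control term. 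With $n^*/n\to\alpha$, the target-sample term contributes $\tfrac1\alpha\var_\dt[\E_\dt\{Y(1)-Y(0)\mid X\}]$. Summing the three pieces yields exactly $V$, completing the proof.
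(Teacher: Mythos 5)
Your proposal follows essentially the same route as the paper's proof: the same AIPW-type decomposition into (i) an oracle leading term consisting of the target-sample average of the conditional treatment effect plus oracle-weighted residuals, (ii) a weight-estimation-error-times-residual term, and (iii) an imbalance term killed by the balancing constraints together with Assumption~\ref{cond::outcome_model}; the leading term is then handled by a CLT with the same change-of-measure identity $e(x)\tilde w_1(x)^2 p(x)=\tilde w_1(x)t(x)$ to produce $V$, and the dual-parameter rate you propose to derive is exactly the paper's Lemma~\ref{lem:lambda_diff}. The one place you genuinely diverge is the weight-error term (the paper's $R_1,\tilde R_1$): you propose a Taylor expansion of $\rho'$ in the dual parameter plus orthogonality of $\eta$, whereas the paper uses a second-moment/Markov argument --- conditioning on the covariates, exploiting that the weights depend only on $(X,Z)$ so cross terms in $\E(nR_1^2)$ vanish, and then invoking the uniform convergence $\sup_x|n\hat w_z(x)-\tilde w_z(x)|=o_P(1)$ (Theorem~\ref{thm::unif_convg_prob}). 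The paper's variant is cleaner precisely at the point you flag as hardest: it never differentiates the truncated link $v\mapsto\rho'(v)1\{\rho'(v)\ge 0\}$ at its kink, nor does it need to show the truncation boundary carries vanishing mass or to control a second-order Taylor remainder (which your sketch leaves open and would require bounds on $\rho'''$); the kink is absorbed into the uniform bound via the elementary inequality $|a|,|b|\le|a-b|$ when $ab<0$. Your observation that the imbalance term really needs $\sqrt n\,\Vert\delta\Vert_2\Vert\tilde\lambda_{2z}\Vert_2\to 0$ rather than the bare $o(1)$ of Assumption~\ref{cond::outcome_model} is a fair and slightly more careful reading than the paper's own one-line treatment of $R_2$, which bounds $n^{1/2}|R_2|$ by $\Vert\tilde\lambda_{21}\Vert_2\Vert\delta\Vert_2$ directly.
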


We propose two methods for estimating the variance of $\hat\tau^\pb_\id$. 
The first is a heuristic estimator that generalizes the classical variance formula used in linear regression. 
While it does not have formal theoretical guarantees in our context, it is computationally simple, parallels OLS variance estimation, and can be applied broadly to optimization-based estimators. 
The second method is a plug-in estimator that leverages the asymptotic variance characterized in Theorem~\ref{thm:clt}. 
Further details on both variance estimation approaches are provided in Section~\ref{sec:var_appendix} of the Supplementary Materials.

\section{Diagnostics for study influence and target support}
\label{sec:diagnostics}
 \vspace{-.25cm}

In our weighting representation of meta-analysis estimators, each observation is assigned a specific weight, which not only aids interpretability but also serves as a diagnostic tool. 
This feature is particularly valuable in the context of meta-analysis, as it allows investigators to characterize the population actually targeted by covariate adjustments, to calculate the effective sample size after adjustments, and to assess both the magnitude and nature of the contribution of each study to the final estimate. 
Moreover, as shown in Proposition~\ref{prop:one-stage}, the standard one-stage ID meta-analysis can be viewed as a special case of our approach, thus yielding novel diagnostics for regression-based meta-analysis. 
Importantly, these diagnostics can be performed as part of the design stage of the study, since they depend only on covariate and treatment information and do not require outcome data. 
In this sense, the proposed diagnostics can be viewed as design-based tools for evaluating the appropriateness of regression adjustments in meta-analysis.

In particular, determining which observations or studies serve as donors and quantifying their influence on the final estimate is crucial. 
To aid this assessment, we recommend plotting the sample weights for each observation against their indices and covariate values. 
These simple graphical diagnostics offer clear visualizations of donor contributions and help evaluate how well the contributing studies align with the target population.

Theorem~\ref{thm:bounded} demonstrates that imposing non-negativity constraints on the weights implicitly induces a selection mechanism, restricting the analysis to a subset of units. 
Building on this result, Theorem~\ref{thm::throw} establishes an important property of our method. 
Under a correctly specified weighting model, data points within the support of the target population are assigned non-zero weights, while those outside this support receive zero weight and are thus effectively excluded from the analysis.
In other words, the proposed approach actually selects units in the support of the target population, while systematically excluding those outside it.
\begin{theorem}\label{thm::throw}
If Assumptions~\ref{cond::1},~\ref{cond::inverse_prop}, and the regularity condition~\ref{cond::regularity} in the Supplementary Material hold, then
\begin{eqnarray*}
\Pr_{ \dpp} \left\{ \hat w^\pb_\id(X) > 0 \mid X \in V\right\} \rightarrow 1, \quad
\Pr_{ \dpp}\left\{ \hat w^\pb_\id(X) = 0 \mid X \in \supp(\dpp)\setminus V \right\}\rightarrow 1.
\end{eqnarray*}
\end{theorem}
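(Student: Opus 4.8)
The plan is to push the known sign pattern of the population weighting factor $\tilde w_z$ through to its sample analogue, using the dual form of the weights together with consistency of the dual parameter. Throughout, fix the treatment arm $z\in\{0,1\}$ of the unit under consideration, so that $\hat w^\pb_\id$ denotes the fitted weight $\hat w^\pb_{z}$ for a unit with $Z=z$; the two displays are then proved separately for each $z$.

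First I would record the dual representation. By Theorem~\ref{thm:dual} in the Supplementary Materials, the solution of Program~\eqref{eq:PBM0} can be written as $\hat w^\pb_{z}(x)=[\,\rho'\{B(x)^\T \hat\lambda_z\}\,]_+$, where $[\,\cdot\,]_+=\max(\cdot,0)$ is the positive part produced by the non-negativity constraint in $\A^+$, $\hat\lambda_z$ is the finite-dimensional dual solution, and $\rho$ is the function introduced before Assumption~\ref{cond::inverse_prop}. Consequently a unit at $x$ receives zero weight exactly when $\rho'\{B(x)^\T\hat\lambda_z\}\le 0$ and strictly positive weight otherwise, so the entire statement reduces to controlling the sign of $\rho'\{B(x)^\T\hat\lambda_z\}$.

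Next I would invoke the population sign pattern encoded in Assumption~\ref{cond::inverse_prop}. For $x\in V$, Assumption~\ref{cond::1}(a) gives $t(x),p(x)>0$ and $e(x)\in(0,1)$, so $\tilde w_z(x)>0$ and hence $\rho'\{B(x)^\T\tilde\lambda_{1z}\}>0$ strictly; for $x\in\supp(\dpp)\setminus V$ we have $t(x)=0$, so $\tilde w_z(x)=0$ and $\rho'\{B(x)^\T\tilde\lambda_{1z}\}\le 0$. I would then establish $\hat\lambda_z\cp\tilde\lambda_{1z}$: the sample dual objective is convex and converges to its population counterpart (uniformly on compacts) under regularity condition~\ref{cond::regularity}, and Assumption~\ref{cond::inverse_prop} identifies $\tilde\lambda_{1z}$ as the population minimizer, so the standard argmin argument for convex M-estimators applies; this is the same convergence underlying Theorem~\ref{thm::consistency}(a). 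To finish the $V$ part, fix $x\in V$; since $\rho'$ and $B(\cdot)$ are continuous and $\rho'\{B(x)^\T\tilde\lambda_{1z}\}>0$ strictly, the continuous mapping theorem yields $\Pr\{\rho'\{B(x)^\T\hat\lambda_z\}>0\}\to 1$, i.e.\ $\hat w^\pb_\id(x)>0$ with probability tending to one. Writing the target as $\E_X[\Pr\{\hat w^\pb_\id(X)>0\mid X\}\,\indicator{X\in V}]/\Pr(X\in V)$ and applying bounded convergence gives the first display.

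The hard part will be the second display, because off the target support Assumption~\ref{cond::inverse_prop} only delivers the weak inequality $\rho'\{B(x)^\T\tilde\lambda_{1z}\}\le 0$; where equality holds (the kink of $[\,\cdot\,]_+$), continuity cannot determine the sign of the sample analogue. To resolve this I would argue that the boundary set $N:=\{x\in\supp(\dpp)\setminus V:\rho'\{B(x)^\T\tilde\lambda_{1z}\}=0\}$ is $\dpp$-null. Since $\rho$ is convex, $\rho'$ is nondecreasing and vanishes at a single level, so this follows from a non-degeneracy clause in regularity condition~\ref{cond::regularity}: when $X$ admits a density and $\tilde\lambda_{1z}$ loads a non-constant basis function, the linear index $B(X)^\T\tilde\lambda_{1z}$ has a continuous distribution and hits the zero-level of $\rho'$ with probability zero, giving $\Pr(X\in N)=0$. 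On $\supp(\dpp)\setminus(V\cup N)$ the strict inequality $\rho'\{B(x)^\T\tilde\lambda_{1z}\}<0$ holds, the continuity-plus-consistency argument of the previous paragraph gives $\rho'\{B(x)^\T\hat\lambda_z\}<0$ and hence $\hat w^\pb_\id(x)=0$ with probability tending to one, and bounded convergence after conditioning on $\supp(\dpp)\setminus V$ (whose intersection with $N$ is negligible) yields the second display.
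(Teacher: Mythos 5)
Your proposal follows the same architecture as the paper's own proof: reduce everything to the sign of $\rho'\{B(X)^\T \hat\lambda_z\}$ via the dual representation in Theorem~\ref{thm:dual}(b), use Assumption~\ref{cond::inverse_prop} to characterize $V$ as $\{x : \rho'\{B(x)^\T \tilde\lambda_{1z}\} > 0\}$, and handle the kink of the positive part by ruling out the boundary event $\rho'\{B(X)^\T \tilde\lambda_{1z}\} = 0$. On that last point you correctly anticipated that a non-degeneracy condition is needed; the paper simply builds it into regularity condition~\ref{cond::regularity}(a), which assumes $\rho'(v) \neq 0$ a.s.\ for all linear indices, so your separate density argument is unnecessary (and, as stated, would add hypotheses the theorem does not make). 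One immaterial slip: in this paper $\rho$ is concave, not convex ($\rho'' = -1/\psi'' < 0$ by condition (a)), so $\rho'$ is strictly decreasing rather than nondecreasing; only the single-crossing property matters, so nothing breaks.

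The genuine gap is in how you obtain control of the dual parameter. You propose to get $\hat\lambda_z \cp \tilde\lambda_{1z}$ from ``uniform convergence on compacts plus the standard argmin argument for convex M-estimators,'' but that argument presupposes a fixed-dimensional parameter, whereas here $K = O(n^\alpha)$ grows with $n$ (regularity condition~\ref{cond::regularity}(c)), so the parameter space and the population minimizer change with $n$ and the classical convexity lemma does not apply. More importantly, plain consistency of $\hat\lambda_z$ is not sufficient for your continuous-mapping step: transferring the sign requires $|B(x)^\T(\hat\lambda_z - \tilde\lambda_{1z})| = o_p(1)$, and since $\sup_x \Vert B(x)\Vert_2 \leq CK^{1/2}$ this demands a rate $\Vert \hat\lambda_z - \tilde\lambda_{1z}\Vert_2 = o_p(K^{-1/2})$, not mere $o_p(1)$. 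The paper supplies exactly this through Lemma~\ref{lem:lambda_diff}, which proves $\Vert \lambda_z^\dag - \tilde\lambda_{1z}\Vert_2 = O_p\{K^{1/4}(\log K)^{1/2} n^{-1/2}\}$ by a growing-dimension argument (Taylor expansion of the dual objective, matrix Bernstein bounds, and an eigenvalue lower bound), and this beats $K^{-1/2}$ precisely because $\alpha < 2/3$. Your parenthetical pointer to ``the same convergence underlying Theorem~\ref{thm::consistency}(a)'' does land on the right machinery (Lemma~\ref{lem:lambda_diff} via Theorem~\ref{thm::unif_convg_prob}), so the gap is fillable, but the justification you actually give would not go through: without the explicit rate, the $K^{1/2}$ amplification defeats both displays.
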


Although Theorem~\ref{thm::throw} is stated for the ID setting, it also extends to the AD setting through a corresponding notion of target support. 
Specifically, we define the target support in the AD setting as $\tilde V = \{i: \tilde w_{(i)} > 0\}$, where $\tilde w_{(i)}$ are the study-level weights that satisfy the density condition in Assumption~\ref{cond:ad}(c). 
With this definition, Theorem~\ref{thm::throw} implies that, asymptotically, our method selects those studies whose covariate distributions align with the target population.

Taken together, Theorems~\ref{thm:bounded} and~\ref{thm::throw} provide a novel perspective on the role of non-negativity constraints in weighting methods, revealing their function as a data-driven detection mechanism of relevant units. 
These constraints not only help detect relevant study donors and ensure that only units within the support of the target population contribute to estimation, but also enhance both interpretability and robustness. 
Theorem~\ref{thm::throw} is of independent interest for regression analysis as it provides an asymptotic characterization of a principled approach to sample-bounded regression for a target population.

\section{Simulation study}
\label{sec:sim}
\vspace{-.25cm}

In this section, we assess the finite sample performance of our proposed method in comparison with other related estimators using simulated data.
We begin by examining a setting where the support of the target population fully overlaps with that of the study population. 
The simulation design is adapted from \citet{dahabreh2023efficient} and includes a total of $n + n^* =10,000$ individuals. 
For individuals in the study population, we observe the source study ID, treatment assignment, three baseline covariates, and a continuous outcome.
Full details of the data generation process are provided in Section~\ref{sec:sm_simulation} of the \sm.

For each simulated dataset, we implemented five estimators:  
(i) the proposed personalized and sample-bounded estimator $\hat\tau^\pb_\id$, obtained from Program~\eqref{eq:PBM0};  
(ii) the personalized and unbounded estimator $\hat\tau^\pu_\id$, computed from the relaxation of~\eqref{eq:PBM0} without the non-negativity constraints;  
(iii) the g-formula estimator $\hat\tau^\gf$;  
(iv) the weighting estimator $\hat\tau^\inv$, which assigns weights based on the inverse of the product of the estimated propensity score and the estimated probability of selection into the study population; and  
(v) the augmented estimator $\hat\tau^\aug$, which combines the g-formula and weighting approaches.
The last three estimators are discussed in \citet{dahabreh2023efficient}.
We note that estimation of $\hat\tau^\inv$ and $\hat\tau^\aug$ requires access to individual-level covariate data from the target population, whereas our estimator $\hat\tau^\pb_\id$ does not.
For the proposed estimators $\hat\tau^\pb_\id$ and $\hat\tau^\pu_\id$, we use the $l_2$ norm of the weights as the dispersion measure and require exact covariate balance (i.e., $\delta = 0$). 
Here, the unbounded estimator $\hat\tau^\pu_\id$ is mathematically equivalent to the g-formula estimator $\hat\tau^\gf$, hence we omit results for $\hat\tau^\gf$.

This initial simulation study design does not fully capture the features of our method, which allows the support of the target population to not fully overlap with the support of the study population.  
To more thoroughly assess the proposed method, we introduce a second simulation design, where the support of the study population only partially overlaps with that of the target, and the outcome model may differ inside and outside the overlap region, $V$. 
{ The boundary of the target support is determined by the three baseline covariates.}
Hereinafter, we refer to the original scenario as the ``fully'' overlapping case and the new scenario as the ``partially'' overlapping case.

\begin{figure}[!htbp]
    \centering
    \includegraphics[width=0.95\linewidth]{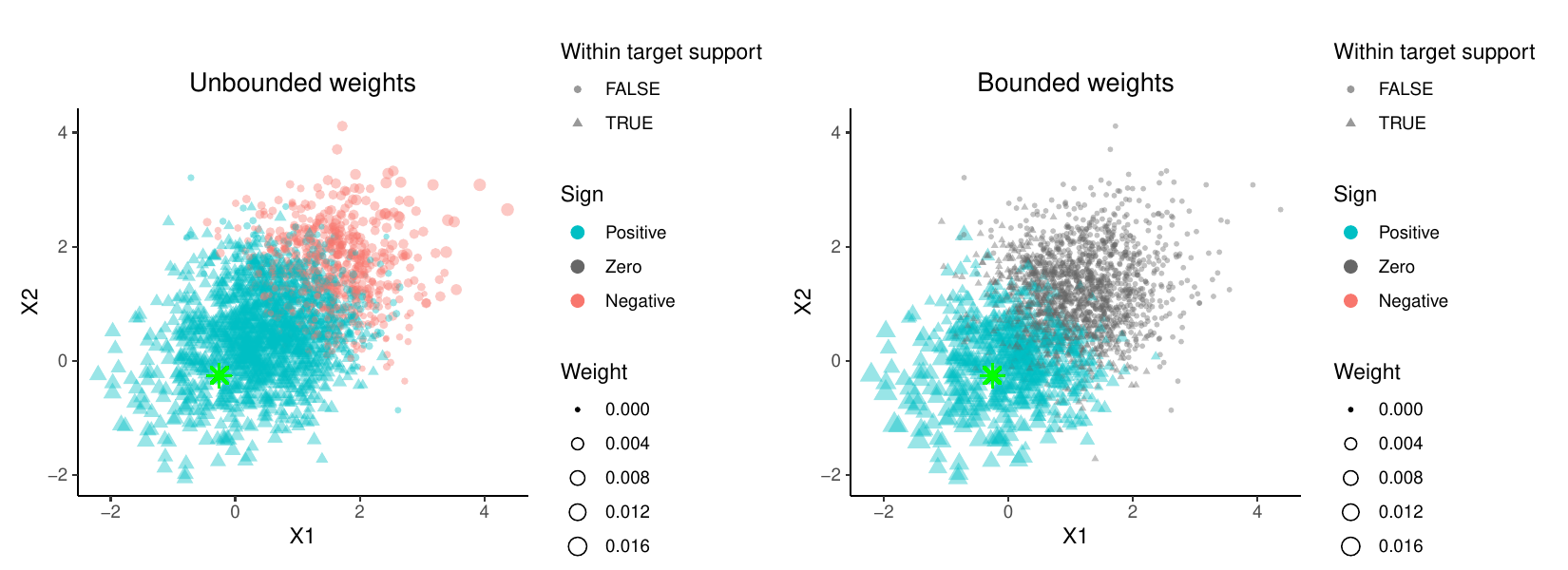}
    \caption{
    Comparison of weights used by the bounded and unbounded estimators in the partially overlapping case. 
The target profile is indicated by a green asterisk. 
Each point represents an individual, with marker size proportional to its assigned weight; colors denote the sign of the weights, and shapes indicate whether individuals are within the target support. 
In the right panel, contours around the blue points represent the detection boundary, illustrating Theorem~\ref{thm::throw} and how the proposed method selectively retains study donors most closely aligned with the target population.
    }
    \label{fig:sim2}
\end{figure}

Figure~\ref{fig:sim2} compares the weights of $\hat\tau^\pb_\id$ and $\hat\tau^\pu_\id$ in the partially overlapping case, using data from a single simulation run. 
Each point represents an observation, with point size proportional to its assigned weight, color indicating the sign of the weight, and shape denoting whether the observation is within the target support. 
The target profile is marked by a green asterisk.  
For the unbounded estimator $\hat\tau^\pu_\id$, a substantial number of observations outside the target support receive negative weights, reflecting the extrapolative nature of this estimator. 
In contrast, the sample-bounded estimator $\hat\tau^\pb_\id$ assigns zero weights to most observations outside the target support, concentrating nearly all nonzero weights on observations within the support region. 
This illustrates how the proposed personalized and sample-bounded approach can be used to detect and retain reliable study donors that align with the target population.
Empirically, the set of observations assigned negative weights by the unbounded method (colored red) substantially overlaps with those assigned zero weights by the bounded method (shown in gray), although the correspondence is not exact. 
This suggests that negative weights in the unbounded estimator may serve as a warning that the corresponding observations are outliers with respect to the target population.
In contrast, as guaranteed by Theorem~\ref{thm::throw}, our sample-bounded estimator automatically discards these outliers by assigning them zero weight.

In the fully overlapping case, all estimators perform well and detailed results are provided in Section \ref{sec:sm_simulation} of \sm.
Table~\ref{tbl:sim2} compares the estimators in the partially overlapping case under various settings. 
In this case, the personalized and sample-bounded estimator $\hat\tau^\pb_\id$ estimator exhibits the smallest bias and RMSE, while the other estimators show non-negligible bias, particularly when sample size is small. This bias arises because these methods allow all individuals, including those outside the target support who follow a different outcome model, to influence the final estimate.
In contrast, our personalized and sample-bounded estimator carefully selects study donors and achieves negligible biases. This shows the robustness of $\hat\tau^\pb_\id$ to overlap violations.

\begin{table}[!htbp]
\centering
\begin{adjustbox}{width=\linewidth}
\begin{tabular}{ccccccccccc}
\hline
$n$    & Balance & $Z$      & \multicolumn{2}{c}{$\hat\tau^\inv$} & \multicolumn{2}{c}{$\hat\tau^\pu_\id$} & \multicolumn{2}{c}{$\hat\tau^\aug$} & \multicolumn{2}{c}{$\hat\tau^\pb_\id$} \\ 
     &         & varies & Bias             & RMSE             & Bias               & RMSE              & Bias             & RMSE             & Bias               & RMSE              \\ \hline
1000 & Yes     & Yes    & 3.5820           & 3.9380           & -0.6916            & 0.7051            & 0.5559           & 0.8503           & -0.0017            & 0.1582            \\
1000 & Yes     & No     & 3.5815           & 3.9304           & -0.6824            & 0.6959            & 0.5488           & 0.8316           & -0.0016            & 0.1561            \\
1000 & No      & Yes    & 3.5858           & 3.9373           & -0.6835            & 0.6969            & 0.5543           & 0.8418           & -0.0026            & 0.1537            \\
1000 & No      & No     & 3.5815           & 3.9314           & -0.6816            & 0.6951            & 0.5508           & 0.8318           & 0.0000             & 0.1546            \\
2000 & Yes     & Yes    & 1.8251           & 2.1097           & -0.4633            & 0.4736            & 0.2268           & 0.3807           & -0.0010            & 0.1100            \\
2000 & Yes     & No     & 1.7787           & 2.0488           & -0.4567            & 0.4667            & 0.2141           & 0.3564           & -0.0036            & 0.1075            \\
2000 & No      & Yes    & 1.7858           & 2.0568           & -0.4560            & 0.4660            & 0.2166           & 0.3630           & -0.0022            & 0.1070            \\
2000 & No      & No     & 1.7830           & 2.0504           & -0.4552            & 0.4652            & 0.2175           & 0.3549           & -0.0017            & 0.1087            \\
5000 & Yes     & Yes    & 0.2905           & 0.4777           & -0.2598            & 0.2697            & 0.0253           & 0.1158           & 0.0002             & 0.0809            \\
5000 & Yes     & No     & 0.2458           & 0.4091           & -0.2552            & 0.2651            & 0.0205           & 0.1072           & -0.0002            & 0.0799            \\
5000 & No      & Yes    & 0.2533           & 0.4242           & -0.2545            & 0.2645            & 0.0215           & 0.1103           & 0.0003             & 0.0806            \\
5000 & No      & No     & 0.2452           & 0.4093           & -0.2548            & 0.2648            & 0.0196           & 0.1064           & 0.0001             & 0.0803            \\ 
\hline
\end{tabular}
\end{adjustbox}
\caption{
Bias and root mean squared error (RMSE) in the partially overlapping case.
}
\vspace{.2cm}
\footnotesize
\flushleft{
Results based on 10,000 simulation runs. Under ``Balance'' ``Yes'' indicates scenarios where trials have, on average, equal sample sizes, while ``No'' refers to scenarios with unequal trial sample sizes. In the column labeled ``$Z$ varies,'' ``Yes'' denotes scenarios where the treatment assignment mechanism differs across trials.\par}
\label{tbl:sim2}
\end{table}

\section{Case studies}
\label{sec:casestudy}
\vspace{-.25cm}

In this section, we apply the proposed method to two case studies. 
The first case study uses individual-level data from the HALT-C trial, allowing a more nuanced application of the proposed estimator. 
The second case study involves a meta-analysis of psychotherapies for depression, illustrating the method's applicability with only aggregate-level data.

\subsection{ID case study: HALT-C trial}
\vspace{-.25cm}
The Hepatitis C Antiviral Long-Term Treatment Against Cirrhosis (HALT-C) trial \citep{di2008prolonged} consists of 1,050 enrolled patients with chronic hepatitis C and advanced fibrosis, all of whom had not responded to prior antiviral therapy. The study was carried out in ten research centers, with patients being randomly assigned to receive peginterferon alfa-2a treatment or no treatment. 
Here we focus on the platelet count measured nine months post-randomization as the outcome.

Following \citet{dahabreh2023efficient}, we restrict our analysis to 948 individuals with complete baseline covariate and outcome data. 
The analysis includes 18 baseline covariates: baseline platelet count, age, sex, previous use of pegylated interferon, race, white blood cell count, history of injected recreational drugs, transfusion history, body mass index, creatine level, smoking status, previous use of combination therapy (interferon and ribavirin), diabetes, serum ferritin, ultrasound evidence of splenomegaly, alcohol use, hemoglobin, and aspartate aminotransferase (AST) levels. 
To showcase a meta-analysis with transportability to a target population, we treat patients from the largest center ($n^* = 199$) as the target population samples and the remaining 9 centers ($n = 749$) as a collection of source studies. 

Our goal is to estimate the average treatment effect in the target population.
We use the randomization data from the largest center as an empirical benchmark.
We implement three estimators: the personalized and sample-bounded estimator $\hat \tau^\pb_\id$, the personalized and unbounded estimator $\hat \tau^\pu_\id$ and the standard estimator $\hat\tau^\ols$ from one-stage ID meta-analysis. 
The two personalized estimators rely on summary-level covariate information from the sample of the target population. 
For these three estimators, we use the $l_2$ norm of the weights as the dispersion measure, take the covariates themselves as basis functions, apply across-study balancing for all covariates, and set the tolerance levels $\delta$ to zero.
For comparison, we include the augmented estimator $\hat{\tau}^\aug$ and the weighting estimator $\hat{\tau}^\inv$ from \citet{dahabreh2023efficient}, which require individual-level covariate information from the sample of the target population. 
In these analyses they specified linear regression models for the conditional expectation of the outcome, and logistic regression models for the probability of trial participation and the probability of treatment among randomized individuals.

Table~\ref{tbl:haltc_res} summarizes the results of our meta-analysis on the HALT-C dataset. 
In this strong-overlap setting --- evidenced by the comparable covariate distributions across centers --- all four estimators produce similar point estimates that closely align with the experimental benchmark. 
These findings show that our sample-bounded estimator achieves efficiency and accuracy comparable to unbounded methods when there is good overlap, while retaining its interpretability and robustness to extrapolation. 
We anticipate the benefits of this robustness would become even more pronounced in settings with poorer overlap.

\begin{table}[!htbp]
\centering
\begin{tabular}{cccc}
\hline
\textbf{Estimator}           & \textbf{Treated mean}                 & \textbf{Control mean}                   & \textbf{Effect estimate}                   \\ \hline
Benchmark      & 121.6                                    & 164.4                                    & \(-42.8\)                                    \\
$\hat\tau^\ols$          & -                     & -                     & \(-20.8\) \((-80.1, 120.7)\)                 \\ 
\( \hat{\tau}^\inv \)           & 123.2 (113.6, 136.0)                     & 165.9 (152.6, 182.9)                     & \(-42.7\) \((-60.1, -26.2)\)                 \\ 
\(\hat{\tau}^\aug \) & 124.5 (116.3, 133.5)                     & 167.1 (157.3, 177.7)                     & \(-42.6\) \(( -51.2, -34.4)\)                 \\
\( \hat{\tau}^\pu_\id \)           & 124.6 (116.8, 132.7)                     & 167.1 (157.5, 177.4)                     & \(-42.5\) \((-50.5, -34.9)  \)               \\
\(\hat{\tau}^\pb_\id \)        & \multicolumn{1}{l}{124.4 (116.6, 132.6)} & \multicolumn{1}{l}{166.8 (157.2, 177.4)} & \multicolumn{1}{l}{\( -42.4\) \((-50.6, -34.8)\)} \\
\hline
\end{tabular}
\caption{Estimated mean outcomes and treatment effects for patients from the largest research center in the HALT-C trial. Values in parentheses indicate 95\% quantile-based confidence intervals computed from 10,000 bootstrap samples. Confidence intervals based on normal approximation and variance estimates yield similar results.
\label{tbl:haltc_res}}
\end{table}

Crucially, our proposed methodology offers a key capability that distinguishes it from conventional approaches. 
Whereas standard meta-analyses are restricted to estimating overall average treatment effects, our method provides the flexibility to target and estimate effects for specific patient populations based solely on summary-level data.
To illustrate this feature, Table~\ref{tbl:haltc_id} presents causal effect estimates for three distinct patient profiles (the specific attributes of these patients are provided in \sm).
The substantial variation in these estimates highlights the personalized nature of our approach and its potential to move beyond population averages to inform individualized treatment decisions.

\begin{table}[!htbp]
\centering
\begin{tabular}{clc}
\hline
\textbf{Patient} & \textbf{Covariate profile}                                                & \textbf{Effect estimate}                           \\ \hline
Patient 1        & Middle-aged male with metabolic syndrome                            & \(-28.4\) \((-42.0, -14.1)\)                     \\
Patient 2        & Young female with mild disease                                      & \(-34.5\) \((-55.6, -14.1)\)                     \\
Patient 3        & \multicolumn{1}{l}{Older male with prior drug use and liver damage} & \multicolumn{1}{l}{\( -20.9\) \((-37.6, -4.04)\)} \\ \hline
\end{tabular}
\caption{Estimated mean outcomes and treatment effects for three hypothetical patient populations with distinct medically relevant covariate profiles. Values in parentheses indicate 95\% quantile-based confidence intervals computed from 10,000 bootstrap samples. Confidence intervals based on normal approximation and variance estimates yield similar results.}
\label{tbl:haltc_id}
\end{table}

\subsection{AD case study: psychotherapy for depression}
\vspace{-.25cm}
For our second case study, we use the database compiled by \citet{cuijpers2019meta}, which provides a comprehensive collection of randomized controlled trials on psychotherapies for depression. This continuously updated dataset covers studies conducted between 1966 and January 1, 2020, and includes a total of 472 studies.
While the database examines a variety of therapeutic approaches, we focus on trials that evaluate cognitive behavioral therapy (CBT), the most commonly studied intervention, using care-as-usual (CAU) as the control condition, which include 97 studies.
We use the depression measure as the outcome and the standardized mean difference (Hedge's g) to calculate effect sizes.

The dataset contains only aggregate-level information. For each study, we include the estimated treatment effect, corresponding standard deviation, and the following two baseline covariates: average age and number of therapy sessions.
To emulate a transportability setting, we randomly select one study to represent the target population and treat the remaining studies as the source studies. Importantly, only summary-level covariate information is available for the target population in this case. The goal is to estimate the average treatment effect in the target population, leveraging the data from the other studies.

We evaluate two estimators: the personalized and sample-bounded estimator $\hat \tau^\pb_\ad$ and the personalized and unbounded estimator $\hat \tau^\pu_\ad$. 
For both estimators, we use the $l_2$ norm of the weights as the dispersion measure, include the covariates and their squared terms as basis functions, and set the tolerance levels $\delta$ to zero.

\begin{figure}[!htbp]
    \centering
    \includegraphics[width=.75\linewidth]{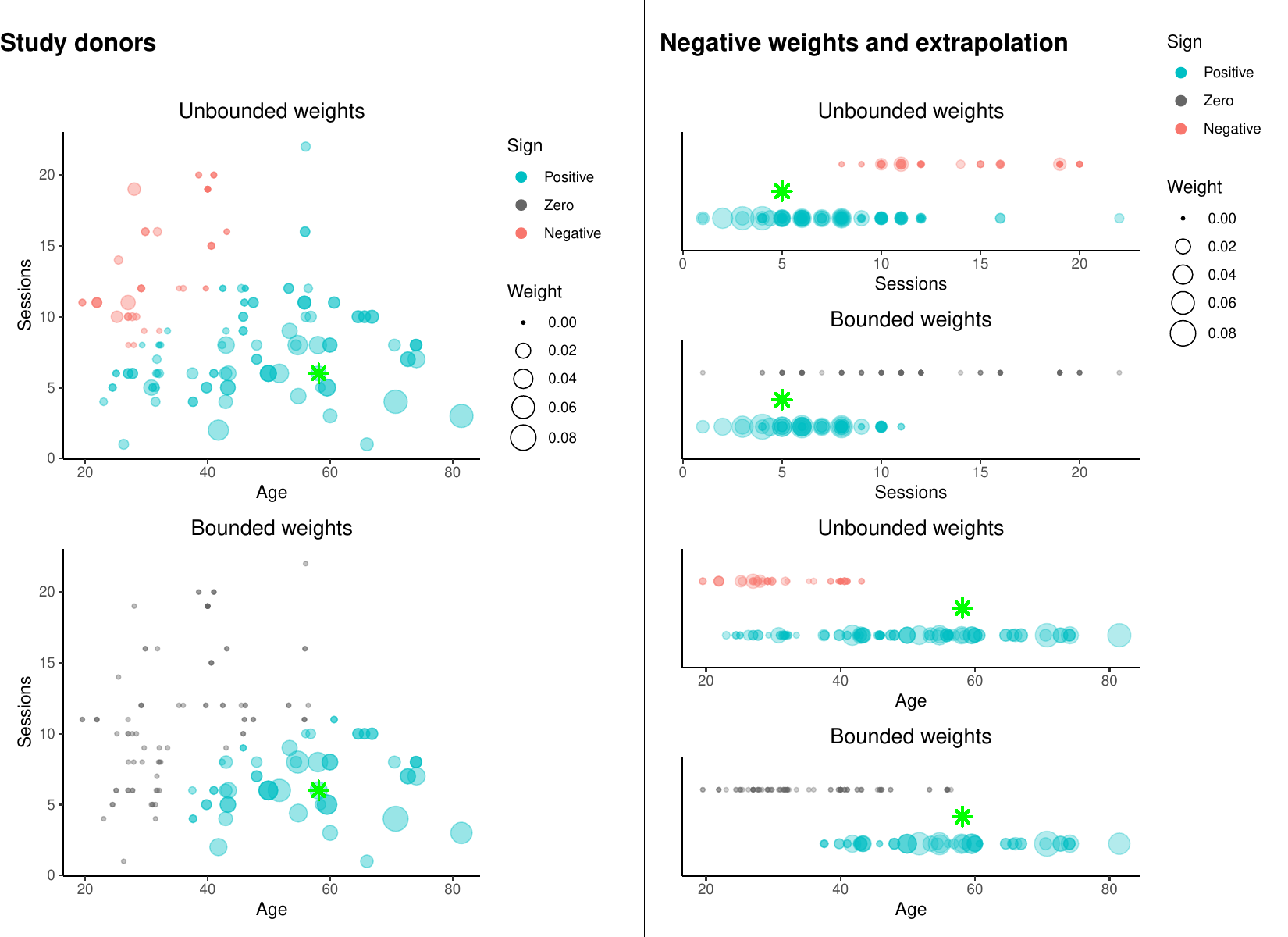}
    \caption{Diagnostics for the unbounded and bounded weights for the depression dataset. 
    Each point represents a study, with marker size proportional to its weight, and color indicating the sign of the weight. The green asterisk marks the target profile. The left panel illustrates how the studies are selected based on age and number of sessions. The right panel explores negative weights and extrapolation.
    }
    \label{fig:dep3}
\end{figure}

Figure~\ref{fig:dep3} illustrates the donor detection and extrapolation behavior of these two estimators. 
In the sub-figures, each point represents a study, with size proportional to its weight and color indicating the sign of the weight. 
The green asterisk denotes the selected target study.
The unbounded estimator assigns negative weights to several studies, particularly those with a higher number of sessions and a lower average age than the target. 
In contrast, the bounded estimator discards these studies by assigning them zero weight, resulting in a more interpretable and sample-bounded selection of donor studies.
The geometry of the detection boundary is influenced by the choice of basis functions. 
In this case, the detection boundary for contributing studies takes the shape of a hyperball as we include the squared terms of the covariates.
Table~\ref{tbl:dep} complements these visual findings by summarizing the quantitative results of the meta-analyses.

\begin{table}[!htbp]
\centering
\begin{tabular}{ccc}
\hline
{Estimator}     & {Estimate (95\% C.I.)} \\ \hline
\( \hat{\tau}^\pu_\ad \) & 0.45 (0.38, 0.57)          \\
\( \hat{\tau}^\pb_\ad  \) & 0.41 (0.35, 0.54)\\   \hline      
\end{tabular}
\caption{Results from meta-analyses using the depression psychotherapy data.\label{tbl:dep}}
\end{table}

\section{Concluding remarks}
\label{sec:conclusion}
\vspace{-.25cm}

Meta-analysis is a powerful tool for synthesizing evidence across study designs and data sources. 
In this paper, we presented a weighting framework for evidence synthesis that encompasses and extends regression-based meta-analysis to accommodate both individual- and aggregate-level data. 
This approach emphasizes interpretability and robustness by producing sample bounded estimators that avoid extrapolation, and by clarifying the contribution of each constituent piece of evidence to the final estimate. 
A key theoretical result, which is of independent interest for regression analysis, establishes that the sample-boundedness constraint provides a principled, data-driven mechanism for detecting observations most relevant to a given target population.

Several avenues for future research remain open. A promising direction is to extend this framework by leveraging machine learning and generative modeling techniques to incorporate unstructured data sources, such as clinical notes and medical images.
Furthermore, in contexts where hidden bias may be present, incorporating sensitivity analysis methods, such as those developed by \citet{mathur2020sensitivity}, would allow researchers to formally assess the robustness of their conclusions to potential unmeasured confounding, further enhancing the practical utility of our approach.



\bibliography{meta}
\bibliographystyle{asa}

\appendix
\setcounter{equation}{0}
\renewcommand{\theequation}{S.\arabic{equation}}
\setcounter{table}{0}
\renewcommand{\thetable}{S.\arabic{table}}
\setcounter{figure}{0}
\renewcommand{\thefigure}{S.\arabic{figure}}
\setcounter{lemma}{0}
\renewcommand{\thelemma}{S.\arabic{lemma}}
\setcounter{proposition}{0}
\renewcommand{\theproposition}{S.\arabic{proposition}}
\setcounter{corollary}{0}
\renewcommand{\thecorollary}{S.\arabic{corollary}}
\setcounter{assumption}{0}
\renewcommand{\theassumption}{S.\arabic{assumption}}
\setcounter{theorem}{0}
\renewcommand{\thetheorem}{S.\arabic{theorem}}
\setcounter{remark}{0}
\renewcommand{\theremark}{S.\arabic{remark}}

\setcounter{section}{0}
\renewcommand{\thesection}{\Alph{section}}
\renewcommand{\thesubsection}{\thesection.\arabic{subsection}}

\newtheorem{appassumption}{Assumption}[section]
\renewcommand{\theappassumption}{S.\arabic{appassumption}}

\newpage
\bigskip
\section{Supplementary Materials}
\subsection{Variance estimation}
\label{sec:var_appendix}
\vspace{-.25cm}
\label{sec:inference}
This section provides the details of two methods to estimate the variance of the personalized and sample-bounded estimator $\hat\tau^\pb_\id$. The first one, denoted by $\hat V^\heur_\id$, is a heuristic estimator inspired by the classical variance formula in linear regression, and is broadly applicable to general optimization-based estimators. The second estimator, denoted by $\widehat {V}^\plugin_\id$, is a plug-in estimator based on the asymptotic variance characterized in Theorem~\ref{thm:clt}.

In standard linear regression, estimated coefficients are linear combinations of the outcomes, and their variances are computed based on residual variance and the squared weights used in these linear combinations. This variance formula is justified by the assumption in linear regression model that all randomness arises from the outcome error term. Motivated by this idea, we propose a heuristic variance estimator $\hat V^\heur_\id$ for $\hat\tau^\pb_\id$ that recovers the classical OLS variance formula.

Proposition~\ref{prop:heur_var} formalizes this heuristic variance estimator and shows that it reproduces the standard OLS variance estimator when derived from the corresponding quadratic programming problem.
More generally, this approach is applicable to weighting estimators derived from a broad class of optimization problems, including those with arbitrary dispersion measures and non-negativity constraints.
However, it does not account for 
the design-based uncertainty present in our setting.

\begin{proposition}
\label{prop:heur_var}
The heuristic variance estimator $\hat V^\heur_\id$ is defined below. It is equivalent to the variance estimator from fitting the linear mixed model~\eqref{eq:one-stage}, when the optimization problem~\eqref{eq:est_theory} is replaced with the special case quadratic programming problem~\eqref{eq:framework0}.
\begin{itemize}
    \item [1.] Solve optimization problem~\eqref{eq:PBM0} to obtain weights $\hat w^\pb_{z,ij}$ and the effect estimator $\hat\tau^\pb_\id$.
    \item [2.] Compute the residual variance $s^2$ from fitting the linear mixed model with outcome adjusted by the causal effect estimate
$Y_{ij} - \hat\tau^\pb_\id Z_{ij} = \phi_i + \beta_{i}^\T X_{ij} +  \gamma_{i}^\T X_{ij} Z_{ij} + \epsilon_{ij}$.
    \item [3.] Compute the heuristic variance estimate as 
    \begin{eqnarray}
    \label{eq:v_heur}
    \hat V^\heur_\id = n s^2\sumij \left(\hat w_{Z_{ij},ij}^\pb\right)^2.
    \end{eqnarray}
 \end{itemize}
\end{proposition}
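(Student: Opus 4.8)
The plan is to reduce everything to the classical algebra of the ordinary least squares estimator in model~\eqref{eq:one-stage}, using Proposition~\ref{prop:one-stage} as the bridge between the optimization weights and the regression's implied weights. Under the special case~\eqref{eq:framework0} the assumed common/fixed-effect structure leaves no random components, so fitting~\eqref{eq:one-stage} is genuine OLS. I would stack the regressors $(Z_{ij},\ \text{intercepts},\ X_{ij},\ X_{ij}Z_{ij})$ into a design matrix $D = [z \mid W]$, where $z$ is the treatment column and $W$ collects the non-treatment terms. Then $\hat\tau^\ols = e_1^\T (D^\T D)^{-1} D^\T Y =: c^\T Y$ is linear in the outcomes, and its classical variance estimate is $\widehat{\mathrm{Var}}(\hat\tau^\ols) = s^2 [(D^\T D)^{-1}]_{11} = s^2\|c\|_2^2$, the last identity following from $c = D(D^\T D)^{-1}e_1$ and symmetry of $(D^\T D)^{-1}$; here $s^2$ is the residual mean square.

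First I would identify $\|c\|_2^2$ with the sum appearing in~\eqref{eq:v_heur}. By Proposition~\ref{prop:one-stage}, the coefficient on $Y_{ij}$ in $\hat\tau^\ols = c^\T Y$ is $+\hat w^\ols_{1,ij}$ for treated units and $-\hat w^\ols_{0,ij}$ for controls; squaring removes the sign, so $c_{ij}^2 = (\hat w^\ols_{Z_{ij},ij})^2$ and $\|c\|_2^2 = \sumij (\hat w^\ols_{Z_{ij},ij})^2$.

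Next I would show that the residual variance $s^2$ produced in Step 2 of the procedure---obtained by regressing the adjusted outcome $Y_{ij} - \hat\tau^\pb_\id Z_{ij}$ on the non-treatment terms---equals the OLS residual mean square from the full fit. This is a Frisch--Waugh--Lovell argument: the OLS normal equations for~\eqref{eq:one-stage} include $W^\T(Y - z\hat\tau^\ols - W\hat\theta) = 0$, i.e. $\hat\theta = (W^\T W)^{-1}W^\T(Y - z\hat\tau^\ols)$, which is exactly the coefficient vector obtained by regressing the adjusted outcome $Y - z\hat\tau^\ols$ on $W$ alone. Hence the two fits share the same residual vector and the same $s^2$. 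I expect this to be the main technical step, since it is what justifies the outcome-adjustment in Step 2; the rest is bookkeeping.

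Finally I would assemble the pieces: combining the two displays gives $\hat V^\heur_\id = n\, s^2 \sumij (\hat w^\ols_{Z_{ij},ij})^2 = n\,\widehat{\mathrm{Var}}(\hat\tau^\ols)$, so the heuristic estimator reproduces the classical OLS variance estimator of the treatment coefficient, up to the factor $n$ that reflects the $\sqrt n$-normalization in Theorem~\ref{thm:clt} (the estimator targets the asymptotic variance of $\sqrt n(\hat\tau^\pb_\id - \tau)$ rather than that of $\hat\tau^\pb_\id$ itself). One must only check that the degrees-of-freedom convention used for $s^2$ matches the one in the OLS variance formula, but this is routine.
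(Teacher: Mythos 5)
Your proposal is correct and follows essentially the same route as the paper's proof: both invoke Proposition~\ref{prop:one-stage} to identify the squared implied weights with $[(D^\T D)^{-1}]_{11}$, and both reduce the claim to showing that the non-treatment coefficients (hence the residuals and $s^2$) from the full OLS fit coincide with those from regressing the adjusted outcome $Y_{ij}-\hat\tau Z_{ij}$ on the remaining regressors --- your normal-equations/Frisch--Waugh--Lovell step is exactly the paper's partial-minimization argument, since $W^\T(Y - z\hat\tau - W\hat\theta)=0$ is the first-order condition of minimizing over $\eta$ with $\tau$ fixed at its optimum. Your explicit remark about the factor $n$ reflecting the $\sqrt{n}$-normalization is a useful clarification the paper leaves implicit, but it does not change the substance of the argument.
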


Based on the asymptotic normality result in Theorem~\ref{thm:clt}, we propose another plug-in variance estimator.
Notice that under Assumption~\ref{cond::1} the asymptotic variance $V$ is identifiable, and with Assumption~\ref{cond::outcome_model}, $V$ can be further simplified to
\begin{eqnarray*}
V 
&=&  \E_{\dt} \biggl[   \tilde w_1(X) \var_{\dt} \bigl\{ Y(1) \mid X \bigr\} +  \tilde w_0(X) \var_{\dt} \bigl\{ Y(0) \mid X \bigr\} \biggr] \\
&& + \frac{1}{\alpha} \var_{\dt} \biggl[ \E_{\dt} \bigl\{ Y(1) - Y(0) \mid X \bigr\} \biggr] \\
&=& \E_{\dpp_1} \bigl[ \tilde w_1^2(X) \var_{\dpp_1} \bigl\{ Y(1) \mid X \bigr\} \bigr] + \E_{\dpp_0} \bigl[ \tilde w_0^2(X) \var_{\dpp_0} \bigl\{ Y(0) \mid X \bigr\} \bigr] \\
&& + \frac{1}{\alpha} (\tilde \lambda_{21} - \tilde \lambda_{20})^\T \biggl( \E_{\dpp} \Bigl[ \tfrac{1}{2} \{Z \tilde w_1(X) + (1-Z) \tilde w_0(X)\} B(X) B(X)^\T \Bigr] \\
&& \qquad - \E_{\dt} \bigl\{ B(X) \bigr\}^\T \E_{\dt} \bigl\{ B(X) \bigr\} \biggr) (\tilde \lambda_{21} - \tilde \lambda_{20}).
\end{eqnarray*}
We can then construct a plug-in variance estimator as
\begin{eqnarray}
\label{eq:var_plugin}
\widehat {V}^\plugin_\id
=  
n \sum_{ij} \left(\hat w_{Z_{ij},ij}^\pb \right)^2 s_{Z_{ij}}^2
+ \frac{n}{n^*} (\hat \lambda_{21} - \hat \lambda_{20})^\T S_{B} (\hat \lambda_{21} - \hat \lambda_{20}).
\end{eqnarray}
Here, $\hat \lambda_{2z}$ and $s_z^2$ are the estimated coefficients and residual variances from fitting the linear model $\lmt(Y_{ij} \sim B(X_{ij}))$ within the selected subset $\{ij: Z_{ij}=z, \hat w_{z,ij} >0 \}$ for $z \in \{0,1\}$. The matrix $S_{B}$ is the weighted covariance of the basis functions in the study population samples, defined as 
\begin{eqnarray*}
S_{B} = \sumij  \hat w_{Z_{ij},ij}^\pb B(X_{ij}) B(X_{ij})^\T/2 -  B^*( B^*)^\T.
\end{eqnarray*}
Under Assumption~\ref{cond::1}, $S_{B}$ provides an unbiased estimator of $\mathrm{Cov}_{\dt}(B(X))$, and can be replaced by the sample covariance matrix from target populations samples if that is available.

Notice that the first component of the plug-in variance estimator has a similar structure to the heuristic variance estimator, as it takes the form of the residual variance multiplied by the squared weights.
This term captures the sampling and design-based uncertainty arising from the study population.
Meanwhile, the second component accounts for the sampling randomness from the target population. Its contribution vanishes when the target sample size is much larger, i.e., $n^* \gg n$.

\subsection{Aggregate-level data setting}
\label{sec:AD_appendix}
This section provides the inference and theoretical results of the personalized and sample-bounded estimator $\hat\tau^\pb_\ad$ in the setting where only aggregate-level data are available.

For inference in aggregate-level data setting, we use the heuristic variance estimator inspired by the classical variance formula in ordinary least squares. The variance estimator $\hat V^\heur_\ad$ can be computed through the following three steps.
\begin{itemize}
    \item [1.] Solve optimization problem in~\eqref{eq:AD} to obtain weights $\hat w_{(i)}^\pb$ and the effect estimator $\hat\tau^\pb_\ad$.
    \item [2.] Compute the residual variance $s^2$ from fitting $\lmt(\hat\tau_i - \hat\tau^\pb_\ad \sim 0 + \bar B_i )$ using weighted least squares, with weights $1/c_i$.
    \item [3.] Compute the heuristic variance estimate as 
\begin{eqnarray}
\label{eq:v_heur_ad}
\hat V^\heur_\ad = m s^2 \sum_{i=1}^m  \left\{ \hat w^\pb_{(i)} \right\}^2.
\end{eqnarray}
\end{itemize}


Assumption~\ref{cond:ad_weight}--\ref{cond:ad_outcome} outline the modeling assumptions imposed by our method in the setting where only aggregate-level data are available.
Assumption~\ref{cond:ad_weight} shows that the study-level weighting factor satisfying the covariate density alignment condition in Assumption~\ref{cond:ad}(c) can be expressed as a function of the expected basis functions within each study, i.e., $\E_{\dpp}[B(X)\mid G = i]$.
This functional form takes is analogous to that in Assumption~\ref{cond::inverse_prop} for the individual-level data case. 
However, unlike in the individual-level case, the input no longer includes individual-level covariate values $B(x)$, and instead relies on study-level summaries and scaling factors $c_i$.
\begin{appassumption}
\label{cond:ad_weight}
For $\tilde w_{(i)}$ that satisfies Assumption~\ref{cond:ad}(c), there exists some $\tilde \lambda_{3} \in \R^K$ s.t.
\begin{eqnarray*}
\tilde w_{(i)} = \rho^\prime \left[\E_{\dpp}\{B(X)\mid G = i\}^\T \tilde \lambda_{3} /c_i \right]  1\left\{ \rho^\prime \left[\E_{\dpp}\{B(X)\mid G = i\}^\T \tilde \lambda_{3} /c_i \right] \geq 0 \right\},
\end{eqnarray*}
for all $x \in \supp(\dpp_X)$.
\end{appassumption}

Assumption~\ref{cond:ad_outcome} states that the conditional difference of potential outcome given covariates can be represented as a linear function of the basis functions $B(x)$. This differs from its counterpart in the individual-level data setting (Assumption~\ref{cond::outcome_model}), as it puts a requirement on the outcome difference rather than the outcome itself.
\begin{appassumption}
\label{cond:ad_outcome}
The conditional difference of potential outcomes satisfies
$$\E_{\dpp} [Y(1) - Y(0) \mid X=x] = B(x)^\T \tilde\lambda_4$$
for some $\tilde\lambda_4 \in \R^K$ and $\Vert \tilde\lambda_4 \Vert_2 \Vert \delta \Vert_2 = o(1)$.
\end{appassumption}

Theorem~\ref{thm:ad_consistency} establishes multiple consistency conditions for $\hat\tau^\pb_\ad$. When only aggregate-level data are available, the consistency of the proposed estimator requires either correct specification of the implied model for the weights $\tilde w_{(i)}$ or correct specification of the conditional potential outcome difference. This result is analogous to Theorem~\ref{thm::consistency} in the individual-level data setting.
\begin{theorem} 
\label{thm:ad_consistency}
Under regularity condition~\ref{cond:reg_ad}, Assumption~\ref{cond::1}-~\ref{cond:ad}, we have $\hat\tau^\pb_\ad \cp \tau$ as $n \rightarrow \infty$ if either Assumption~\ref{cond:ad_weight} or Assumption~\ref{cond:ad_outcome} holds.
\end{theorem}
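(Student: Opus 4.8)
The plan is to establish this double-robustness result by mirroring the argument for the individual-level estimator in Theorem~\ref{thm::consistency}: decompose the error of $\hat\tau^\pb_\ad = \sumi \hat w^\pb_{(i)}\hat\tau_i$ into a stochastic ``noise'' piece and a deterministic ``bias'' piece, then close the bias piece through two different routes according to which model is correctly specified. First I would introduce the population study-level effects $\theta_i := \E_\dpp\{Y(1)-Y(0)\mid G=i\}$, which by Assumption~\ref{cond:ad}(b) satisfy $\E\{\hat\tau_i\}=\theta_i$, and write
\[
\hat\tau^\pb_\ad - \tau = \sumi \hat w^\pb_{(i)}(\hat\tau_i - \theta_i) + \left(\sumi \hat w^\pb_{(i)}\theta_i - \tau\right).
\]
The first term is a weighted average of mean-zero study errors; under regularity condition~\ref{cond:reg_ad}, which bounds the weights and controls the variances of the $\hat\tau_i$, this term is $o_p(1)$, so the problem reduces to showing the bias term vanishes.

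Under the outcome route (Assumption~\ref{cond:ad_outcome}) I would combine cross-study exchangeability (Assumption~\ref{cond:ad}(a)) with the linear form $\E_\dpp\{Y(1)-Y(0)\mid X=x\}=B(x)^\T\tilde\lambda_4$ to obtain $\theta_i = \E_\dpp\{B(X)\mid G=i\}^\T\tilde\lambda_4$. Hence $\sumi \hat w^\pb_{(i)}\theta_i = \big(\sumi \hat w^\pb_{(i)}\E_\dpp\{B(X)\mid G=i\}\big)^\T\tilde\lambda_4$; replacing the population within-study means by the sample means $\bar B_i$ (consistent under~\ref{cond:reg_ad}) and invoking the balancing constraint $|\sumi \hat w^\pb_{(i)}\bar B_i - B^*|\le\delta$ together with $B^*\to\E_\dt\{B(X)\}$, the quantity converges to $\E_\dt\{B(X)\}^\T\tilde\lambda_4 = \tau$, where the final equality uses Assumption~\ref{cond::1}(c) to equate the conditional effect across the study and target populations. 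The slack from the tolerance is controlled by $\Vert\tilde\lambda_4\Vert_2\Vert\delta\Vert_2 = o(1)$.

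Under the weighting route (Assumption~\ref{cond:ad_weight}) I would instead drop the linearity of the effect and argue that the estimated weights converge to the true balancing weights $\tilde w_{(i)}$ satisfying the density-alignment condition of Assumption~\ref{cond:ad}(c). The key is the dual characterization of Program~\eqref{eq:AD}: its optimum has the thresholded form $\hat w^\pb_{(i)} = \rho'\{\bar B_i^\T\hat\lambda/c_i\}\,\mathbf 1\{\rho'\{\bar B_i^\T\hat\lambda/c_i\}\ge 0\}$ for a dual optimizer $\hat\lambda$, while Assumption~\ref{cond:ad_weight} asserts that $\tilde w_{(i)}$ takes exactly this form at the population value $\tilde\lambda_3$. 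I would then show $\hat\lambda\cp\tilde\lambda_3$ by a convex M-estimation argument: the sample dual objective converges to a strictly convex population dual with unique minimizer $\tilde\lambda_3$, so argmin continuity (a convexity/epi-convergence lemma) transfers convergence to the minimizers, whence $\hat w^\pb_{(i)}\cp\tilde w_{(i)}$. Consequently $\sumi \hat w^\pb_{(i)}\theta_i \cp \sumi \tilde w_{(i)}\theta_i = \tau$, the last equality being precisely the identification statement of Proposition~\ref{prop:ad_identification}.

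I expect the weighting route to be the main obstacle. Unlike the individual-level case, Program~\eqref{eq:AD} is indexed by studies rather than by an i.i.d. sample, so consistency of the dual variables must be argued in whatever asymptotic regime~\ref{cond:reg_ad} specifies (fixed $m$ with $n_i\to\infty$, or growing $m$); the non-negativity thresholding makes the weight map nonsmooth; and both the scaling factors $c_i$ and the plug-in means $\bar B_i$ are random, so uniform convergence of the dual objective must be established before argmin continuity applies. A secondary technical point, shared by both routes, is that identification through Assumption~\ref{cond:ad}(c) is an exact-balance ($\delta=0$) statement, so I must verify that a feasible weight vector exists asymptotically and that the active tolerance slack shrinks at a rate compatible with $\Vert\delta\Vert_2 = o(1/\Vert\tilde\lambda\Vert_2)$.
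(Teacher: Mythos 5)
Your decomposition and your outcome-model route coincide with the paper's own proof. The paper writes $\epsilon_i = \hat\tau_i - \tau_i$, kills the noise term by bounding $\bigl|\sumi \hat w^\pb_{(i)}\epsilon_i\bigr| \le \sup_i|\epsilon_i|$ (this uses only non-negativity and normalization of the weights, which is the clean way to avoid worrying about dependence between the random weights, e.g.\ through $c_i=\hat\sigma_i^2$, and the errors), and then gets $\sup_i|\epsilon_i| = O_p\{\sqrt{\log m/\inf_i n_i}\} = o_p(1)$ from the sub-Gaussian condition~\ref{cond:reg_ad}(g) plus a union bound. Under Assumption~\ref{cond:ad_outcome} the bias term is handled exactly as you describe: balancing constraint plus Cauchy--Schwarz gives $\Vert\tilde\lambda_4\Vert_2\Vert\delta\Vert_2 = o(1)$, the $\bar B_i - \tilde B_i$ discrepancy is controlled by a vector Bernstein bound (Lemma~\ref{lem:ad_wdiff_term2}), and $\tilde\lambda_4^\T B^* \to \tau$ via Assumption~\ref{cond::1}(c).

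The gap is in your weighting route, precisely at the step you flag as the main obstacle, and it is not merely technical. First, an argmin-continuity/epi-convergence argument for the dual of Program~\eqref{eq:AD} is not available in the paper's asymptotic regime: Assumption~\ref{cond:reg_ad}(c) and (f) have both $m\to\infty$ and $K$ growing with $n$ and $m$, so the dual parameter lives in a space of growing dimension and there is no fixed, strictly convex population objective to which the sample dual converges. The paper instead runs a quantitative localization argument (Lemma~\ref{lem:ad_wdiff_term1}, mirroring Lemma~\ref{lem:lambda_diff} in the ID case): it shows the dual objective strictly exceeds its value at $\tilde\lambda_3$ on the boundary of a ball of radius $r = C^*\sqrt{K^3 m \log m/n}$, using matrix Bernstein concentration (Lemma~\ref{lem:ad_onestep8.5}), a minimum-eigenvalue bound on $m^{-1}\sumi \bar B_i\bar B_i^\T$ (Lemma~\ref{lem:ad_onestep8.6}), and the uniform bound on $\sup_i\Vert\bar B_i - \tilde B_i\Vert_2$; convexity then places a dual solution $\lambda_3^\dag$ inside that ball. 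Second, even granting $\lambda_3^\dag \cp \tilde\lambda_3$, your conclusion ``$\hat w^\pb_{(i)}\cp\tilde w_{(i)}$, consequently $\sumi\hat w^\pb_{(i)}\theta_i \cp \sumi\tilde w_{(i)}\theta_i$'' does not follow when $m\to\infty$: each weight is $O(1/m)$, so per-study convergence says nothing about the sum. What the paper actually needs and proves (Lemma~\ref{lem:w_cnvg_ad}) is the uniform statement $\sup_i|m\hat w^\pb_{(i)} - m\tilde w_{(i)}| = o_p(1)$, after which $\bigl|\sumi(\tilde w_{(i)}-\hat w^\pb_{(i)})\hat\tau_i\bigr| \le \sup_i|m\hat w^\pb_{(i)} - m\tilde w_{(i)}|\,\bigl|\sumi\hat\tau_i/m\bigr| = o_p(1)$, and the identification step $\sumi\tilde w_{(i)}\tau_i = \tau$ closes the argument as in Proposition~\ref{prop:ad_identification}. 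So your skeleton is the paper's skeleton, but the crux --- dual consistency at the correct uniform scale in a growing-dimension, growing-$m$ regime --- requires the concentration-plus-localization machinery rather than a soft M-estimation argument.
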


\subsection{Notation table for optimization framework}
\renewcommand{\arraystretch}{1.3} 
\begin{singlespacing}
\begin{table}[htbp]
\caption{Notation for optimization framework}
\label{tab_notation}
\begin{center}
\begin{tabular}{r c l}
\hline
$\mathcal{D}(w_z)$ & $\triangleq$ & dispersion of the weights, $\sumz \psi (w_{z,ij})$\\
$\mathcal{D}^{l_2}(w_z)$ & $\triangleq$ & squared $l_2$ norm of the weights, $\sumz w^2_{z,ij}$\\
$\mathcal{D}(w;c)$ & $\triangleq$ & weighted dispersion of the weights, $\sumi c_i \psi(w_{(i)})$\\
$\A^+$ & $\triangleq$ & $\sumijz w_{z,ij} = 1; \, w_{z,ij} \geq 0$\\
$\A^-$ & $\triangleq$ & $\sumijz w_{z,ij} = 1$\\
$\mathcal{B}_\id^*$ & $\triangleq$ & 
$\left\{
\begin{array}{l}
\left| \sumijz w_{z,ij} B_k(x_{ij}) - B_k^* \right| \leq \delta_k \quad \text{for } k \in A \\[0.5em]
\left| \sum_{j:Z_{ij}=z} w_{z,ij} \{ B_k(x_{ij}) - B_k^* \}\right| \leq \delta_k \quad \text{for } k \in W, \, i=1,\ldots,m
\end{array}
\right.$ \\
$\mathcal{B}^\ols_\id$ & $\triangleq$ & 
$\left\{
\begin{array}{l}
\sum_{i=1}^m \sum_{j:Z_{ij}=z} w_{z,ij} X_{ij,l} = 0 \quad \text{for } l \in C \\[0.5em]
\sum_{j:Z_{ij}=z} w_{z,ij} X_{ij,l} = 0 \quad \text{for } l \in F, \, i=1,\ldots,m
\end{array}
\right. $ \\
$\mathcal{B}_\ad^*$ & $\triangleq$ & $\left|\sum_{i} w_{(i)}  \bar B_i - B^*\right| \leq \delta$\\
$\psi(\cdot)$ & $\triangleq$ & a convex function of the weights\\
$c_i$ & $\triangleq$ & scaling factor for study $i$\\
$B(\cdot)$ & $\triangleq$ & $K$ basis functions of the covariates \\
$B^*$ & $\triangleq$ & target covariate profile\\
$\delta$ & $\triangleq$ & tolerance level to control imbalance\\
$C$ & $\triangleq$ & set of covariates for common effects\\
$F$ & $\triangleq$ & set of covariates for fixed effects\\
$A$ & $\triangleq$ & set of covariates for across-study balance\\
$W$ & $\triangleq$ & set of covariates for within-study balance\\
\hline
\end{tabular}
\end{center}
\end{table}
\end{singlespacing}

\pagebreak

\subsection{Additional details and results from the simulation study}
\label{sec:sm_simulation}
This section provides full details of the simulation design and additional results.

In the fully overlapping case, the data generation process is adapted from \citet{dahabreh2023efficient} and involves the following six steps:
\begin{itemize}
    \item[1. ] Generation of covariates: three covariates $X = (X_1, X_2, X_3)^\T$ are drawn from a mean-zero multivariate normal distribution with all marginal variances equal to 1 and all pairwise correlations equal to 0.5.
    \item[2.] Selection for trial participation: consider three trials ($m=3$), select units into the study population using a logistic-linear model
    \begin{eqnarray*}
        \Pr(S=1 \mid X) = \frac{\exp(\beta_0 + \beta^\T X)}{1+\exp(\beta_0 + \beta^\T X)},
    \end{eqnarray*}
    where $\beta = (\ln(2), \ln(2), \ln(2))^\T$. The intercept $\beta_0$ is solved to achieve the desired size of the study population $n$.
    The rest of the units will be in the target population.
    \item[3.] Allocation of trial participants to specific trials: select units into three trials by multinomial logistic model, $G \mid (X, S=1) \sim \text{Multinomial} (p_1, p_2, p_3; n)$ with
    \begin{eqnarray*}
    p_1 &=& \Pr(G=1 \mid X, S=1) = 1- p_2 - p_3, \\
    p_2 &=& \Pr(G=2 \mid X, S=1) = \frac{\exp(\zeta_0 + \zeta^\T X)}{1 + \exp(\zeta_0 + \zeta^\T X) + \exp(\tilde\zeta_0 + \tilde\zeta^\T X)}, \\
    p_3 &=& \Pr(G=3 \mid X, S=1) = \frac{\exp(\tilde\zeta_0 + \tilde\zeta^\T X)}{1 + \exp(\zeta_0 + \zeta^\T X) + \exp(\tilde\zeta_0 + \tilde\zeta^\T X)},
    \end{eqnarray*}
    with $\zeta = (\ln(1.5), \ln(1.5), \ln(1.5))^\T$ and $\tilde\zeta = (\ln(0.75), \ln(0.75), \ln(0.75))^\T$. The intercepts $\zeta_0$ and $\tilde\zeta_0$ are solved to achieve approximately equal-sized or unequal-sized trials with a 4:2:1 ratio of sample sizes.
    \item[4.] Random treatment assignment: generate treatment for units in three trials using Bernoulli distribution. 
    In one scenario, the treatment assignment mechanism was marginally randomized and constant across trials, $\Pr(Z=1 \mid G=i) = 1/2$ for $i \in \{1,2,3\}$.
    In the second scenario, the treatment assignment mechanism varied, with probabilities $\Pr(Z=1 \mid G=1) = 1/2, \Pr(Z=1 \mid G=2) = 1/3, \Pr(Z=1 \mid G=3) = 2/3$.
    \item[5.] Generation of potential outcomes: generate potential outcomes as $Y(z) = \theta_{z,0} +  \theta_z^\T X + \epsilon_z$ for $z \in \{0,1\}$, with $\epsilon_z$ iid standard normal, $\theta_{0,0} = 1.5, \theta_{1,0} = 0.5$, $\theta_0 = (1,1,1)^\T$ and $\theta_1 = (-1, -1, -1)^\T$. Then observed outcomes can then be generated based on consistency $Y = ZY(1) + (1-Z)Y(0)$.
\end{itemize}
In each simulated dataset, we applied five estimators: (i) the proposed personalized and sample-bounded estimator $\hat\tau^\pb_\id$ obtained from optimization problem~\eqref{eq:PBM0}, 
(ii) the personalized and unbounded estimator $\hat\tau^\pu_\id$ obtained from optimization problem~\eqref{eq:PBM0} without nonnegativity constraints, 
(iii) the g-formula estimator $\hat\tau^\gf$,
(iv) the weighting estimator $\hat\tau^\inv$ weighted by the inverse of the product of the estimated propensity score and the estimated probability of being selected into the study population,
(v) the augmented estimator $\hat\tau^\aug$ based on g-formula and weighting estimator.
The last three estimators are discussed in~\eqref{eq:PBM0}.
Note that the estimation of $\hat\tau^\inv$ and $\hat\tau^\aug$ require individual covariate information from the target population, while our estimator $\hat\tau^\pb_\id$ doesn't.
For the estimators based on optimization framework ($\hat\tau^\pb_\id$ and $\hat\tau^\pu_\id$), the dispersion measure is the $l_2$ norm of the weights, and exact balance is required, i.e., $\delta = 0$.
In this setting, the unbounded estimator $\hat\tau^\pu_\id$ and the g-formula estimator $\hat\tau^\gf$ are mathematically equivalent, so we omit results for $\hat\tau^\gf$.

In the partially overlapping case, we made two specific modifications to the original simulation design.
\begin{itemize}
    \item [$2^*$.] Selection for trial participation: first, set the target support to be
    \begin{eqnarray*}
        V = \left\{X: \frac{\exp(\beta_0 + \beta^\T X)}{1+\exp(\beta_0 + \beta^\T X)} \leq \omega \right\}.
    \end{eqnarray*}
    Select units for participation in any trial using a truncated logistic-linear model
    \begin{eqnarray*}
        \Pr(S=1 \mid X) = \left\{
        \begin{aligned}
        \frac{\exp(\beta_0 + \beta^\T X)}{1+\exp(\beta_0 + \beta^\T X)}, \quad \text{if $X \in V$} \\
        1, \quad \text{if $X \not\in V$}
        \end{aligned}
        \right.
    \end{eqnarray*}
    \item [$5^*$.] Generation of potential outcomes: use different potential outcome models for units with $ X \in V$ and $X \not\in V$. Specifically, let
    \begin{eqnarray*}
    Y(1) &=& \left\{
    \begin{aligned}
        & 0.5 - X_1 - X_2 -  X_3 +  \epsilon, \quad &\text{if $X \in V$} \\
        & 0.5 - X_1 - X_2 -  X_3  + \epsilon + 0.5 \{\beta^\T X + \log(1/\omega -1) \}, \quad &\text{if $X \not\in V$}
        \end{aligned}  \right. \\
    Y(0) &=& \left\{
    \begin{aligned}
        &1.5 + X_1 + X_2 +  X_3 + \epsilon, \quad &\text{if $X \in V$} \\
        &1.5 + X_1 + X_2 +  X_3 + \epsilon - 0.5 \{ \beta^\T X + \log(1/\omega -1) \}, \quad &\text{if $X \not\in V$}
        \end{aligned}  \right. 
    \end{eqnarray*}
    The outcome model is continuous on the boundary of the target support $V$.
\end{itemize}

Table~\ref{tbl:sim1} and~\ref{tbl:sim2.2} compare the performance of all estimators in the fully overlapping and partially overlapping settings, respectively.
Across all scenarios, the standard deviations follow the same ordering with ${\text{sd}}(\hat\tau^\pu_\id) < {\text{sd}}(\hat\tau^\pb_\id) < {\text{sd}}(\hat\tau^\aug) < {\text{sd}}(\hat\tau^\inv)$. 
In particular, both $\hat\tau^\pu_\id$ and $\hat\tau^\pb_\id$ have substantially lower variability compared to the other two estimators, which shows the stability of the one-step balancing weight strategy.
The standard deviation of $\hat\tau^\pu_\id$ is slightly smaller compared to $\hat\tau^\pb_\id$. This result is consistent with Theorem~\ref{thm:bounded} as the bounded method is equivalent to the unbounded method applied to a selected subset of the data.
For inference, we apply the plug-in variance estimator $\hat V^\plugin_\id$ to construct confidence intervals for $\hat\tau^\pb_\id$, and the coverage rates remain close to 95\% across all scenarios.

\begin{table}[!htbp]
\centering
\begin{adjustbox}{width=\linewidth}
\begin{tabular}{ccccccccccccc}
\toprule[2pt]
n    & Balance & $Z$   & \multicolumn{2}{c}{$\hat\tau^\inv$} & \multicolumn{2}{c}{$\hat\tau^\pu_\id$} & \multicolumn{2}{c}{$\hat\tau^\aug$} & \multicolumn{4}{c}{$\hat\tau^\pb_\id$}                                              \\ \cline{4-13}            &          & varies & Bias             & SD              & Bias              & SD               & Bias              & SD               & Bias    & SD     & \multicolumn{1}{l}{CI length} & \multicolumn{1}{l}{coverage} \\ \hline
1000 & Yes      & Yes    & -0.0895          & 0.8001          & 0.0006            & 0.1272           & -0.0004           & 0.2137           & -0.0003 & 0.1573 & 0.6168                         & 0.9488                       \\
1000 & Yes      & No     & -0.1301          & 0.7345          & 0.0006            & 0.1271           & -0.0007           & 0.1975           & 0.0000  & 0.1529 & 0.5992                         & 0.9501                       \\
1000 & No       & Yes    & -0.1235          & 0.7578          & 0.0005            & 0.1274           & 0.0005            & 0.2027           & 0.0004  & 0.1541 & 0.6040                         & 0.9521                       \\
1000 & No       & No     & -0.1310          & 0.7373          & 0.0004            & 0.1267           & -0.0022           & 0.1995           & -0.0012 & 0.1539 & 0.6032                         & 0.9492                       \\
2000 & Yes      & Yes    & -0.0115          & 0.5665          & -0.0004           & 0.0944           & 0.0013            & 0.1504           & 0.0003  & 0.1099 & 0.4310                         & 0.9527                       \\
2000 & Yes      & No     & -0.0574          & 0.5117          & 0.0001            & 0.0936           & -0.0002           & 0.1396           & -0.0002 & 0.1077 & 0.4223                         & 0.9510                       \\
2000 & No       & Yes    & -0.0504          & 0.5278          & 0.0008            & 0.0936           & 0.0011            & 0.1423           & 0.0009  & 0.1087 & 0.4260                         & 0.9485                       \\
2000 & No       & No     & -0.0550          & 0.5135          & 0.0018            & 0.0938           & 0.0028            & 0.1398           & 0.0018  & 0.1085 & 0.4254                         & 0.9472                       \\
5000 & Yes      & Yes    & 0.0255           & 0.3157          & 0.0002            & 0.0746           & -0.0004           & 0.0994           & -0.0001 & 0.0809 & 0.3172                         & 0.9504                       \\
5000 & Yes      & No     & -0.0155          & 0.2745          & 0.0009            & 0.0738           & -0.0005           & 0.0957           & 0.0011  & 0.0797 & 0.3124                         & 0.9497                       \\
5000 & No       & Yes    & -0.0074          & 0.2851          & 0.0011            & 0.0739           & 0.0002            & 0.0959           & 0.0011  & 0.0796 & 0.3121                         & 0.9536                       \\
5000 & No       & No     & -0.0159          & 0.2741          & 0.0002            & 0.0743           & -0.0006           & 0.0949           & -0.0002 & 0.0800 & 0.3137                         & 0.9501               
    \\     \bottomrule[2pt]
\end{tabular}
\end{adjustbox}
\caption{Bias and standard deviation (SD) of all estimators, and confidence interval (CI) length and coverage of $\hat\tau^\pb_\id$, in the \textbf{fully overlapping} case based on 10000 simulation runs. }
\begin{tablenotes}
\footnotesize
\item The total sample size ($n+n^*$) is 10,000.
In the column labeled Balance, Yes denotes scenarios where the trials have equal sample sizes on average; No denotes scenarios with unequal trial sample sizes. In the column labeled $Z$ varies, Yes denotes scenarios where the treatment assignment mechanism varies across trials.
\end{tablenotes}
\label{tbl:sim1}
\end{table}

\begin{table}[!htbp]
\centering
\begin{adjustbox}{width=\linewidth}
\begin{tabular}{ccccccccccccc}
\hline
n    & Balance & $Z$   & \multicolumn{2}{c}{$\hat\tau^\inv$} & \multicolumn{2}{c}{$\hat\tau^\pu_\id$} & \multicolumn{2}{c}{$\hat\tau^\aug$} & \multicolumn{4}{c}{$\hat\tau^\pb_\id$}                                              \\ \cline{4-13} 
     &          & varies & Bias             & SD              & Bias              & SD               & Bias              & SD               & Bias    & SD     & \multicolumn{1}{l}{CI length} & \multicolumn{1}{l}{coverage} \\ \hline
1000 & Yes      & Yes    & 3.5820           & 1.6362          & -0.6916           & 0.1377           & 0.5559            & 0.6435           & -0.0017 & 0.1582 & 0.6966                         & 0.9465                       \\
1000 & Yes      & No     & 3.5815           & 1.6189          & -0.6824           & 0.1365           & 0.5488            & 0.6247           & -0.0016 & 0.1561 & 0.6808                         & 0.9503                       \\
1000 & No       & Yes    & 3.5858           & 1.6263          & -0.6835           & 0.1359           & 0.5543            & 0.6336           & -0.0026 & 0.1537 & 0.6836                         & 0.9502                       \\
1000 & No       & No     & 3.5815           & 1.6215          & -0.6816           & 0.1362           & 0.5508            & 0.6233           & 0.0000  & 0.1546 & 0.6797                         & 0.9471                       \\
2000 & Yes      & Yes    & 1.8251           & 1.0583          & -0.4633           & 0.0979           & 0.2268            & 0.3058           & -0.0010 & 0.1100 & 0.4548                         & 0.9507                       \\
2000 & Yes      & No     & 1.7787           & 1.0168          & -0.4567           & 0.0961           & 0.2141            & 0.2850           & -0.0036 & 0.1074 & 0.4453                         & 0.9508                       \\
2000 & No       & Yes    & 1.7858           & 1.0206          & -0.4560           & 0.0961           & 0.2166            & 0.2913           & -0.0022 & 0.1070 & 0.4445                         & 0.9521                       \\
2000 & No       & No     & 1.7830           & 1.0125          & -0.4552           & 0.0960           & 0.2175            & 0.2805           & -0.0017 & 0.1087 & 0.4451                         & 0.9503                       \\
5000 & Yes      & Yes    & 0.2905           & 0.3793          & -0.2598           & 0.0723           & 0.0253            & 0.1130           & 0.0002  & 0.0809 & 0.3171                         & 0.9486                       \\
5000 & Yes      & No     & 0.2458           & 0.3270          & -0.2552           & 0.0718           & 0.0205            & 0.1053           & -0.0002 & 0.0799 & 0.0802                         & 0.9508                       \\
5000 & No       & Yes    & 0.2533           & 0.3402          & -0.2545           & 0.0722           & 0.0215            & 0.1082           & 0.0003  & 0.0806 & 0.3163                         & 0.9488                       \\
5000 & No       & No     & 0.2452           & 0.3277          & -0.2548           & 0.0720           & 0.0196            & 0.1046           & 0.0001  & 0.0803 & 0.3147                         & 0.9463                      \\     
\hline
\end{tabular}
\end{adjustbox}
\caption{Bias and standard deviation (SD) of all estimators, and confidence interval (CI) length and coverage of $\hat\tau^\pb_\id$, in the \textbf{partially overlapping} case based on 10000 simulation runs. }
\begin{tablenotes}
\footnotesize
\item The total sample size ($n+n^*$) is 10,000.
In the column labeled Balance, Yes denotes scenarios where the trials have equal sample sizes on average; No denotes scenarios with unequal trial sample sizes. In the column labeled $Z$ varies, Yes denotes scenarios where the treatment assignment mechanism varies across trials.
\end{tablenotes}
\label{tbl:sim2.2}
\end{table}


\subsection{Additional details and results from the case studies}

Figure~\ref{fig:haltc_diagnostics} presents two diagnostic plots comparing the personalized and sample-bounded estimator $\hat \tau^\pb_\id$ and the personalized and unbounded estimator $\hat \tau^\pu_\id$. Due to data use restrictions, diagnostics are shown using only summary-level measures.
The left panel displays the distributions of the weights. The unbounded weights exhibit a lower overall dispersion but can take negative values.
In contrast, the bounded weights are strictly nonnegative, as required by the sample-boundedness constraint, and may yield larger extreme values to achieve balance.
The right panel illustrates the representativeness of treatment group and control group in all 18 covariates, measured by absolute standardized mean difference from target profile.
Since the tolerance level $\delta$ is set to be zero, both unbounded and bounded weights exactly recover the target profile.

\begin{figure}[!htbp]
    \centering
    \includegraphics[width=0.75\linewidth]{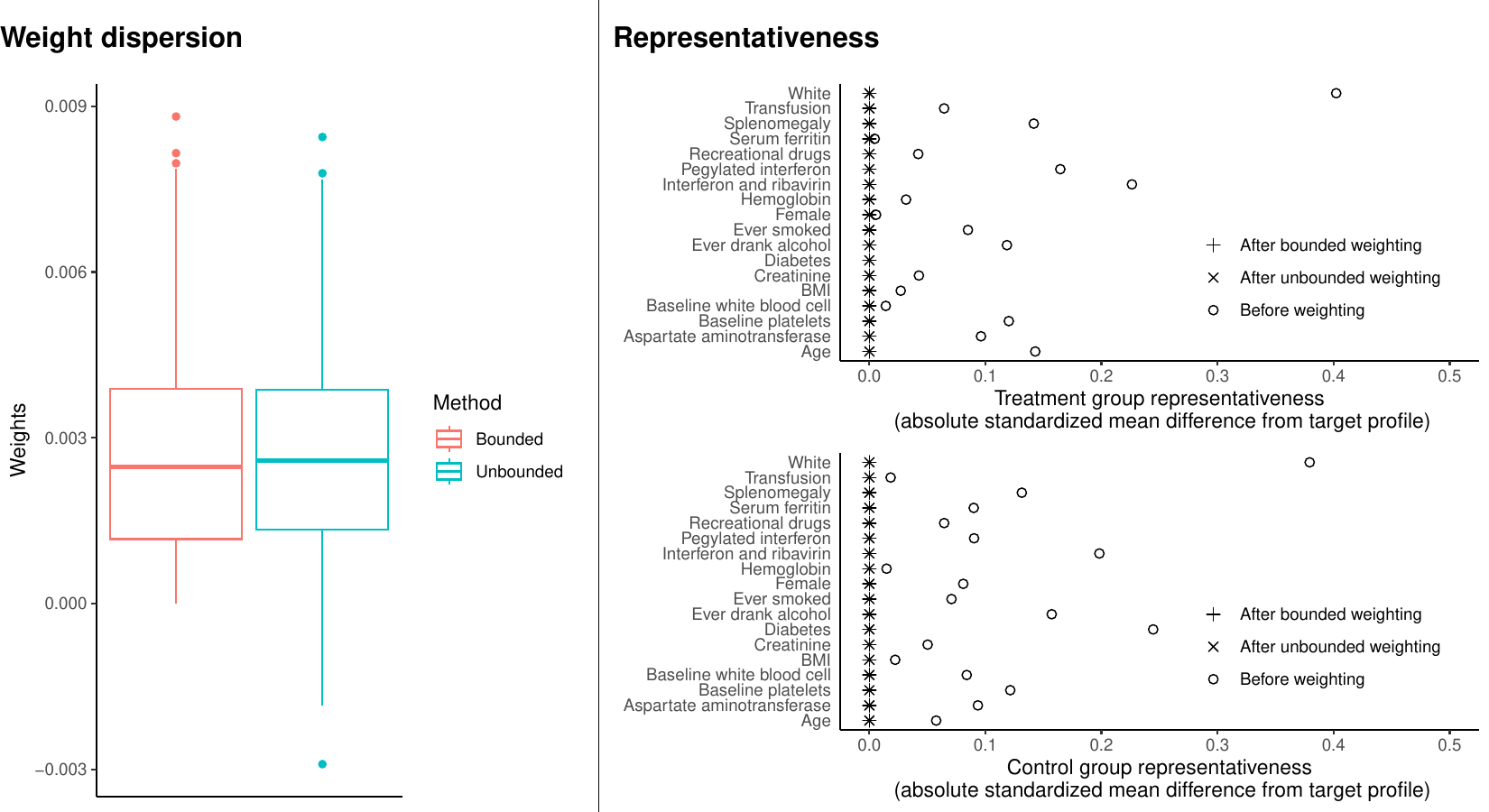}
    \caption{Diagnostics of unbounded and bounded weights for the HALT-C trial, targeting the population from the largest research center.
    The left panel shows the distribution of estimated weights across individuals. The right panel evaluates representativeness by plotting the absolute standardized mean differences between the weighted covariate distributions and the target profile, separately for treatment and control groups.}
    \label{fig:haltc_diagnostics}
\end{figure}

Table~\ref{tbl:haltc_patients} below provides the specific attributes of the three hypothetical HALT-C patients.
\begin{table}[htbp!]
\centering
\caption{Covariate profiles for three hypothetical HALT-C patients}
\begin{tabular}{lccc}
\toprule
\textbf{Covariate} & \textbf{Patient 1} & \textbf{Patient 2} & \textbf{Patient 3} \\
\midrule
Baseline platelet count & 165 & 220 & 110 \\
Age & 55 & 38 & 61 \\
Sex (Female) & 0 & 1 & 0 \\
Previous use of pegylated interferon & 1 & 1 & 0 \\
Race (White) & 1 & 0 & 0 \\
White blood cell count & 6.2 & 5.4 & 5.1 \\
History of injected recreational drug use & 0 & 0 & 1 \\
Transfusion history & 1 & 0 & 0 \\
Body mass index (BMI) & 34.2 & 24.1 & 28.7 \\
Creatinine level & 0.9 & 0.8 & 1.1 \\
Ever smoked & 1 & 0 & 1 \\
Previous use of combination therapy & 1 & 1 & 0 \\
Diabetes & 1 & 0 & 0 \\
Serum ferritin (ng/mL) & 420 & 180 & 600 \\
Splenomegaly (Ultrasound evidence) & 0 & 0 & 1 \\
Ever consumed alcohol & 1 & 0 & 1 \\
Hemoglobin (g/dL) & 13.1 & 14.8 & 12.2 \\
Aspartate aminotransferase (AST, U/L) & 85 & 42 & 132 \\
\bottomrule
\end{tabular}
\label{tbl:haltc_patients}
\end{table}

\subsection{Proofs}
\label{sec:proof}
In this section, we provide proofs of the results presented in the main text.
For clarity, we introduce the following notation.
Let $0_a$ and $0_{a \times b} $ denote the $a \times 1$ vector and $a \times b$ matrix of zeros, and let $1_a$ and $1_{a \times b} $ denote the $a \times 1$ vector and $a \times b$ matrix of ones. Let $I_a$ denote the $a \times a$ identity matrix.
Let $|A|$ denote the cardinality of the set $A$.
For a set of tuples $\{(u_i, v_{i1}, \ldots, v_{iL}): u_i \in \mathbb R,  \ v_{il} \in \mathbb R^{K_l}, \ l = 1,\ldots,L\}$, denote by $\lmt(u_i \sim v_{i1}+\cdots+v_{iL})$ the linear regression of $u_i$ on $(v_{i1}, \ldots, v_{iL})$ following the convention of R software.  We allow each $v_{il}$ to be a scalar or a vector, and use $+$ or $\sum$ to denote concatenation of regressors.
\subsubsection{Proof of Proposition~\ref{prop:ad_identification}}
\begin{proof}
[Proof of Proposition~\ref{prop:ad_identification}]
We first show that $\E[\sumi \tilde w_{(i)} \hat\tau_i] = \tau$  holds if $\sumi \tilde w_{(i)} p_{(i)}(x) = t(x)$ for all $x \in \supp(\dpp_X)$. By Assumption~\ref{cond:ad} and Assumption~\ref{cond::1}(c), we have
\begin{eqnarray*}
\E \left[\sumi \tilde w_{(i)} \hat\tau_i \right]   &=&  
\sumi \tilde w_{(i)} \E_\dpp [Y(1) - Y(0) \mid G=i] \\
&=& \sumi \tilde w_{(i)} \int \E_\dpp [Y(1) - Y(0) \mid x] p_{(i)}(x) dx \\
&=&  \int \E_\dpp [Y(1) - Y(0) \mid x] \left( \sumi \tilde w_{(i)} p_{(i)}(x) \right) dx \\
&=&  \int \E_\dt [Y(1) - Y(0) \mid x] t(x) dx = \E_{\dt} [Y(1) - Y(0) ] = \tau.
\end{eqnarray*}
We then show that if $\E[\sumi \tilde w_{(i)} \hat\tau_i] = \tau$ holds for all $\{ Y(1), Y(0)\} \in \mathcal{Y}_\all$, the covariate density alignment condition $\sumi \tilde w_{(i)} p_{(i)}(x) = t(x)$ for all $x \in \supp(\dpp_X)$ must hold. 
For arbitrary Borel set $A$, set $Y_{ij}(1) = 1\{ X_{ij} \in A \}$ and $Y_{ij}(0) = 0$. By Assumption~\ref{cond:ad} and Assumption~\ref{cond::1}(c), we then have
\begin{eqnarray*}
\E \left[\sumi \tilde w_{(i)} \hat\tau_i \right] 
&=&  \sumi \tilde w_{(i)}    \E_\dpp [Y(1) - Y(0) \mid G=i] \\
&=& \sumi \tilde w_{(i)} \Pr_\dpp(X \in A \mid G=i), \\
\tau &=& \E_{\dt} [Y(1)-Y(0)]  = \Pr_\dt(X \in A).
\end{eqnarray*}
This, together with $\E[\sumi \tilde w_{(i)} \hat\tau_i] = \tau$, ensures that $\sumi \tilde w_{(i)} p_{(i)}(x) = t(x)$ for all $x \in \supp(\dpp_X)$.

\end{proof}

\subsubsection{Proof of Proposition~\ref{prop:one-stage}} 

First, we introduce a lemma from \citet[][Theorem 3]{chattopadhyay2023implied}, which shows that causal effect estimator from uni-regression imputation (URI) can be obtained from solutions to quadratic programming problems.
\begin{lemma} 
\label{lem::url_res}
Consider the linear regression $\lmt( Y_{ij} \sim  Z_{ij} + X_{ij})$.
Let $\hat{\tau}^\uri$ denote the computed coefficient for $Z_{ij}$. 
The URI estimator of the ATE can be expressed as $\hat{\tau}^\uri = \sum_{ij:Z_{ij}=1}  \hat w_{1,ij}^\uri Y_{ij} - \sum_{ij:Z_{ij}=0} \hat w^{\pb \uri }_{0,ij} Y_{ij}$, 
where the weights $\hat w_{z,ij}^\uri$ are solutions to the following optimization problem with $z\in \{0, 1\}$, respectively.
\begin{eqnarray}
\label{eq:url_quadprog}
\text{min}_w & \sum_{i=1}^m \sum_{j: Z_{ij=z}} {(w_{ij}- \bar w_z)^2} & \\
\text{s.t.} &  &\nonumber\\
& \sum_{i=1}^m \sum_{j:Z_{ij}=z} w_{ij}  X_{ij}  =  S_c (S_t + S_c)^{-1} \bar X_t + S_t (S_t + S_c)^{-1} \bar X_c, \quad & \nonumber\\
& \sum_{i=1}^m \sum_{j:Z_{ij}=z} w_{ij}  = 1. \quad &  \nonumber 
\end{eqnarray}
Here $\bar X_t = \frac{1}{n_1} \sum_{ij:Z_{ij}=1} X_{ij}$, $\bar X_c = \frac{1}{n_0} \sum_{ij:Z_{ij}=0} X_{ij}$, $S_t = \sum_{ij:Z_{ij}=1} (X_{ij} - \bar X_t) (X_{ij} - \bar X_t)^\T$, $\bar w_z = \frac{1}{n_z} \sum_{ij:Z_{ij}=z} w_{z,ij}$ and $S_c = \sum_{ij:Z_{ij}=0} (X_{ij} - \bar X_c) (X_{ij} - \bar X_c)^\T$.
\end{lemma}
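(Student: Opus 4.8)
\textbf{Proof proposal for Lemma~\ref{lem::url_res}.} The plan is to prove the representation via the Frisch--Waugh--Lovell (FWL) theorem, which produces the implied weights in closed form, and then to verify that these weights are exactly the Karush--Kuhn--Tucker (KKT) solution of the quadratic program~\eqref{eq:url_quadprog}. First I would apply FWL to $\lmt(Y_{ij}\sim Z_{ij}+X_{ij})$: let $\tilde Z_{ij}=Z_{ij}-\bar Z-b^\T(X_{ij}-\bar X)$ be the residual from regressing $Z_{ij}$ on $(1,X_{ij})$, where $b$ solves the normal equations and $\bar X,\bar Z$ are pooled means. FWL gives $\hat\tau^\uri=\langle\tilde Z,Y\rangle/\Vert\tilde Z\Vert^2=\sum_{ij}(\tilde Z_{ij}/\Vert\tilde Z\Vert^2)Y_{ij}$. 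Setting $\hat w^\uri_{1,ij}=\tilde Z_{ij}/\Vert\tilde Z\Vert^2$ on treated units and $\hat w^\uri_{0,ij}=-\tilde Z_{ij}/\Vert\tilde Z\Vert^2$ on control units immediately yields the claimed weighted-difference representation.

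Next I would verify feasibility. Orthogonality of the OLS residual to its regressors gives $\sum_{ij}\tilde Z_{ij}=0$ and $\sum_{ij}\tilde Z_{ij}X_{ij}=0$, while $\langle\tilde Z,Z\rangle=\Vert\tilde Z\Vert^2$. Combining these, $\sum_{Z=1}\tilde Z_{ij}=\Vert\tilde Z\Vert^2$ and $\sum_{Z=0}\tilde Z_{ij}=-\Vert\tilde Z\Vert^2$, so both normalization constraints $\sum_{Z=z}\hat w^\uri_{z,ij}=1$ hold. For the covariate-balance constraint, $\sum_{ij}\tilde Z_{ij}X_{ij}=0$ forces $\sum_{Z=1}\tilde Z_{ij}X_{ij}=-\sum_{Z=0}\tilde Z_{ij}X_{ij}$, so both arms are balanced to a \emph{common} target. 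Identifying that target with $S_c(S_t+S_c)^{-1}\bar X_t+S_t(S_t+S_c)^{-1}\bar X_c$ is the crux: I would substitute $b=S_{XX}^{-1}S_{XZ}$ and use the within/between decomposition $S_{XX}=S_t+S_c+\tfrac{n_1n_0}{n}(\bar X_t-\bar X_c)(\bar X_t-\bar X_c)^\T$ together with the identity $S_{XZ}=\tfrac{n_1n_0}{n}(\bar X_t-\bar X_c)$, after which $\sum_{Z=1}\tilde Z_{ij}X_{ij}/\Vert\tilde Z\Vert^2$ simplifies to the stated precision-weighted mean.

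Then I would check optimality. Forming the Lagrangian of~\eqref{eq:url_quadprog} and differentiating, the cross terms from $\bar w_z$ vanish because $\sum_{Z=z}(w_{ij}-\bar w_z)=0$, so the stationarity condition reduces to $w_{ij}-\bar w_z=\tfrac12(\mu^\T X_{ij}+\nu)$; that is, the weight deviations must be affine in $X_{ij}$. Within each arm $Z_{ij}$ is constant, so $\tilde Z_{ij}$ is affine in $X_{ij}$, and hence the FWL weights satisfy stationarity automatically, with the multipliers $(\mu,\nu)$ read off from the coefficient $b$ and the intercept. Since the centered quadratic objective is strictly convex on the affine feasible set once normalization pins down the mean, the KKT point is the unique minimizer, so the FWL weights coincide with the solution of~\eqref{eq:url_quadprog}, completing the argument.

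The main obstacle is the second step: algebraically confirming that the FWL-weighted covariate mean equals the specific precision-weighted target. This rests on the scatter-matrix decomposition and careful bookkeeping with the regression normal equations, whereas verifying the normalization constraints and the optimality (affineness) condition is comparatively routine once the residual-orthogonality identities are established.
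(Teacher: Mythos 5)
Your proposal is correct, but the comparison here is somewhat one-sided: the paper does not prove this lemma at all. It imports it verbatim as Theorem 3 of Chattopadhyay and Zubizarreta (2023), exactly as it imports the matrix Bernstein inequality from Tropp, and then builds the proof of Proposition~\ref{prop:one-stage} on top of it. So your blind attempt supplies a self-contained derivation where the paper has only a citation. Your route --- Frisch--Waugh--Lovell to obtain the closed-form implied weights $\hat w^\uri_{1,ij}=\tilde Z_{ij}/\Vert\tilde Z\Vert_2^2$ and $\hat w^\uri_{0,ij}=-\tilde Z_{ij}/\Vert\tilde Z\Vert_2^2$, followed by verification that these weights are feasible for, and satisfy the KKT conditions of, the equality-constrained QP~\eqref{eq:url_quadprog} --- is essentially the strategy of the cited source, and every step you outline checks out. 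The orthogonality identities $\sum_{ij}\tilde Z_{ij}=0$, $\sum_{ij}\tilde Z_{ij}X_{ij}=0$, $\langle\tilde Z,Z\rangle=\Vert\tilde Z\Vert_2^2$ do give both normalization constraints and force a common balance target across arms; stationarity does reduce to the weight deviations being affine in $X_{ij}$ (the $\bar w_z$ cross terms cancel), which holds because $\tilde Z_{ij}$ is affine in $X_{ij}$ within each arm; and since normalization pins $\bar w_z=1/n_z$ on the feasible set, the objective there equals $\sum_{ij:Z_{ij}=z}w_{ij}^2-1/n_z$, which is strictly convex, so the KKT point is the unique minimizer. The one step you only sketch --- identifying the target --- indeed goes through with the ingredients you name: with $d=\bar X_t-\bar X_c$ and $\kappa=n_1n_0/n$, one has $S_{XZ}=\kappa d$ and $S_{XX}=S_t+S_c+\kappa dd^\T$, a Sherman--Morrison inversion gives $b=\kappa S_{XX}^{-1}d=\kappa(S_t+S_c)^{-1}d/\{1+\kappa d^\T(S_t+S_c)^{-1}d\}$, and substituting into $\sum_{Z_{ij}=1}\tilde Z_{ij}X_{ij}/\Vert\tilde Z\Vert_2^2$ collapses it to $\bar X_t-S_t(S_t+S_c)^{-1}d=S_c(S_t+S_c)^{-1}\bar X_t+S_t(S_t+S_c)^{-1}\bar X_c$ as required. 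What your approach buys is transparency: it exhibits the weights explicitly as rescaled treatment residuals, which makes the feasibility and affineness claims immediate; what the paper's citation buys is brevity and a pointer to the general result (which in the source also covers multi-regression variants used implicitly in the extension to model~\eqref{eq:one-stage}).
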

Since more complex regressions can be treated as special cases of URI by concatenating all the regressors into a new covariate $X$, we can extend the results in Lemma~\ref{lem::url_res} to one-stage ID meta-analysis.

\begin{proof}[Proof of Proposition~\ref{prop:one-stage}]
Recall that Proposition~\ref{prop:one-stage} targets at one-stage ID meta-analysis method that assumes a mixed effect model
as shown below
\begin{eqnarray*}
Y_{ij} = \phi + \tau Z_{ij} + \sum_{l \in F} \left\{ \beta_{i,l} X_{ij,l} +  \gamma_{i,l} X_{ij,l} Z_{ij} \right\} + \sum_{l \in C} \left\{ \beta_{l} X_{ij,l} +  \gamma_{l} X_{ij} Z_{ij} \right\} + \epsilon_{ij}.
\end{eqnarray*}
Denote $G_{ij,i^\prime}$ as an indicator variable that shows whether patient $ij$ belongs to study $i^\prime$, such that $G_{ij,i^\prime} = 1\{ i=i^\prime\}$. 
Using this notation, we can rewrite the model as
\begin{eqnarray}
\label{eq::thm1_one_stage}
Y_{ij} &=& \phi + \tau Z_{ij} + \sum_{l \in F} \left\{ \sum_{i^\prime} \beta_{i^\prime,l} G_{ij,i^\prime} X_{ij,l} +  \sum_{i^\prime} \gamma_{i^\prime,l} G_{ij,i^\prime} X_{ij,l} Z_{ij} \right\} \\
&& + \sum_{l \in C} \left\{ \beta_{l} X_{ij,l} +  \gamma_{l} X_{ij,l} Z_{ij} \right\} + \epsilon_{ij}. \nonumber
\end{eqnarray}
The estimator $\hat\tau^\ols_\id$ for fitting~\eqref{eq::thm1_one_stage} can be obtained by computing the coefficient for $Z_{ij}$ from the linear regression shown below
\begin{eqnarray}
\label{eq:one_stage_lm1}
\lmt \left( Y_{ij} \sim Z_{ij} + \sum_{l \in F}  \sum_{i^\prime} G_{ij,i^\prime} X_{ij,l} +
\sum_{l \in C} X_{ij,l}
+  \left\{\sum_{l \in F}  \sum_{i^\prime} G_{ij,i^\prime} X_{ij,l} +
\sum_{l \in C} X_{ij,l} \right\}Z_{ij}  \right).
\end{eqnarray}
Define $\tilde X_{ij}$ as
\begin{eqnarray*}
\tilde{X}_{ij} = \big(
G_{ij,1} X_{ij,f_1}, \ldots, G_{ij,m} X_{ij,f_1}, 
G_{ij,1} X_{ij,f_2}, \ldots, G_{ij,m} X_{ij,f_{|F|}}, 
X_{ij,c_1}, \ldots, X_{ij,c_{|C|}}
\big).
\end{eqnarray*}
Here $F = \{f_1, f_2, \ldots, f_{|F|}\}$ is the set of covariates that have fixed effects and $C = \{c_1, c_2, \ldots, c_{|C|}\}$ is the set of covariates that have common effects.
Let $\underline{\tilde X}_{ij} = (\tilde{X}_{ij}^\T, Z_{ij} \tilde{X}_{ij}^\T)^\T$.
Then $\underline{\tilde X}_{ij}$ is the concatenation of all the regressors except for treatment $Z_{ij}$ in \eqref{eq:one_stage_lm1}.
The linear regression formula~\eqref{eq:one_stage_lm1} can be re-written as
\begin{eqnarray}
\lmt \left( Y_{ij} \sim Z_{ij} + \underline{\tilde X}_{ij}  \right).
\end{eqnarray}
According to Lemma~\ref{lem::url_res}, the causal effect estimator from regression~\eqref{eq::thm1_one_stage} can be expressed as $\hat{\tau}^\ols_\id = \sum_{ij:Z_{ij}=1} \hat w_{1,ij}^\ols Y_{ij} - \sum_{ij:Z_{ij}=0} \hat w^\ols_{0,ij} Y_{ij}$, where the weights $\hat w_{z, ij}^\ols$ are solutions to the following optimization problem with $z \in \{0,1\}$, respectively.
\begin{eqnarray}
\label{eq:qp_one_stage}
\text{min}_w & \sum_{i=1}^m \sum_{j: Z_{ij=z}} {(w_{ij}- \bar w_z)^2} & \\
\text{s.t.} &  &\nonumber\\
& \sum_{i=1}^m \sum_{j:Z_{ij}=z} w_{ij}  \underline{\tilde {X}}_{ij}  =  \underline{\tilde S}_c (\underline{\tilde S}_t + \underline{\tilde S}_c)^{-1} \underline{\overline{\tilde X}}_t + \underline{\tilde S}_t (\underline{\tilde S}_t + \underline{\tilde S}_c)^{-1}  \underline{\overline{\tilde X}}_c, \quad & \nonumber\\
& \sum_{i=1}^m \sum_{j:Z_{ij}=z} w_{ij}  = 1, \quad &  \nonumber 
\end{eqnarray}
with $\underline{\overline{\tilde X}}_t = \frac{1}{n_1} \sumt \underline{\tilde {X}}_{ij}$, $\underline{\overline{\tilde X}}_c = \frac{1}{n_0} \sumc \underline{\tilde {X}}_{ij}$, $\underline{\tilde S}_t = \sumt (\underline{\tilde {X}}_{ij} - \underline{\overline{\tilde X}}_t) (\underline{\tilde {X}}_{ij} - \underline{\overline{\tilde X}}_t)^\T$ and $\underline{\tilde S}_c = \sumc (\underline{\tilde {X}}_{ij} - \underline{\overline{\tilde X}}_c) (\underline{\tilde {X}}_{ij} - \underline{\overline{\tilde X}}_c)^\T$.
It then suffices to show that the constraints in optimization problem~\eqref{eq:qp_one_stage} are equivalent to the constraints in~\eqref{eq:framework0}.

Let $\overline{\tilde X}_t = \frac{1}{n_1} \sumt {\tilde {X}}_{ij}$, ${\overline{\tilde X}}_c = \frac{1}{n_0} \sumc {\tilde {X}}_{ij}$, ${\tilde S}_t = \sumt ({\tilde {X}}_{ij} - {\overline{\tilde X}}_t) ({\tilde {X}}_{ij} - {\overline{\tilde X}}_t)^\T$ and ${\tilde S}_c = \sumc ({\tilde {X}}_{ij} - {\overline{\tilde X}}_c) ({\tilde {X}}_{ij} - {\overline{\tilde X}}_c)^\T$. 
Denote the dimension of $\underline{\tilde {X}}_{ij}$ as ${\tilde p} = m|F|+|C|$ for simplicity.
We then have
\begin{eqnarray*}
\underline{\overline{\tilde X}}_t = \begin{pmatrix}
\overline{\tilde X}_t \\
\overline{\tilde X}_t
\end{pmatrix}, \quad
\underline{\overline{\tilde X}}_c = \begin{pmatrix}
\overline{\tilde X}_c \\
0_{\tilde p}
\end{pmatrix}, \quad
\underline{\tilde S}_t = \begin{pmatrix}
\tilde S_t & \tilde S_t \\
\tilde S_t & \tilde S_t
\end{pmatrix}, \quad
\underline{\tilde S}_c = \begin{pmatrix}
\tilde S_c & 0_{\tilde p \times \tilde p} \\
0_{\tilde p \times \tilde p} & 0_{\tilde p \times \tilde p}
\end{pmatrix}.
\end{eqnarray*}
Simple algebra gives
\begin{eqnarray*}
\underline{\tilde S}_c (\underline{\tilde S}_t + \underline{\tilde S}_c)^{-1} \underline{\overline{\tilde X}}_t 
&=& 
\begin{pmatrix}
\tilde S_c & 0_{\tilde p \times \tilde p} \\
0_{\tilde p \times \tilde p} & 0_{\tilde p \times \tilde p}
\end{pmatrix} \begin{pmatrix}
\tilde S_t + \tilde S_c & \tilde S_t \\
\tilde S_t & \tilde S_t
\end{pmatrix}^{-1} \begin{pmatrix}
\overline{\tilde X}_t \\
\overline{\tilde X}_t
\end{pmatrix} \\
&=& \begin{pmatrix}
\tilde S_c & 0_{\tilde p \times \tilde p} \\
0_{\tilde p \times \tilde p} & 0_{\tilde p \times \tilde p}
\end{pmatrix} \begin{pmatrix}
\tilde S_c^{-1} & -\tilde S_c^{-1} \\
-\tilde S_c^{-1} & \tilde S_c^{-1}+\tilde S_t^{-1}
\end{pmatrix} \begin{pmatrix}
\overline{\tilde X}_t \\
\overline{\tilde X}_t
\end{pmatrix} \\
&=& \begin{pmatrix}
I & -I \\
0_{\tilde p \times \tilde p} & 0_{\tilde p \times \tilde p}
\end{pmatrix} \begin{pmatrix}
\overline{\tilde X}_t \\
\overline{\tilde X}_t
\end{pmatrix} = \begin{pmatrix}
0_{\tilde p} \\
0_{\tilde p}
\end{pmatrix}.
\end{eqnarray*}
Similarly, we also have $\underline{\tilde S}_t (\underline{\tilde S}_t + \underline{\tilde S}_c)^{-1}  \underline{\overline{\tilde X}}_c = 0_{2\tilde p}$. Recall the definition of $\underline{\tilde {X}}_{ij}$, the first constraint in~\eqref{eq::thm1_one_stage} is then equivalent to the first two constraints in~\eqref{eq:framework0} as
\begin{eqnarray*}
&& \sum_{i=1}^m \sum_{j:Z_{ij}=z} w_{ij}  \underline{\tilde {X}}_{ij}  =  0_{2\tilde p} \\
&\Leftrightarrow&
\sum_{i=1}^m \sum_{j:Z_{ij}=z} w_{ij}  X_{ij,l}  =  0, \quad  \text{for $l \in C$.} \\
&& 
\sum_{j:Z_{ij}=z} w_{ij}  X_{ij,l}  =  0, \quad  \text{for $l \in F$ and $i = 1, \ldots, m$}
\end{eqnarray*}
This finishes the proof.
\end{proof}

\subsubsection{Proof of Theorem~\ref{thm:bounded}}
We first establish a theorem that derives the dual formulation and characterizes the solutions to the general optimization problem~\eqref{eq:fr_theory_c}. 
Recall that for notational simplicity, we include the constant basis function $B_1(x) =1$, set the target value $B^*_1 =1$ and the corresponding tolerance $\delta_1=0$, so that the first balancing constraint enforces normalization, i.e. $\sumijz w_{z,ij} =1$.
Therefore, the optimization problem~\eqref{eq:PBM0} used in Section~\ref{sec:theory} is a simplified case of~\eqref{eq:fr_theory_c}, with the scaling factors set to one, i.e., $c_{ij}=1$ for all $ij$.

\begin{equation}
\label{eq:fr_theory_c}
{\hat{w}^\pb_{z,ij} = \arg\min_{w} \left\{ \sum_{i=1}^m \sum_{j: Z_{ij} = z} c_{ij} \psi(w_{ij}) : \, \left| \sum_{i=1}^m \sum_{j:Z_{ij}=z} \hspace{-.175cm} w_{ij} B_k(X_{ij}) - B_k^* \right| \leq \delta_k, \, \forall k; \, w_{ij} \geq 0 \right\}}.
\end{equation}

The proof follows from \citet{chattopadhyay2024one}{Theorem 8.1}.
\begin{theorem}
\label{thm:dual}
\begin{itemize}
    \item [(a)] The dual problem of~\eqref{eq:fr_theory_c} is equivalent to the empirical loss minimization problem with $L_1$ regularization:
    \begin{eqnarray}
    \label{eq:d1}
        \min_{u \geq 0, \lambda}  \sum_{ij \in \pq} \left[ - \indz S_{ij} c_{ij} \rho \{(B(X_{ij})^\T \lambda - \uij)/c_{ij} \} 
        \right] + (B^*) ^\T \lambda + |\lambda|^\T \delta 
    \end{eqnarray}
    \item [(b)] If $\hat w^\pb_z$ and $\lambda_z^\dag$ are solutions to the primal and dual forms of~\eqref{eq:PBM0}, respectively, then for $ij: Z_{ij}=z$,
    \begin{eqnarray}
    \label{eq:d2}
    \hat w^\pb_{z,ij} = \rho^\prime \{B(X_{ij})^\T \lambda_z^\dag/c_{ij} \} 1\{\rho^\prime \{B(X_{ij})^\T \lambda_z^\dag/c_{ij} \} > 0\}
    \end{eqnarray}
\end{itemize}
\end{theorem}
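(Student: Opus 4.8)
The plan is to treat~\eqref{eq:fr_theory_c} as a convex program and invoke Lagrangian (Fenchel) duality, following the strategy of \citet{chattopadhyay2024one}. The objective $\sum_{ij:Z_{ij}=z} c_{ij}\psi(w_{ij})$ is convex (each $\psi$ convex, $c_{ij}>0$), the balance constraints $|\sum_{ij}w_{ij}B_k(X_{ij}) - B_k^*|\le\delta_k$ are convex and two-sided, and the non-negativity constraints $w_{ij}\ge 0$ are affine. First I would verify a Slater-type constraint qualification: it suffices to exhibit one strictly feasible weight vector (strictly positive weights meeting each $\delta_k>0$ balance constraint with slack; the equality constraint $\delta_1=0$ encoding normalization is affine and poses no difficulty). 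Strong duality then holds and the KKT conditions are necessary and sufficient, which is what both parts ultimately rest on. Throughout I use that $\rho$ is the Fenchel conjugate of $\psi$, so that $\rho'=(\psi')^{-1}$; this identity is the engine behind~\eqref{eq:d2}.

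For part (a), I would attach a multiplier $u_{ij}\ge 0$ to each constraint $-w_{ij}\le 0$ and, to each two-sided balance constraint, a pair $\lambda_k^+,\lambda_k^-\ge 0$ for its upper and lower halves. Writing $\lambda_k=\lambda_k^+-\lambda_k^-$, the balance terms contribute $\lambda_k(\sum_{ij}w_{ij}B_k(X_{ij})-B_k^*)$ together with $-(\lambda_k^++\lambda_k^-)\delta_k$; optimizing the dual over the split subject to $\lambda_k^+-\lambda_k^-=\lambda_k$ collapses $\lambda_k^++\lambda_k^-$ to $|\lambda_k|$, which is exactly how the $L_1$ penalty $|\lambda|^\T\delta$ and the unconstrained sign of $\lambda$ in~\eqref{eq:d1} arise (the support function of the $\ell_\infty$-ball is the $\ell_1$ norm). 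Because the objective separates across units, the inner minimization $\min_{w_{ij}} \{c_{ij}\psi(w_{ij})+w_{ij}(B(X_{ij})^\T\lambda-u_{ij})\}$ decomposes and is a one-dimensional Fenchel conjugate evaluation, producing the per-unit term $-c_{ij}\rho\{(B(X_{ij})^\T\lambda-u_{ij})/c_{ij}\}$ after a standard sign relabeling of $\lambda$. Collecting the constant-in-$w$ piece $(B^*)^\T\lambda$ and the $|\lambda|^\T\delta$ penalty, and restricting the sum to the treatment-$z$ study units via the indicator $\indz S_{ij}$, yields precisely~\eqref{eq:d1}.

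For part (b), I would read off the stationarity condition $\partial L/\partial w_{ij}=0$ at the optimum, namely $c_{ij}\psi'(\hat w^\pb_{z,ij}) + B(X_{ij})^\T\lambda_z^\dag - u_{ij}^\dag = 0$, so that $\hat w^\pb_{z,ij}=\rho'\{(B(X_{ij})^\T\lambda_z^\dag-u_{ij}^\dag)/c_{ij}\}$ using $\rho'=(\psi')^{-1}$. Complementary slackness for the non-negativity constraint ($u_{ij}^\dag\ge 0$ and $u_{ij}^\dag \hat w^\pb_{z,ij}=0$) then forces the truncation: on units where the unconstrained value $\rho'\{B(X_{ij})^\T\lambda_z^\dag/c_{ij}\}$ is positive the multiplier $u_{ij}^\dag$ vanishes and $\hat w^\pb_{z,ij}$ equals it, whereas on units where it would be nonpositive the constraint binds, $\hat w^\pb_{z,ij}=0$, and $u_{ij}^\dag$ absorbs the slack. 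Monotonicity of $\rho'$ (convexity of $\rho$) guarantees that these two regimes are governed by the single sign condition $\rho'\{B(X_{ij})^\T\lambda_z^\dag/c_{ij}\}>0$, which is exactly the indicator in~\eqref{eq:d2}.

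The main obstacle is the bookkeeping around the two nonstandard constraint types. Converting the two-sided $\delta_k$-balance constraints into the $L_1$ penalty $|\lambda|^\T\delta$ requires care in combining the $\lambda_k^\pm$ multipliers and tracking signs, and deriving the positive-part structure of~\eqref{eq:d2} hinges on correctly coupling the stationarity condition with complementary slackness for $u_{ij}$. A secondary but necessary point is confirming the constraint qualification so that strong duality and the KKT characterization are legitimate; once that is in place, the separability of the objective makes the remaining Fenchel-conjugate computation routine.
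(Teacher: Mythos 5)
Your overall route is the same as the paper's: form the Lagrangian of~\eqref{eq:fr_theory_c} with multipliers $u_{ij}\ge 0$ on the non-negativity constraints and a split pair $\lambda_k^{+},\lambda_k^{-}\ge 0$ on each two-sided balance constraint, collapse $\lambda_k^{+}+\lambda_k^{-}$ to $|\lambda_k|$ to produce the $L_1$ penalty, evaluate the separable inner infimum as a one-dimensional conjugate, and obtain~\eqref{eq:d2} from stationarity plus complementary slackness. The paper additionally passes through the change of variables $\xi_{ij}=(n^*)^{-1}(1-S_{ij})-\indz S_{ij}w_{ij}$, which embeds $B^*$ as a sum over the combined population $\pq$ and works with $h(x)=\psi(-x)$; your direct treatment of $B^*$ as a constant is equivalent and slightly cleaner bookkeeping, and your explicit attention to a Slater-type constraint qualification is a point the paper leaves implicit.

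There is, however, a concrete sign error in the identity you call the engine of the proof, and it is not innocuous. With your conventions the inner infimum is $\inf_{w}\{c_{ij}\psi(w)+w(B(X_{ij})^\T\lambda-u_{ij})\}=-c_{ij}\psi^*\{-(B(X_{ij})^\T\lambda-u_{ij})/c_{ij}\}$, where $\psi^*$ is the convex conjugate of $\psi$. Matching this to the term $-c_{ij}\rho\{(B(X_{ij})^\T\lambda-u_{ij})/c_{ij}\}$ in~\eqref{eq:d1} forces $\rho(v)=-\psi^*(-v)$, equivalently $\rho^\prime(v)=(\psi^\prime)^{-1}(-v)$, \emph{not} $\rho=\psi^*$ with $\rho^\prime=(\psi^\prime)^{-1}$ as you assert; this is exactly what the paper's substitution $h(x)=\psi(-x)$ encodes, since there $\rho^\prime(v)=-(h^\prime)^{-1}(v)$ and $\rho^{\prime\prime}(v)=-1/\psi^{\prime\prime}(v)<0$. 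Two consequences follow. First, under your stated identity your own stationarity condition $c_{ij}\psi^\prime(\hat w^\pb_{z,ij})+B(X_{ij})^\T\lambda_z^\dag-u_{ij}^\dag=0$ yields $\hat w^\pb_{z,ij}=\rho^\prime\{(u_{ij}^\dag-B(X_{ij})^\T\lambda_z^\dag)/c_{ij}\}$, which carries the wrong sign inside relative to~\eqref{eq:d2}; the ``sign relabeling of $\lambda$'' you invoke cannot repair this, because relabeling $\lambda\mapsto-\lambda$ also flips the $(B^*)^\T\lambda$ term, so the resulting dual would no longer be~\eqref{eq:d1}. Second, $\rho$ is concave rather than convex, so $\rho^\prime$ is strictly \emph{decreasing}, and this direction is load-bearing in part (b): when $u_{ij}^\dag>0$, complementary slackness gives $0=\hat w^\pb_{z,ij}=\rho^\prime\{(B(X_{ij})^\T\lambda_z^\dag-u_{ij}^\dag)/c_{ij}\}$, and it is precisely the decrease of $\rho^\prime$ that then forces $\rho^\prime\{B(X_{ij})^\T\lambda_z^\dag/c_{ij}\}<0$, so that both regimes are governed by the single indicator in~\eqref{eq:d2}. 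With the increasing $\rho^\prime$ implied by your convexity claim, that implication reverses and the dichotomy argument breaks. The fix is local: adopt $\rho(v)=-\psi^*(-v)$ throughout, after which your derivation goes through and coincides with the paper's.
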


\begin{proof}[Proof of Theorem~\ref{thm:dual}]
The $k$th balancing constraint in the optimization problem~\eqref{eq:PBM0} can be written as
\begin{eqnarray*}
&& \left | \sumz w_{ij}  B_k(x_{ij})  -  B_k^* \right | \leq \delta_k \\
&\Leftrightarrow & \left | \sum_{ij \in \pq} \indz S_{ij} w_{ij}  B_k(x_{ij})  - (n^*)^{-1} \sum_{ij \in \pq}(1-S_{ij})  B_k(X_{ij}) \right | \leq \delta_k \\
&\Leftrightarrow & \left | \sum_{ij \in \pq} \{ (n^*)^{-1} (1-S_{ij}) - \indz S_{ij} w_{ij}  \} B_k(x_{ij})   \right | \leq \delta_k \\
&\Leftrightarrow & \left | \sum_{ij \in \pq} \xi_{ij} B_k(x_{ij})   \right | \leq \delta_k 
\end{eqnarray*}
with $\xi_{ij} =(n^*)^{-1} (1-S_{ij}) - \indz S_{ij} w_{ij}$. 
Thus, for the units in group $Z_{ij}=z$ of the sample, $w_{ij} = - \xi_{ij}$.
Then the sample boundedness constraint can be written as
\begin{eqnarray*}
 w_{ij} \geq 0 \text{ for all $Z_{ij} = z$} 
\Leftrightarrow  \indz S_{ij} w_{ij} \geq 0 \text{ for all $ij \in \pq$}  
\Leftrightarrow  S_{ij}\xi_{ij} \leq 0 \text{ for all $ij \in \pq$}.
\end{eqnarray*}
The objective function can be written as
\begin{eqnarray*}
\sum_{i=1}^m \sum_{j: Z_{ij} =z} c_{ij}  \psi(w_{ij})
= \sum_{ij \in \pq} \indz S_{ij } c_{ij} h(\xi_{ij})
\end{eqnarray*}
with $h(x) = \psi(-x)$. 

In our nested data structure, each individual is indexed by a pair $ij$. To simplify the representation, we assign a unique global index $q$ to each individual, effectively flattening the nested structure into a single sequence. The global index $q \in \{1,2, \ldots, n\}$ maps to $ij$ via $i = \min\{i: q \leq \sum_{i^\prime=1}^{i} n_{i^\prime} \}$, $j = q - \sum_{i^\prime=1}^{i-1} n_{i^\prime}$. Conversely, the pair index $ij$ maps back to $q$ as $q = j + \sum_{i^\prime=1}^{i-1} n_{i^\prime}$. For clarity, we will use global index $q$ in the following proof.

Let $\underline{A}$ be a $K \times n$ matrix whose $(l,q)$th element is $B_l(X_{q})$; 
$\underline{Q} = \left(\underline{A}^\T, -\underline{A}^\T \right)^\T$;
and $d = (\delta^\T, \delta^\T)^\T$. 
Let $S$ be the vector of $S_{q}$; $\underline{S} = I_{n+n^*} S$.
We can write the primal problem as
\begin{eqnarray}
\label{eq:dual_prob}
\text{min}_w & \sum_{ij \in \pq} \indz S_{ij} c_{ij} h(\xi_{ij}) & \\
\text{s.t.} 
& \underline{Q}\xi \leq d & \nonumber \\
&  \underline{S}\xi \leq 0_{n+n^*} &  \nonumber
\end{eqnarray}
This gives us a convex optimization problem in $\xi$ with linear constraints. 
The Lagrange dual function of the primal problem in Equation~\eqref{eq:dual_prob} is given by
\begin{eqnarray*}
\inf_{\xi} \left\{ \sum_{ij \in \pq} \indz S_{ij} c_{ij} h(\xi_{ij}) + \lambda^\T \underline{Q}\xi + u^\T \underline{S}\xi \right\} - \lambda^\T d.
\end{eqnarray*}
Let $Q_{\cdot,ij}$ be the column of $\underline{Q}$ that corresponds to patient $ij$. The dual objective function can be written as
\begin{eqnarray*}
&& -\sup_\xi \left\{ -\sum_{ij \in \pq} \indz S_{ij} c_{ij} h(\xi_{ij}) - \lambda^\T \underline{Q}\xi - u^\T \underline{S}\xi \right\}- \lambda^\T d \\
&=& -\sum_{ij \in \pq} c_{ij} \sup_{\xi_{ij}} \left\{ - \indz S_{ij}  h(\xi_{ij}) - ( Q_{\cdot,ij}^\T \lambda) \xi_{ij}/c_{ij} - u_{ij}S_{ij}\xi_{ij}/c_{ij} \right\}- \lambda^\T d \\
&=& -\sum_{ij \in \pq} c_{ij} h_{ij}^* (-(Q_{\cdot,ij}^\T \lambda + u_{ij}S_{ij})/c_{ij})  - \lambda^\T d,
\end{eqnarray*}
where $h_{ij}^*(\cdot)$ is the convex conjugate of $\indz S_{ij} h(\cdot)$. Thus, the dual problem is given by
\begin{eqnarray}
\text{max}_{\lambda, u} & -\sum_{ij \in \pq} h_{ij}^* (-(Q_{\cdot,ij}^\T \lambda + u_{ij}S_{ij})/c_{ij})  - \lambda^\T d & \\
\text{s.t.} 
& \lambda \geq 0 & \nonumber \\
&  u \geq 0 &  \nonumber
\end{eqnarray}
Since the last $K$ components of $Q_{\cdot,ij}$ are reflected versions of the first $K$ components, by symmetry
we can write the dual problem as
\begin{eqnarray}
\label{eq:dual_prob2}
\text{max}_{\lambda, u} & -\sum_{ij \in \pq} h_{ij}^* ((Q_{\cdot,ij}^\T \lambda - u_{ij}S_{ij})/c_{ij})  - \lambda^\T d & \\
\text{s.t.} 
& \lambda \geq 0 & \nonumber \\
&  u \geq 0 &  \nonumber
\end{eqnarray}
For the convex conjugate,
\begin{eqnarray*}
h_{ij}^*(v) &=& \sup_{\xi_{ij}} \left\{ - \indz S_{ij} h(\xi_{ij}) + v \xi_{ij}  \right\}\\
&=& \left\{ - \indz S_{ij} h(\hat \xi_{ij}) + v \hat \xi_{ij}  \right\},
\end{eqnarray*}
where $\hat \xi_{ij}$ for $ij \in \pp$ satisfies
\begin{eqnarray*}
\partial/\partial \xi_{ij} \left\{ - \indz S_{ij} h(\hat \xi_{ij}) + v \hat \xi_{ij}  \right\} = 0,
\end{eqnarray*}
and $\hat \xi_{ij} = 1/n^*$ for $ij \in \pt$.
This gives $\hat \xi_{ij}(v) = (h^\prime)^{-1}(v)$ and $\hat w_{z,ij} (v) = -(h^\prime)^{-1}(v)$ for $ij$ with $\indz S_{ij}=1$. Recall we define $\rho(v) = -v(h^\prime)^{-1}(v) + h \{ (h^\prime)^{-1}(v) \}$ and $\rho^\prime(v) = -(h^\prime)^{-1}(v)$. We can further write, for $ij$ with $\indz S_{ij}=1$,
\begin{eqnarray*}
\hat w^\pb_{z,ij} (v) = \rho^\prime(v), \quad
h_{ij}^*(v) = -\rho(v).
\end{eqnarray*}
The objective function then boils down to
\begin{eqnarray*}
&&\sum_{ij \in \pq} - \indz S_{ij} c_{ij} \rho ((Q_{\cdot,ij}^\T \lambda - u_{ij}S_{ij})/c_{ij}) + (1-S_{ij}) \frac{1}{n^*}(Q_{\cdot,ij}^\T \lambda - u_{ij}S_{ij}) \\
&=& \sum_{ij \in \pq} - \indz S_{ij} c_{ij} \rho ((Q_{\cdot,ij}^\T \lambda - u_{ij}S_{ij})/c_{ij})) + \frac{1}{n^*}(1-S_{ij}) Q_{\cdot,ij}^\T \lambda
\end{eqnarray*}
\begin{eqnarray}
\label{eq:dual_prob3}
\text{min}_{\lambda, u} & \sum_{ij \in \pq} \left\{ - \indz S_{ij} c_{ij} \rho ((Q_{\cdot,ij}^\T \lambda - u_{ij}S_{ij})/c_{ij})) + \frac{1}{n^*}(1-S_{ij}) Q_{\cdot,ij}^\T \right\} + \lambda^\T d & \\
\text{s.t.} 
& \lambda \geq 0 & \nonumber \\
&  u \geq 0 &  \nonumber
\end{eqnarray}

This dual problem can be further simplified.
Following the proof of Theorem 1 in \citep{wang2020minimal}, write $\lambda = (\lambda_+^\T, \lambda_{-}^\T)^\T$, where $\lambda_+$ and $\lambda_{-}$ are $K \times 1$ vectors. Denoting $A_{\cdot,ij}$ as the column of $\underline{A}$ that corresponds to patient $ij$, we write the dual objective as
\begin{eqnarray*}
&& g(\lambda,u) \\
&=& \sum_{ij \in \pq} \left\{ - \indz S_{ij} c_{ij} \rho \{ (A_{\cdot,ij}^\T (\lambda_{+} - \lambda_{-}) - u_{ij}S_{ij})/c_{ij} \} + \frac{1}{n^*}(1-S_{ij}) A_{\cdot,ij}^\T (\lambda_{+} - \lambda_{-}) \right\}\\
&& + (\lambda_{+} + \lambda_{-} )^\T d.
\end{eqnarray*}
Denote $\lambda^\dag = (\lambda_+^{\dag\T}, \lambda_{-}^{\dag \T})^\T$ as the dual solution. 
We can show that at least one of $\lambda_{+,l}, \lambda_{-,l}$ equals zero for all $l$ s.t. $\delta_l > 0$.
Assume the $l$th component of $\lambda_+^{\dag\T}$ and $\lambda_{-}^{\dag\T}$ are both strictly positive, for some $l \in \{1,2,\ldots,K\}$. Then define
\begin{eqnarray*}
\lambda^{\dag \dag} = (\lambda_+^{\dag\T} - (0, 0, \ldots, \min(\lambda_{+,l}^{\dag}, \lambda_{-,l}^{\dag}), 0, \ldots, 0 )^\T,
\lambda_-^{\dag\T} - (0, 0, \ldots, \min(\lambda_{+,l}^{\dag}, \lambda_{-,l}^{\dag}), 0, \ldots, 0 )^\T)^\T.
\end{eqnarray*}
Notice that $g(\lambda^{\dag \dag},u) = g(\lambda^{\dag },u) - 2 \delta_l \min(\lambda_{+,l}^{\dag}, \lambda_{-,l}^{\dag}) < g(\lambda^{\dag },u)$ since $\delta_l > 0$. This leads to a contradiction since $\lambda^\dag$ minimizes $g(\lambda,u)$. Thus,
at least one of $\lambda_{+,l}, \lambda_{-,l}$ equals zero. 
Moreover, $\lambda_l \delta_l = 0$ for all $l$ s.t. $\delta_l=0$.
Further notice that $(n^*)^{-1}\sum_{ij \in \pq} (1-S_{ij}) B(X_{ij})^\T \lambda = (B^*) ^\T \lambda$
The dual problem then boils down to
\begin{eqnarray}
\label{eq:dual_prob4}
\text{min}_{\lambda, u} & \sum_{ij \in \pq}  - \indz S_{ij} c_{ij} \rho ((B(X_{ij})^\T \lambda - u_{ij}S_{ij})/c_{ij}) + (B^*) ^\T \lambda + |\lambda|^\T d & \\
\text{s.t.}  
&  u \geq 0 &  \nonumber
\end{eqnarray}
This proves part (a) of Theorem~\ref{thm:dual}.

Suppose that $\lambda_z^\dag$ and $u_z^\dag$ are solutions to the dual forms. 
For $ij: Z_{ij}=z, S_{ij}=1$, we have
\begin{eqnarray*}
\hat w^\pb_{z,ij} =   \rho^\prime\{(B(X_{ij})^\T \lambda_z^\dag - u_{z, ij}^\dag)/c_{ij} \}.
\end{eqnarray*}
The complementary slackness condition gives that $S_{ij} \indz u_{ij}w_{ij} = 0$. Thus,
\begin{eqnarray*}
\hat w^\pb_{z,ij} =   \rho^\prime\{ B(X_{ij})^\T \lambda_z^\dag /c_{ij} \} 1\{ u_{z,ij}^\dag =0 \}.
\end{eqnarray*}
Note that $w_{ij}, u_{ij} \geq 0$ are guaranteed by constraints in the primal and dual problems. Then $\{ u_{z, ij}^\dag = 0  \} \Rightarrow \rho^\prime\{B(X_{ij})^\T \lambda_z^\dag/c_{ij}  \} \geq 0$. Moreover, $\{u_{z,ij}^\dag = 0 \}^c \Rightarrow \{u_{z,ij}^\dag > 0 \} \Rightarrow \{ \rho^\prime\{(B(X_{ij})^\T \lambda_z^\dag - u_{z,ij}^\dag)/c_{ij}  \} = 0\} \Rightarrow \{ \rho^\prime\{B(X_{ij})^\T \lambda_z^\dag /c_{ij}  \} < 0\}$ because 
$$
\rho^{\prime \prime} (v) = -1/h^{\prime \prime} (v) =  -1/\psi^{\prime \prime} (v) < 0
$$
as the objective function $\psi(\cdot)$ is convex.
This proves part (b) of Theorem~\ref{thm:dual} as
\begin{eqnarray*}
    \hat w^\pb_{z,ij} = \rho^\prime \{B(X_{ij})^\T \lambda_z^\dag /c_{ij} \} 1\{\rho^\prime \{B(X_{ij})^\T \lambda_z^\dag /c_{ij} \} > 0\}.
\end{eqnarray*}
This also implies that
\begin{eqnarray*}
u^\dag_{ij} =
\left\{ 
\begin{aligned}
& B(X_{ij})^\T \lambda_z^\dag - c_{ij}(\rho^\prime)^{-1}(0), &  B(X_{ij})^\T \lambda_z^\dag > c_{ij}(\rho^\prime)^{-1}(0)\\
&0, & \text{o.w.}
\end{aligned}
\right. 
\end{eqnarray*}

\end{proof}

\begin{proof}[Proof of Theorem~\ref{thm:bounded}]
Construct the selected set as $R = \{ ij: \hat w^\pb_{Z_{ij},ij} > 0 \}$. 
Since $\hat w^\pb_{Z_{ij},ij} = 0$ for $ij \not\in R$, the solutions to the original problem~\eqref{eq:PBM0} are also solutions to the restricted optimization problem~\eqref{eq:fr_R+} over the selected set $R$.
\begin{eqnarray}
\label{eq:fr_R+}
\text{min}_w & \sum_{ij \in R: Z_{ij} =z}  \psi(w_{ij}) & \\
\text{s.t.} &  &\nonumber\\
& \left | \sum_{ij \in R: Z_{ij} =z} w_{ij}  B_k(x_{ij})  -  B_k^* \right | \leq \delta_k, \quad & \text{for $k = 1, \ldots, K$} \nonumber  \\
& w_{ij} \geq 0  & \text{for $ij \in R$ s.t. $Z_{ij}=z$}. \nonumber
\end{eqnarray}
According to Theorem~\ref{thm:dual}, the dual formulation of~\eqref{eq:fr_R+} is given by
\begin{eqnarray*}
\min_{\lambda, u}  \sum_{ij \in R} \left[ - \indz \rho \{B(X_{ij})^\T \lambda - u_{ij} \}
\right] + (B^*) ^\T \lambda + |\lambda|^\T \delta.
\end{eqnarray*}
Let $\hat w^{*+|R}_{z}$, $\lambda_{z}^{\dag +}$ and $u_{z,ij}^\dag$ denote solutions to the primal and dual forms of~\eqref{eq:fr_R+}. 
From the proof of Theorem~\ref{thm:dual}, it follows that $u_{Z_{ij}, ij}^\dag = 0 $ for all $ij \in R$. Thus, 
\begin{eqnarray}
\label{eq:dual_R+}
\lambda_{z}^{\dag +} = 
\argmin_{\lambda} \sum_{ij \in R} \left[ -\indz \rho \{B(X_{ij})^\T \lambda \}
\right] + (B^*) ^\T \lambda + |\lambda|^\T \delta,
\end{eqnarray}
and the weights are given by $\hat w^{*+|R}_{z, ij} = \rho^\prime \{B(X_{ij})^\T \lambda_{z}^{\dag +} \} 1\{u_{ij}^\dag = 0\} = \rho^\prime \{B(X_{ij})^\T \lambda_{z}^{\dag +} \}$ for all $ij \in R$.

Framework~\eqref{eq:fr_R} below shows the optimization problem without non-negativity constraint and restricted to the selected set $R$. 
\begin{eqnarray}
\label{eq:fr_R}
\text{min}_w & \sum_{ij \in R: Z_{ij} =z}  \psi(w_{ij}) & \\
\text{s.t.} &  &\nonumber\\
& \left | \sum_{ij \in R: Z_{ij} =z} w_{ij}  B_k(x_{ij})  -  B_k^* \right | \leq \delta_k, \quad & \text{for $k = 1, \ldots, K$} \nonumber 
\end{eqnarray}
Similarly to Theorem~\ref{thm:dual}, it can be shown that the dual problem of~\eqref{eq:fr_R} is given by
\begin{eqnarray*}
\label{eq:dual_R}
\min_{\lambda}  \sum_{ij \in R} \left[ - \indz \rho \{B(X_{ij})^\T \lambda \}
\right] + (B^*) ^\T \lambda + |\lambda|^\T \delta.
\end{eqnarray*}
Moreover, the solutions have expression $\hat w^{*-|R}_{z, ij} = \rho^\prime \{B(X_{ij})^\T \lambda_{z}^{\dag-} \}$, where $\hat w^{*-|R}_{z, ij}$ and $\lambda_{z}^{\dag-}$ are solutions to the primal and dual forms of~\eqref{eq:fr_R}. 

It suffices to show that~\eqref{eq:fr_R} and~\eqref{eq:fr_R+} yield the same solutions.
Given the dual of~\eqref{eq:fr_R} and equation~\eqref{eq:dual_R+}, we have $\lambda_{z}^{\dag +} = \lambda_{z}^{\dag -}$. Combining this with the expressions for the solutions proves Theorem~\ref{thm:bounded}.




\end{proof}

\subsubsection{Proof of Theorem~\ref{thm::consistency}}
We first introduce the regularity conditions for asymptotic results.
\begin{appassumption}\label{cond::regularity}
Assume the following conditions hold for $z \in \{0,1\}$
\begin{itemize}
    \item[(a)] 
    There exist constants $C_0, C_1, C_2$ with $C_0 > 0$ and $C_1 < C_2 < 0$, such that $C_1 \leq n \rho^{\prime\prime}(v) \leq C_2$ for all $v$ in a neighborhood of $B(x)^\T \tilde\lambda_{1z}$. Also, $|n \rho^\prime(v)| \leq C_0$ and $\rho^\prime(v) \neq 0$ a.s. for all $v = B(x)^\T \lambda, x \in \X, \lambda$. 
    \item[(b)] $\sup_{x \in \X} \Vert B(x) \Vert_2 \leq CK^{1/2}$ and $\Vert \E_{\dq} \{ B(X)B(X)^\T\} \Vert_F \leq C$ for some $C > 0$, where $\Vert \cdot \Vert_F$ denotes the Frobenius norm.
    \item[(c)] $K = O \{ n ^\alpha \}$ for some $0 < \alpha < 2/3$.
    \item[(d)] The smallest eigenvalue, $\nu_{\min}$, of $\E_{\dq} [B(X)B(X)^\T]$ satisfies $\nu_{\min} > C$ for some constant $C>0$.
    \item[(e)] $\Vert \delta \Vert_2 = O_P[K^{1/4} \{(\log K)/n \}^{1/2}]$ .
    \item[(f)] The ratio $n^*/n$ approaches a finite positive constant as $n \rightarrow \infty$.
    \item[(g)] $\E_{\dt} \{ Y^2(z) \} < \infty$.
\end{itemize}
\end{appassumption}

Under Assumption~\ref{cond::inverse_prop}, for $Z=z \in \{0,1\}$, we have $\tilde w(x) = n \rho^\prime\{ B(x)^\T \tilde\lambda_{1z}\} 1\{ \rho^\prime\{ B(x)^\T \tilde\lambda_{1z}\} \geq 0 \}$.
Theorem~\ref{thm::unif_convg_prob} shows that when probability weight model is correctly specified, the weights uniformly converge to the true probability weighting factor.
\begin{theorem}
\label{thm::unif_convg_prob}
Under Assumption~\ref{cond::1} and~\ref{cond::regularity}, if Assumption~\ref{cond::inverse_prop} hold for $z$, the weights in group $Z=z$ satisfy
\begin{eqnarray*}
\sup_{x \in \X} \left| n \hat w^\pb_z(x) - \tilde w_z(x) \right| 
= O_P(K^{3/4} (\log K)^{1/2}n^{-1/2}) = o_P(1).
\end{eqnarray*}
\end{theorem}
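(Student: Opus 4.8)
The plan is to reduce the uniform control of the weights to a finite-dimensional rate for the dual parameter $\lambda_z^\dag$, and then to establish that rate by a strongly-convex M-estimation argument applied to the dual program of Theorem~\ref{thm:dual}.

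First I would exploit the explicit form of the weights. By Theorem~\ref{thm:dual}(b) (with $c_{ij}=1$), $n\hat w^\pb_z(x)=\big[n\rho^\prime\{B(x)^\T\lambda_z^\dag\}\big]_+$, while under the correct-specification Assumption~\ref{cond::inverse_prop} the target factor is $\tilde w_z(x)=\big[n\rho^\prime\{B(x)^\T\tilde\lambda_{1z}\}\big]_+$, where $[\cdot]_+$ denotes the positive part. Since $t\mapsto[t]_+$ is $1$-Lipschitz and, by Assumption~\ref{cond::regularity}(a), $|n\rho^{\prime\prime}|\le|C_1|$ so that $v\mapsto n\rho^\prime(v)$ is Lipschitz, a mean-value bound gives
\[
\sup_{x\in\X}\big|n\hat w^\pb_z(x)-\tilde w_z(x)\big|\;\le\;|C_1|\,\sup_{x\in\X}\big|B(x)^\T(\lambda_z^\dag-\tilde\lambda_{1z})\big|\;\le\;C\,|C_1|\,K^{1/2}\,\|\lambda_z^\dag-\tilde\lambda_{1z}\|_2,
\]
using $\sup_x\|B(x)\|_2\le CK^{1/2}$ from Assumption~\ref{cond::regularity}(b). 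It therefore suffices to prove $\|\lambda_z^\dag-\tilde\lambda_{1z}\|_2=O_P\{K^{1/4}(\log K)^{1/2}n^{-1/2}\}$.

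Next I would treat $\lambda_z^\dag$ as the minimizer of the convex dual objective $M_n$ in Theorem~\ref{thm:dual}(a) and run the standard two-ingredient argument. (i) \emph{Strong convexity}: the Hessian of the smooth part of $M_n$ is a $(-\rho^{\prime\prime})$-weighted Gram matrix of the basis functions over the active units; combining the lower bound $-n\rho^{\prime\prime}\ge|C_2|>0$ from Assumption~\ref{cond::regularity}(a) with the eigenvalue bound $\nu_{\min}\{\E_\dq[B(X)B(X)^\T]\}>C$ from Assumption~\ref{cond::regularity}(d) and a matrix-concentration bound for the empirical Gram matrix (valid because $K=O(n^\alpha)$, $\alpha<2/3$, by Assumption~\ref{cond::regularity}(b)--(c)), the modulus of strong convexity $m$ is bounded below by a positive constant with probability tending to one. (ii) \emph{Gradient at the truth}: under correct specification the population gradient vanishes, since the balancing identity~\eqref{eq:balance} gives $\E_{\dpp_z}\{\tilde w_z(X)B(X)\}=\E_\dt\{B(X)\}$, exactly the target encoded by $B^*$. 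The empirical gradient is then a difference of two sample means, each centered at $\E_\dt\{B(X)\}$, plus the subgradient of the $L_1$ penalty $|\lambda|^\T\delta$. A Chebyshev bound on its $\ell_2$ norm controls the stochastic part: crucially, the Frobenius bound $\|\E_\dq[B(X)B(X)^\T]\|_F\le C$ forces $\mathrm{tr}(\E_\dq[B(X)B(X)^\T])=O(K^{1/2})$, so with bounded weights ($|n\rho^\prime|\le C_0$) one obtains $\E\|\nabla M_n(\tilde\lambda_{1z})\|_2^2=O(K^{1/2}/n)$ and hence a stochastic part of order $O_P(K^{1/4}n^{-1/2})$; the penalty part is $\|\delta\|_2=O_P\{K^{1/4}(\log K)^{1/2}n^{-1/2}\}$ by Assumption~\ref{cond::regularity}(e). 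The basic inequality for a strongly convex minimizer, $\|\lambda_z^\dag-\tilde\lambda_{1z}\|_2\le(2/m)\|\nabla M_n(\tilde\lambda_{1z})\|_2$, then yields the required dual-parameter rate, and substituting into the display above gives $\sup_x|n\hat w^\pb_z(x)-\tilde w_z(x)|=O_P\{K^{3/4}(\log K)^{1/2}n^{-1/2}\}$, which is $o_P(1)$ under $\alpha<2/3$.

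The main obstacle I anticipate is the gradient step: obtaining the $K^{1/4}$ (rather than the naive $K^{1/2}$) power hinges on using the Frobenius-norm assumption to bound the \emph{trace} of the basis Gram matrix, and the non-smoothness from both the positive-part truncation and the $L_1$ penalty must be handled carefully, using the active-set/complementary-slackness characterization already derived in the proof of Theorem~\ref{thm:dual} together with subgradient optimality. A secondary technical point is verifying that the strong-convexity modulus remains bounded below after the Gram matrix is restricted to the active set $\{x:\rho^\prime(B(x)^\T\tilde\lambda_{1z})>0\}$ rather than the full support.
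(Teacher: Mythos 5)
Your proposal follows essentially the same architecture as the paper's proof: the identical first reduction (positive-part Lipschitz bound, mean value theorem with Assumption~\ref{cond::regularity}(a), Cauchy--Schwarz with $\sup_x\Vert B(x)\Vert_2\le CK^{1/2}$, reducing everything to $\Vert\lambda_z^\dag-\tilde\lambda_{1z}\Vert_2$), and then the same two ingredients for the dual rate --- a mean-zero gradient at $\tilde\lambda_{1z}$, which is exactly the balancing identity under Assumption~\ref{cond::inverse_prop} and is the content of the paper's Lemma~\ref{lem::onestep8.5}, plus curvature via the minimum eigenvalue of the empirical Gram matrix, which is the paper's Lemma~\ref{lem::onestep8.6}. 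Your trace device $\tr\{\E_{\dq}[B(X)B(X)^\T]\}\le K^{1/2}\Vert\E_{\dq}[B(X)B(X)^\T]\Vert_F$ is the same one the paper uses inside Lemma~\ref{lem::onestep8.5}; your Chebyshev bound is a more elementary substitute for the paper's matrix Bernstein inequality (Lemma~\ref{lem::mat_inequal}) and in fact avoids the logarithmic factor in the stochastic part, so that $(\log K)^{1/2}$ in the final rate enters only through $\Vert\delta\Vert_2$ in Assumption~\ref{cond::regularity}(e); this is a legitimate simplification.

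The step that does not go through as written is the basic inequality $\Vert\lambda_z^\dag-\tilde\lambda_{1z}\Vert_2\le(2/m)\Vert\nabla M_n(\tilde\lambda_{1z})\Vert_2$. That inequality needs strong convexity of the dual objective on a set containing both $\tilde\lambda_{1z}$ and $\lambda_z^\dag$, but here the smooth part of the dual has Hessian $\sum_{ij}\indz S_{ij}1\{\rho^\prime(B(X_{ij})^\T\lambda)\ge 0\}\{-\rho^{\prime\prime}(B(X_{ij})^\T\lambda)\}B(X_{ij})B(X_{ij})^\T$, which is only positive semidefinite and degenerates wherever units become inactive; moreover, Assumption~\ref{cond::regularity}(a) bounds $\rho^{\prime\prime}$ only in a neighborhood of $B(x)^\T\tilde\lambda_{1z}$, so your curvature constant is genuinely local, while $\lambda_z^\dag$ is not known a priori to be nearby. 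The paper's Lemma~\ref{lem:lambda_diff} resolves exactly this by localization: fix the ball $\{\Vert\Delta\Vert_2\le r\}$ with $r=C^*K^{1/4}(\log K)^{1/2}n^{-1/2}$, Taylor-expand at an intermediate point, show the objective strictly increases on the boundary (the linear term is $rO_P(r)$ by Lemma~\ref{lem::onestep8.5} and condition (e), the quadratic term is $\gtrsim r^2$ by Lemma~\ref{lem::onestep8.6}), and conclude from convexity of $G$ that a minimizer lies inside the ball. Your closing remark about restricting the Gram matrix to the active set points at the right issue, but the repair is not just an eigenvalue check: you must replace the global strong-convexity inequality with this boundary argument. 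Once you do, your proof coincides with the paper's.
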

Below we provide a proof of of this result. All probabilities and expectations in this proof are
computed with respect to the probability measure $\dq$.

Let $\lambda_z^\dag$ denote the solution to the dual form of~\eqref{eq:PBM0}. We have 
\begin{eqnarray}
&& \sup_{x \in \X} \left| n \hat w^\pb_z(x) - \tilde w_z(x) \right| \nonumber \\
&=& \sup_{x \in \X} \left| n \rho^\prime\{ B(x)^\T \lambda_z^\dag \} 1\{ \rho^\prime\{ B(x)^\T \lambda_z^\dag \} \geq 0 \} - n \rho^\prime\{ B(x)^\T \tilde\lambda_{1z}\} 1\{ \rho^\prime\{ B(x)^\T \tilde\lambda_{1z}\} \geq 0 \} \right| \nonumber \\
&\leq&  \sup_{x \in \X} \left| n \rho^\prime\{ B(x)^\T \lambda_z^\dag \}  - n \rho^\prime\{ B(x)^\T \tilde\lambda_{1z}\}  \} \right| \nonumber \\
& \leq & C \sup_{x \in \X} \left| B(x)^\T(\lambda_z^\dag -\tilde\lambda_{1z})  \right| \nonumber \\
&\leq & C K^{1/2} \Vert \lambda_z^\dag -\tilde\lambda_{1z}  \Vert_2
\label{eq:wdiff_bound}
\end{eqnarray}
The first equality is due to Assumption~\ref{cond::inverse_prop} and Theorem~\ref{thm:dual}(b).
The first inequality obviously hold when the two terms have the same sign, i.e., $\rho^\prime\{ B(x)^\T \lambda_z^\dag\} \rho^\prime\{ B(x)^\T \tilde\lambda_{1z}\}  \geq 0$.
When the two terms have opposite signs, the inequality follows from the fact that $|a|,|b| \leq |a-b|$ when $ab < 0$. 
The second inequality follows from the mean value theorem and Assumption~\ref{cond::regularity}(a).
The final inequality follows from the Cauchy--Schwarz inequality and Assumption~\ref{cond::regularity}(b).

Therefore, to bound the difference between estimated weights and true probability weight factors, it suffices to bound $\Vert \lambda_z^\dag -\tilde\lambda_{1z}  \Vert_2$. 
Following the proof structure of Lemma 8.3 in \citet{chattopadhyay2023implied}, we establish Lemma~\ref{lem::onestep8.5}--~\ref{lem::onestep8.6} to control the bound.

\begin{lemma}
\label{lem::mat_inequal}
Let \( \underline{W}_1, \underline{W}_2, \ldots, \underline{W}_{n} \) be \( d_1 \times d_2 \) independent random matrices with \( E[\underline{W}_j] = 0 \) and \( \|\underline{W}_j\|_2 \leq R_{n} \) a.s., where \( \|\cdot\|_2 \) denotes the spectral norm. Let
\[
\sigma_{n}^2 := \max \left\{ \left\|\sum_{j=1}^{n} \E[\underline{W}_j \underline{W}_j^\top] \right\|_2, \left\|\sum_{j=1}^{n} \E[\underline{W}_j^\top \underline{W}_j] \right\|_2 \right\}.
\]
Then for all \( t \geq 0 \), we have
\[
\Pr \left( \left\|\sum_{j=1}^{n} \underline{W}_j \right\|_2 \geq t \right) \leq (d_1 + d_2) \exp\left( \frac{-t^2/2}{\sigma_{n}^2 + (R_{n}t/3)} \right).
\]
\end{lemma}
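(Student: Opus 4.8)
The statement is the rectangular matrix Bernstein inequality, a standard tool in matrix concentration that can be invoked directly (e.g., Tropp, \emph{An Introduction to Matrix Concentration Inequalities}). For completeness, the plan is to reduce to the symmetric case and then run the matrix Laplace-transform argument. First I would pass to the Hermitian dilation: for each $j$ set
\[
X_j := \begin{pmatrix} 0 & \underline{W}_j \\ \underline{W}_j^\top & 0 \end{pmatrix},
\]
a symmetric $(d_1+d_2)\times(d_1+d_2)$ matrix. The dilation is linear and its nonzero eigenvalues occur in $\pm$ pairs with $\lambda_{\max}(X_j)=\|\underline{W}_j\|_2$, so $\|\sum_j \underline{W}_j\|_2=\lambda_{\max}(\sum_j X_j)$ and the boundedness hypothesis transfers as $\|X_j\|_2\le R_n$. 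Since $X_j^2=\mathrm{diag}(\underline{W}_j\underline{W}_j^\top,\ \underline{W}_j^\top\underline{W}_j)$ is block diagonal, the spectral norm of $\sum_j \E[X_j^2]$ equals the maximum of the two block norms, namely $\sigma_n^2$. Thus it suffices to prove, for symmetric mean-zero $X_1,\dots,X_n$ of dimension $d=d_1+d_2$ with $\|X_j\|_2\le R_n$ and $\|\sum_j\E[X_j^2]\|_2=\sigma_n^2$, the one-sided bound $\Pr\{\lambda_{\max}(\sum_j X_j)\ge t\}\le d\exp\{(-t^2/2)/(\sigma_n^2+R_nt/3)\}$.

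For the symmetric problem I would apply Markov's inequality to the trace exponential: for any $\theta>0$,
\[
\Pr\Big\{\lambda_{\max}\Big(\textstyle\sum_j X_j\Big)\ge t\Big\}\le e^{-\theta t}\,\E\,\tr\exp\Big(\theta\textstyle\sum_j X_j\Big).
\]
The central step is the subadditivity of matrix cumulant generating functions, which follows from Lieb's concavity theorem together with Jensen's inequality:
\[
\E\,\tr\exp\Big(\textstyle\sum_j \theta X_j\Big)\le \tr\exp\Big(\textstyle\sum_j \log\E\,e^{\theta X_j}\Big).
\]
An alternative route via the Golden--Thompson inequality (the Ahlswede--Winter approach) avoids Lieb's theorem at the cost of a slightly weaker variance proxy.

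It then remains to control each matrix moment generating function in the semidefinite order. Using $\E X_j=0$ and $\lambda_{\max}(X_j)\le R_n$, the transfer rule applied to the scalar estimate $e^{a}\le 1+a+a^2 f(\theta)$ (valid for $a\le \theta R_n$) yields the semidefinite bound $\E\,e^{\theta X_j}\preceq \exp\{f(\theta)\,\E[X_j^2]\}$ with $f(\theta)=(e^{\theta R_n}-\theta R_n-1)/R_n^2$, hence $\log\E\,e^{\theta X_j}\preceq f(\theta)\,\E[X_j^2]$ by operator monotonicity of the logarithm. Summing and using the monotonicity of $M\mapsto\tr\exp(M)$ gives $\tr\exp(\sum_j\log\E\,e^{\theta X_j})\le d\exp\{f(\theta)\sigma_n^2\}$, so that
\[
\Pr\Big\{\lambda_{\max}\Big(\textstyle\sum_j X_j\Big)\ge t\Big\}\le d\,\exp\{-\theta t+f(\theta)\sigma_n^2\}.
\]

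Finally I would optimize the exponent over $\theta>0$. With the choice $\theta=R_n^{-1}\log(1+R_nt/\sigma_n^2)$ this gives the Bennett-type bound, which is then lower-bounded by the standard estimate $\inf_{\theta>0}\{-\theta t+f(\theta)\sigma_n^2\}\le (-t^2/2)/(\sigma_n^2+R_nt/3)$ to produce the claimed form. The main obstacle is the subadditivity step: Lieb's concavity theorem is the genuinely nontrivial ingredient, and care is needed in verifying the semidefinite MGF bound and the dilation variance identity, whereas the concluding scalar optimization is routine.
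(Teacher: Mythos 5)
Your proposal is correct: the paper itself does not prove this lemma but simply cites \citet{tropp2015introduction}, and your sketch is precisely the standard argument from that reference (Hermitian dilation to reduce to the symmetric case, the matrix Laplace transform method with Lieb's concavity theorem for subadditivity of the matrix cumulant generating function, the semidefinite MGF bound, and the final scalar optimization). Nothing further is needed; the only nitpick is the phrase ``lower-bounded'' in the last step, where you mean the infimum of the exponent is bounded \emph{above} by $(-t^2/2)/(\sigma_n^2+R_nt/3)$, which is what your displayed inequality correctly states.
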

\begin{proof}[Proof of Lemma~\ref{lem::mat_inequal}]
See \citet{tropp2015introduction}.
\end{proof}

\begin{lemma}
\label{lem::onestep8.5}
If Assumption~\ref{cond::1} and~\ref{cond::regularity} hold, we have
$$\Vert  \sumq \{ (n^*)^{-1}(1-S_{ij}) - n^{-1}S_{ij}1\{Z_{ij}=z\} \tilde w(X_{ij}) \} B(X_{ij}) \Vert_2 = O_P( K^{1/4}(\log K)^{1/2} n^{-1/2}).$$
\end{lemma}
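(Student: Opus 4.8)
The plan is to recognize the displayed quantity as a sum of $N := n + n^*$ independent, mean-zero random vectors in $\R^K$ and to control its Euclidean norm with the matrix Bernstein inequality of Lemma~\ref{lem::mat_inequal}. Write the summand for unit $ij$ as $W_{ij} := \xi_{ij} B(X_{ij})$ with $\xi_{ij} := (n^*)^{-1}(1-S_{ij}) - n^{-1} S_{ij}\indz \tilde w_z(X_{ij})$, so that a target unit ($S_{ij}=0$) contributes $(n^*)^{-1}B(X_{ij})$ and a study unit ($S_{ij}=1$) contributes $-n^{-1}\indz\tilde w_z(X_{ij})B(X_{ij})$. Conditioning on the realized sample sizes, the study and target samples are two independent i.i.d.\ samples (from $\dpp$ and $\dt$), so the $W_{ij}$ are mutually independent. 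First I would verify $\E\big[\sum_{ij\in\pq} W_{ij}\big]=0$: the target part has mean $\E_\dt\{B(X)\}$, while for a study unit the form $\tilde w_z(x)=t(x)/\{p(x)e_z(x)\}\cdot 1\{x\in V\}$ (Assumption~\ref{cond::inverse_prop}, with $e_1=e$, $e_0=1-e$) together with $\supp(\dt_X)=V$ gives $\int p(x)e_z(x)\tilde w_z(x)B(x)\,dx = \int_V t(x)B(x)\,dx = \E_\dt\{B(X)\}$, which is exactly the balancing identity \eqref{eq:balance}; the two means cancel.

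Next I would assemble the two inputs to Lemma~\ref{lem::mat_inequal} applied with $d_1=K$, $d_2=1$. For the almost-sure bound, since $|\xi_{ij}|\le \max\{(n^*)^{-1}, n^{-1}C_0\}=O(n^{-1})$ (using $\tilde w_z = n\rho'\le C_0$ from Assumption~\ref{cond::regularity}(a) and $n^*/n$ bounded from (f)) and $\Vert B(x)\Vert_2\le C K^{1/2}$ from (b), we get $R_n := \max_{ij}\Vert W_{ij}\Vert_2 = O(K^{1/2}/n)$. The crux is the variance proxy. Because $(1-S_{ij})S_{ij}=0$, the cross term drops and $\sum_{ij}\E[W_{ij}W_{ij}^\T] = (n^*)^{-1}\E_\dt\{B(X)B(X)^\T\} + n^{-1}\E_\dt\{\tilde w_z(X)B(X)B(X)^\T\}$, whose spectral norm is $O(1/n)$ since $\E_\dt\{BB^\T\}\preceq \{(1+\alpha)/\alpha\}\,\E_\dq\{BB^\T\}$ is bounded in spectral norm by (b) and $\tilde w_z\le C_0$. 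For the scalar direction, $\sum_{ij}\E\Vert W_{ij}\Vert_2^2 = (n^*)^{-1}\E_\dt\{\Vert B\Vert_2^2\} + n^{-1}\E_\dt\{\tilde w_z\Vert B\Vert_2^2\}$; here I would use the trace bound $\E_\dt\{\Vert B\Vert_2^2\}=\tr(\E_\dt\{BB^\T\})\le K^{1/2}\Vert\E_\dt\{BB^\T\}\Vert_F = O(K^{1/2})$ rather than the crude pointwise $\Vert B\Vert_2^2\le C^2K$, giving $\sum_{ij}\E\Vert W_{ij}\Vert_2^2 = O(K^{1/2}/n)$. Hence $\sigma_n^2 = O(K^{1/2}/n)$.

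Finally I would take $t \asymp K^{1/4}(\log K)^{1/2} n^{-1/2}$, i.e.\ $t^2 \asymp \sigma_n^2\log K$. Then $R_n t / \sigma_n^2 \asymp K^{1/4}(\log K)^{1/2}n^{-1/2} \to 0$ under $K=O(n^\alpha)$ with $\alpha<2/3$ (Assumption~\ref{cond::regularity}(c)), so the denominator in Lemma~\ref{lem::mat_inequal} is dominated by $\sigma_n^2$ and the tail bound reads $(K+1)\exp\{-c'\log K\}$ with a constant $c'$ that grows with the leading constant in $t$; choosing that constant large enough makes the probability vanish, yielding $\Vert \sum_{ij\in\pq}W_{ij}\Vert_2 = O_P(K^{1/4}(\log K)^{1/2}n^{-1/2})$. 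I expect the main obstacle to be the sharp variance computation in the middle paragraph: obtaining the factor $K^{1/2}$ (rather than $K$) in $\sigma_n^2$ --- which is precisely what turns a naive $K^{1/2}$ rate into the claimed $K^{1/4}$ --- requires exploiting the trace/Frobenius structure of $\E_\dt\{BB^\T\}$ from Assumption~\ref{cond::regularity}(b), together with carefully transferring moment bounds from $\dq$ to $\dt$ via the overlap in Assumption~\ref{cond::1}(a).
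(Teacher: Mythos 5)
Your proposal is correct and follows essentially the same route as the paper's proof: the same decomposition $W_{ij}=\xi_{ij}B(X_{ij})$, mean-zero verification via the balancing property, the matrix Bernstein inequality of Lemma~\ref{lem::mat_inequal}, the trace-to-Frobenius bound $\tr(A)\leq K^{1/2}\Vert A\Vert_F$ to obtain the crucial $K^{1/2}$ (not $K$) factor in the variance proxy, and the threshold $t\asymp K^{1/4}(\log K)^{1/2}n^{-1/2}$. The only cosmetic difference is that you transfer second moments to $\dt$ while the paper bounds them directly under $\dq$; both yield the same $\sigma_n^2=O(K^{1/2}/n)$ and hence the same rate.
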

\begin{proof}[Proof of Lemma~\ref{lem::onestep8.5}]
We will use Lemma~\ref{lem::mat_inequal} to prove this. Let us denote
\begin{eqnarray*}
\underline{W}_{ij} = \{(n^*)^{-1}(1-S_{ij}) - n^{-1}S_{ij}1\{Z_{ij}=z\} \tilde w(X_{ij})\} B(X_{ij}), \quad \text{for } ij \in \pq.    
\end{eqnarray*}

First, by Assumption~\ref{cond::1} $\E_{\dq}(\underline{W}_{ij}) = 0$. Second, we have
\begin{eqnarray*}
\Vert \underline{W}_{ij} \Vert_2 &=&  | (n^*)^{-1}(1-S_{ij}) - n^{-1} S_{ij}1\{Z_{ij}=z\} \tilde w(X_{ij}) \} | \times \Vert B(X_{ij}) \Vert_2 \\
&\leq& (\min(n,  n^*))^{-1} \left[ 1 + | n \rho^\prime\{B(X_{ij})^\T \tilde \lambda_{1z}  \}|  \right] \sup_{x \in \X} \Vert B(x) \Vert_2 \\
&\leq& (\min(n,  n^*))^{-1}CK^{1/2}.
\end{eqnarray*}
The second inequality follows from Assumption~\ref{cond::regularity}(a)--(b). 

Next, we consider
\begin{eqnarray}
&& \left\Vert \sumq \E \left(  \underline{W}_{ij}^\T \underline{W}_{ij} \right)  \right\Vert_2 \nonumber \\
&=& \sumq \E \left\{  \left\{ (n^*)^{-1}(1-S_{ij}) - n^{-1} S_{ij}1\{Z_{ij}=z\} \tilde w(X_{ij}) \right\}^2 B(X_{ij})^\T B(X_{ij})  \right\} \nonumber \\
&\leq& C(\min(n,  n^*))^{-2} (n+n^*) \tr \left[ \E_{\dq} \left\{  B(X)^\T B(X)  \right\} \right]
\label{eq:sumEl2}
\end{eqnarray}
The inequality follows from Assumption~\ref{cond::inverse_prop} and ~\ref{cond::regularity}(a).
Now, let $ \nu_1, \ldots, \nu_K$  be the eigenvalues of a non-negative definite matrix $A$. By the Cauchy--Schwarz inequality,
$$
\tr(A) \leq K^{1/2} (\nu_1^2 + \ldots + \nu_K^2)^{1/2} = K^{1/2} \Vert A \Vert_F.
$$
Thus, from Equation~\eqref{eq:sumEl2}, we get
\begin{eqnarray*}
\left\Vert \sumq \E \left(  \underline{W}_{ij}^\T \underline{W}_{ij} \right)  \right\Vert_2
\leq C K^{1/2} \frac{n+n^*}{(\min(n,  n^*))^{2}} \left\Vert \E_{\dq} \left\{  B(X)^\T B(X)  \right\} \right\Vert_F   
\leq C^\prime K^{1/2} \frac{n+n^*}{(\min(n,  n^*))^{2}}.
\end{eqnarray*}
Here the second inequality follows from Assumption~\ref{cond::regularity}(b).
Moreover
\begin{eqnarray}
&& \left\Vert \sumq \E \left(  \underline{W}_{ij} \underline{W}_{ij}^\T \right)  \right\Vert_2 \nonumber \\
&=& \sumq \left\Vert \E \left\{ \left\{ (n^*)^{-1} (1-S_{ij}) - n^{-1} S_{ij}1\{Z_{ij}=z\} \tilde w(X_{ij}) \right\}^2 B(X_{ij}) B(X_{ij})^\T  \right\} \right\Vert_2 \nonumber \\
&\leq& C (\min(n,  n^*))^{-2} (n+n^*) \left\Vert \E_{\dq} \left\{  B(X) B(X)^\T  \right\} \right\Vert_2 \nonumber \\
&\leq& C(\min(n,  n^*))^{-2} (n+n^*)\left\Vert \E_{\dq} \left\{  B(X) B(X)^\T  \right\} \right\Vert_F \leq C^{\prime \prime} (\min(n,  n^*))^{-2}(n+n^*).
\label{eq:sumEl2.2}
\end{eqnarray}
Here the first inequality is due to the triangle inequality. The second inequality holds by bounding $\left\{ (1-S_{ij}) - S_{ij}1\{Z_{ij}=z\} \tilde w(X_{ij}) \right\}^2$ as before.
The second inequality holds because spectral norm is dominated by the Frobenius norm.
The final inequality again follows from Assumption~\ref{cond::regularity}(b).

Therefore, combining equations~\eqref{eq:sumEl2} and~\eqref{eq:sumEl2.2} we get
\begin{eqnarray*}
\sigma^2_{n+n^*} := \max \left\{ 
\left\Vert \sumq \E_{\dt} \left(  \underline{W}_{ij}^\T \underline{W}_{ij} \right)  \right\Vert_2,
\left\Vert \sumq \E_{\dt} \left(  \underline{W}_{ij} \underline{W}_{ij}^\T \right)  \right\Vert_2 \right\} \leq C K^{1/2} \frac{n+n^*}{(\min(n,  n^*))^{2}}.
\end{eqnarray*}
Applying Lemma~\ref{lem::mat_inequal}, we have
\begin{eqnarray}
\label{eq:inequal_W}
&& \Pr \left( \left\Vert \sumq \underline{W}_{ij} \right\Vert_2 \geq t \right) \nonumber \\
&\leq& (K+1) \exp \left[ \frac{t^2/2}{CK^{1/2}\frac{n+n^*}{(\min(n,  n^*))^{2}} + C^\prime K^{1/2}t \frac{n+n^*}{3(\min(n,  n^*))^{2}} } \right]
\end{eqnarray}
By Assumption~\ref{cond::regularity}(c), the right hand side of equation~\eqref{eq:inequal_W} goes to 0 if $$t = \bar C K^{1/4}(\log K)^{1/2} \frac{\sqrt{n+n^*}}{\min(n,  n^*)} \quad \text{for some $\bar C >0$.}$$ This, together with Assumption~\ref{cond::regularity}(f), gives $\left\Vert \sumq \underline{W}_{ij} \right\Vert_2 
= O_P( K^{1/4}(\log K)^{1/2} n^{-1/2})$. 
\end{proof}

\begin{lemma}
\label{lem::onestep8.6}
With probability tending to one, $\nu_{\min} \left( \sum_{ij \in \pq: S_{ij}Z_{ij}=1} n^{-1} B(X_{ij}) B(X_{ij})^\T \right) > \tilde C$ for some constant $\tilde C >0$.
\end{lemma}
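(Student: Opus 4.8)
The plan is to show that the restricted empirical second-moment matrix concentrates in spectral norm around a population matrix whose smallest eigenvalue is bounded away from zero, and then to close the argument with Weyl's inequality. Write $\hat\Sigma := n^{-1}\sum_{ij \in \pq:\, S_{ij}Z_{ij}=1} B(X_{ij})B(X_{ij})^\T$ and set $M := \E_{\dq}(\hat\Sigma)$. Since the target units carry $S_{ij}=0$ and so never contribute, conditioning on the covariates gives $M = \tfrac{n+n^*}{n}\,\E_{\dq}\!\left[\Pr_{\dq}(S=1\mid X)\,e(X)\,B(X)B(X)^\T\right]$, which exposes the treated-participation weight $\Pr_{\dq}(S=1\mid X)\,e(X)$ as the only new factor relative to condition~\ref{cond::regularity}(d).

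First I would establish the population lower bound $\nu_{\min}(M) \ge 2\tilde C$ for some constant $\tilde C>0$. Under Assumption~\ref{cond::1}(a) both $e(x)$ and $\Pr_{\dq}(S=1\mid X=x)$ are bounded away from their extremes for $x$ in the target support $V := \supp(\dt_X)$, so the weight $\Pr_{\dq}(S=1\mid X)e(X)$ exceeds a constant $c_0>0$ on $V$. Discarding the nonnegative contribution from outside $V$ yields $M \succeq c_0\,\E_{\dq}\{\indicator{X\in V}B(X)B(X)^\T\}$, whose smallest eigenvalue is bounded below by a positive constant provided the basis remains non-degenerate on $V$ (a consequence of the eigenvalue condition~\ref{cond::regularity}(d)); the factor $\tfrac{n+n^*}{n}\ge 1$ only helps.

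Second I would control $\hat\Sigma - M$ with the matrix Bernstein inequality of Lemma~\ref{lem::mat_inequal}, mirroring the proof of Lemma~\ref{lem::onestep8.5}. Taking $\underline W_{ij} = n^{-1}\{S_{ij}Z_{ij}B(X_{ij})B(X_{ij})^\T - \E_{\dq}[S Z B(X)B(X)^\T]\}$, condition~\ref{cond::regularity}(b) gives the almost-sure bound $\Vert \underline W_{ij}\Vert_2 \le 2CK/n =: R$, and the variance proxy obeys $\sigma^2 = \max\{\Vert\sum \E\,\underline W_{ij}^\T\underline W_{ij}\Vert_2,\ \Vert\sum\E\,\underline W_{ij}\underline W_{ij}^\T\Vert_2\} \le C' K(n+n^*)/n^2 = O(K/n)$, using $\Vert B(X)\Vert_2^2 \le CK$, the Frobenius bound in condition~\ref{cond::regularity}(b), and the sample-size ratio in condition~\ref{cond::regularity}(f). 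Applying Lemma~\ref{lem::mat_inequal} with a fixed threshold $t=\tilde C$ produces a tail of order $K\exp(-c\,n/K)$, which vanishes because $K=O(n^\alpha)$ with $\alpha<2/3<1$ forces $n/(K\log K)\to\infty$; hence $\Vert\hat\Sigma - M\Vert_2 = o_P(1)$.

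Finally, Weyl's inequality gives $\nu_{\min}(\hat\Sigma) \ge \nu_{\min}(M) - \Vert\hat\Sigma - M\Vert_2 \ge 2\tilde C - o_P(1) > \tilde C$ with probability tending to one. I expect the concentration step to be the main obstacle: one must verify that both the per-term spectral bound $R$ and the variance proxy $\sigma^2$ scale like $K/n$, and that the growth restriction on $K$ makes the Bernstein tail collapse at a \emph{constant} threshold. By contrast, the population lower bound is comparatively routine once overlap (Assumption~\ref{cond::1}(a)) and condition~\ref{cond::regularity}(d) are in hand.
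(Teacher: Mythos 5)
Your overall strategy---matrix Bernstein concentration plus Weyl's inequality---is the same one the paper relies on: for this particular lemma the paper offers no argument of its own but simply cites \citet[Lemma 8.6]{chattopadhyay2024one}, and its proof of the AD analogue (Lemma~\ref{lem:ad_onestep8.6}) follows exactly the two-step template you propose. Your concentration step is correct: with per-term bound $R=O(K/n)$ and variance proxy $\sigma^2=O(K/n)$ (both from condition~\ref{cond::regularity}(b) and (f)), Lemma~\ref{lem::mat_inequal} at a fixed threshold gives a tail of order $K\exp(-cn/K)$, which vanishes under the growth condition~\ref{cond::regularity}(c), and the Weyl step is routine.

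The gap is in your population lower bound, in two places. First, Assumption~\ref{cond::1}(a) gives only \emph{pointwise} positivity of $e(x)$ and of $\Pr_{\dq}(S=1\mid X=x)$ on $V$, not a uniform constant $c_0>0$; uniformity must come from elsewhere. (It can be patched: Assumption~\ref{cond::inverse_prop} together with condition~\ref{cond::regularity}(a) give $t(x)/\{p(x)e(x)\}\leq C_0$ on $V$, and since $\Pr_{\dq}(S=1\mid X=x)\,q(x)=p(x)/(1+\alpha)$, one obtains $M \succeq \tfrac{n+n^*}{n}\tfrac{1}{C_0(1+\alpha)}\,\E_{\dt}\{B(X)B(X)^\T\}$.) Second, and more seriously, your parenthetical claim that non-degeneracy of the basis restricted to $V$ is ``a consequence of condition~\ref{cond::regularity}(d)'' is a wrong inference. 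Condition (d) bounds $\nu_{\min}$ of $\E_{\dq}\{B(X)B(X)^\T\}$, i.e., of the mixture $\tfrac{1}{1+\alpha}\E_{\dpp}\{B(X)B(X)^\T\}+\tfrac{\alpha}{1+\alpha}\E_{\dt}\{B(X)B(X)^\T\}$; a lower bound on the mixture transfers to neither component nor to any restriction to $V$ or to the treated study subpopulation. In particular, since $K\to\infty$, the basis functions can be nearly linearly dependent on $V$ yet well-conditioned on $\supp(\dq_X)$, so $\nu_{\min}$ of your restricted matrix can tend to zero while (d) holds. Thus the key inequality $\nu_{\min}(M)\geq 2\tilde C$ does not follow from the paper's stated conditions; it is effectively an additional assumption, imported wholesale through the citation of \citet[Lemma 8.6]{chattopadhyay2024one}, whose own regularity conditions place the eigenvalue bound on the relevant restricted second-moment matrix. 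Your proof becomes complete once you either assume that restricted eigenvalue bound directly or strengthen (d) to apply to $\E_{\dt}\{B(X)B(X)^\T\}$ (which, combined with the patch above, yields the bound).
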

\begin{proof}[Proof of Lemma~\ref{lem::onestep8.6}]
See~\citet[Lemma 8.6]{chattopadhyay2024one}.
\end{proof}

We then combine Lemma~\ref{lem::onestep8.5} and~\ref{lem::onestep8.6} to bound the term $\Vert \lambda_z^\dag -\tilde\lambda_{1z}  \Vert_2$ in~\eqref{eq:wdiff_bound}.
\begin{lemma}
\label{lem:lambda_diff}
There exists a dual solution $\lambda_z^\dag$ s.t. 
$\Vert \lambda^\dag_z - \tilde\lambda_{1z} \Vert_2 = O_p \left( K^{1/4} (\log K)^{1/2}n^{-1/2} \right)$.
\end{lemma}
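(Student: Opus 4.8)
The plan is to treat $\lambda_z^\dag$ as the minimizer of the convex dual objective from Theorem~\ref{thm:dual}(a) and to apply a standard strong-convexity (``basin'') argument localized around the population target $\tilde\lambda_{1z}$ defined in Assumption~\ref{cond::inverse_prop}. Write $g_z(\lambda)$ for the dual objective with the sample-boundedness multipliers $u$ profiled out at their optimum, so that $g_z$ inherits the positive-part structure of the weights through $\rho'$ and $\lambda_z^\dag$ is an unconstrained minimizer of $g_z$; convexity of $g_z$ follows from $\rho''<0$. The two ingredients to control are the size of the (sub)gradient of $g_z$ at $\tilde\lambda_{1z}$ and the curvature of $g_z$ in a neighborhood of $\tilde\lambda_{1z}$.

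For the gradient, I would differentiate $g_z$ and evaluate at $\tilde\lambda_{1z}$. Using $B^* = (n^*)^{-1}\sum_{ij\in\pq}(1-S_{ij})B(X_{ij})$ together with the identity $\tilde w_z(x) = n\rho'\{B(x)^\T\tilde\lambda_{1z}\}\,1\{\rho'\{B(x)^\T\tilde\lambda_{1z}\}\geq 0\}$ from Assumption~\ref{cond::inverse_prop}, the smooth part of $\nabla g_z(\tilde\lambda_{1z})$ collapses exactly to $\sum_{ij\in\pq}\{(n^*)^{-1}(1-S_{ij}) - n^{-1}S_{ij}\indz\,\tilde w_z(X_{ij})\}B(X_{ij})$, which Lemma~\ref{lem::onestep8.5} bounds by $O_P(K^{1/4}(\log K)^{1/2}n^{-1/2})$. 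The subgradient contribution of the penalty $|\lambda|^\T\delta$ is bounded in norm by $\Vert\delta\Vert_2 = O_P(K^{1/4}\{(\log K)/n\}^{1/2})$ via Assumption~\ref{cond::regularity}(e), so overall $\Vert\nabla g_z(\tilde\lambda_{1z})\Vert_2 = O_P(K^{1/4}(\log K)^{1/2}n^{-1/2})$.

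For the curvature, the Hessian of the smooth part is $\sum_{ij:S_{ij}Z_{ij}=1}\{-\rho''(B(X_{ij})^\T\lambda)\}B(X_{ij})B(X_{ij})^\T$. Assumption~\ref{cond::regularity}(a) gives $-n\rho''\in[-C_2,-C_1]$ with $0<-C_2$ on a neighborhood of $B(x)^\T\tilde\lambda_{1z}$, while Lemma~\ref{lem::onestep8.6} gives $\nu_{\min}(n^{-1}\sum_{ij:S_{ij}Z_{ij}=1}B(X_{ij})B(X_{ij})^\T)>\tilde C>0$ with probability tending to one, so the Hessian is bounded below by $(-C_2)\tilde C\,I\succ 0$, a local strong-convexity modulus $\mu>0$. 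I would then combine the two bounds: on the sphere $\{\lambda=\tilde\lambda_{1z}+rv:\Vert v\Vert_2=1\}$ a second-order Taylor expansion yields $g_z(\lambda)-g_z(\tilde\lambda_{1z})\geq -\Vert\nabla g_z(\tilde\lambda_{1z})\Vert_2\,r+\tfrac{\mu}{2}r^2$, which is strictly positive once $r$ exceeds a fixed multiple of $\Vert\nabla g_z(\tilde\lambda_{1z})\Vert_2$; convexity of $g_z$ along each ray then forces $\lambda_z^\dag$ into this ball, giving $\Vert\lambda_z^\dag-\tilde\lambda_{1z}\Vert_2 = O_P(K^{1/4}(\log K)^{1/2}n^{-1/2})$, as claimed.

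The main obstacle I anticipate is making the curvature step rigorous despite the nonsmoothness introduced by the truncation $1\{\rho'\geq 0\}$ (equivalently by the multipliers $u$) and by the penalty $|\lambda|^\T\delta$: the objective is only piecewise twice-differentiable, and the \emph{local} nature of Assumption~\ref{cond::regularity}(a) means the localization has to be secured before the Hessian lower bound can be invoked. Following the structure of Lemma~8.3 in \citet{chattopadhyay2023implied}, I would first show that the active set (the selected units, equivalently $\{u_{ij}^\dag=0\}$) stabilizes around the population target support as $\lambda\to\tilde\lambda_{1z}$, so that on the relevant ball $g_z$ coincides with a smooth, strongly convex surrogate, and then run the Taylor/basin argument on that surrogate.
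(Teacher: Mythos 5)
Your proposal follows essentially the same route as the paper's proof: profile out the multipliers $u$ to get the convex objective $G(\lambda)=\min_{u\geq 0}g(\lambda,u)$, reduce via convexity to showing $G$ exceeds $G(\tilde\lambda_{1z})$ on the sphere of radius $r \asymp K^{1/4}(\log K)^{1/2}n^{-1/2}$, identify the gradient at $\tilde\lambda_{1z}$ with the empirical balance error bounded by Lemma~\ref{lem::onestep8.5} plus the $\Vert\delta\Vert_2$ penalty term from Assumption~\ref{cond::regularity}(e), and obtain the quadratic lower bound from Assumption~\ref{cond::regularity}(a) together with Lemma~\ref{lem::onestep8.6}, choosing the constant $C^*$ large. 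The only difference is one of care rather than substance: the paper runs the Taylor/basin argument directly on $G$ with an intermediate point and silently drops the indicator $1\{\rho^\prime\geq 0\}$ when lower-bounding the Hessian term, whereas you explicitly flag that localization/truncation subtlety and propose an active-set stabilization step to justify it.
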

\begin{proof}[Proof of Lemma~\ref{lem:lambda_diff}]
Following the proof structure of \citealt{chattopadhyay2024one,fan2016improving,wang2020minimal}. All the subsequent probabilities and expectations are taken with respect to $\dq$.

Set $r = C^* \left\{ K^{1/4} (\log K)^{1/2}n^{-1/2} \right\}$ for a sufficiently large constant $C^* >0$. Let $\Delta = \lambda - \tilde\lambda_{1z}$. Also, set $C = \{ \Delta \in \R^K: \Vert \Delta \Vert_2 \leq r  \}$. To show that there exists a dual solution $\lambda_z^\dag$ s.t. $\Vert \lambda^\dag_z - \tilde\lambda_{1z} \Vert_2 = O_p \left( K^{1/4}\{(\log K)/n \}^{1/2} \right)$, it suffices to show that there exists a $\Delta^\dag \in \R^K$ s.t.$\Pr(\Delta^\dag \in C) \rightarrow 1$ as $n \rightarrow \infty$.

Recall from Theorem~\ref{thm:dual} the dual objective of~\eqref{eq:PBM0} is given by
\begin{eqnarray*}
g(\lambda, u) = \sum_{ij \in \pq} \left[ - \indz S_{ij} \rho \{B(X_{ij})^\T \lambda  - \uij \} \right] + (B^*) ^\T \lambda + |\lambda |^\T \delta.
\end{eqnarray*}
It follows from the proof of Theorem~\ref{thm:dual}
that given $\lambda$,
\begin{eqnarray*}
\argmin_{u_{ij}} g(\lambda, u) =
\left\{ 
\begin{aligned}
& B(X_{ij})^\T \lambda - (\rho^\prime)^{-1}(0), &  \rho^\prime(B(X_{ij})^\T \lambda) > 0\\
&0, & \text{o.w.}
\end{aligned}
\right. 
\end{eqnarray*}
Denote $G(\lambda):= \min_{u \geq 0} g(\lambda, u)$, we then have
\begin{eqnarray*}
G(\lambda) 
&=& \sum_{ij \in \pq} \left[ \indz S_{ij} \tilde G_{ij}(\lambda) \right] + (B^*) ^\T \lambda + |\lambda |^\T \delta, \\
\tilde G_{ij}(\lambda) &=& 
-1\{\rho^\prime \{B(X_{ij})^\T \lambda \} \geq 0 \} \rho \{B(X_{ij})^\T \lambda \} 
-  1\{\rho^\prime \{B(X_{ij})^\T  \lambda \} < 0 \} \rho \{(\rho^\prime)^{-1}(0) \}, 
\end{eqnarray*}
where $\tilde G_{ij}(\lambda)$ is a piecewise function that is continuous and differentiable.

Since $\psi(\cdot)$ is convex, $\rho(\cdot)$ is concave, $G(\tilde \lambda_{1z} + \Delta)$ is also convex in $\Delta$. Moreover, $G(\tilde \lambda_{1z} + \Delta)$ is also continuous in $\Delta$. Therefore, to show $\Pr(\Delta^\dag \in C) \rightarrow 1$, it is enough to show that as $n \rightarrow \infty$
\begin{eqnarray}
\Pr \left( \inf_{\Delta \in \partial \mathcal{C}} G(\tilde \lambda_{1z} + \Delta) - G(\tilde \lambda_{1z}) > 0   \right) \rightarrow 1,
\end{eqnarray}
where $\partial \mathcal{C}$ is the boundary set of $\mathcal{C}$ given by $\partial \mathcal{C} = \{\Delta \in \R^K:  \Vert \Delta \Vert_2 = r \}$.

Now, fix $\Delta \in \partial \mathcal{C}$.
Apply multivariate Taylor's theorem, we have for some intermediate $\tilde {\tilde \lambda}$,
\begin{eqnarray}
&& G(\tilde \lambda_{1z} + \Delta) - G(\tilde \lambda_{1z}) \nonumber \\
&=& \sum_{ij \in \pq} \left[ - \indz S_{ij} 1\{\rho^\prime \{B(X_{ij})^\T \tilde \lambda_{1z}\} \geq 0 \} \rho^\prime \{B(X_{ij})^\T \tilde \lambda_{1z}\}  B(X_{ij})^\T \right]  \Delta \nonumber  \\
&& +  \Delta^\T \sum_{ij \in \pq} \left[ - \indz S_{ij} 1\{\rho^\prime \{B(X_{ij})^\T \tilde {\tilde \lambda}\} \geq 0 \} \rho^{\prime\prime} \{B(X_{ij})^\T \tilde {\tilde \lambda}  \}B(X_{ij}) B(X_{ij})^\T \right] \Delta/2 \nonumber  \\
&& + (B^*) ^\T \Delta + (|\tilde \lambda_{1z} + \Delta| - |\Delta| )^\T \delta \nonumber \\
&\geq& -\Vert \Delta \Vert_2 \left\Vert  
\sum_{ij \in \pq} \left[ - \indz S_{ij} 1\{\rho^\prime \{B(X_{ij})^\T \tilde \lambda_{1z}\} \geq 0 \} \rho^\prime \{B(X_{ij})^\T \tilde \lambda_{1z}  \} B(X_{ij}) \right] + B^* \right\Vert_2 \nonumber \\
&& + (\Delta^\T \underline{M} \Delta)/2  - |\Delta|^\T \delta,
\label{eq:G_diff}
\end{eqnarray}
where $\underline{M} = \sum_{ij \in \pq} \left[ - \indz S_{ij} 1\{\rho^\prime \{B(X_{ij})^\T \tilde {\tilde \lambda} \} \geq 0 \} \rho^{\prime\prime} \{B(X_{ij})^\T \tilde {\tilde \lambda}  \} B(X_{ij}) B(X_{ij})^\T \right]$.
The last inequality is due to the Cauchy--Schwarz inequality (for the first two terms) and the triangle inequality (for the third term).
Since $\Vert \Delta \Vert_2 = r$, by Cauchy--Schwarz we have
\begin{eqnarray*}
&& G(\tilde \lambda_{1z} + \Delta) - G(\tilde \lambda_{1z})\\
& \geq & - r  \left\Vert  
\sum_{ij \in \pq} \left[ - \indz S_{ij} 1\{\rho^\prime \{B(X_{ij})^\T \tilde \lambda_{1z}\} \geq 0 \} \rho^\prime \{B(X_{ij})^\T \tilde \lambda_{1z} \} B(X_{ij}) \right] + B^* \right\Vert_2   \\
&& - r\Vert \delta \Vert_2 +(\Delta^\T \underline{M} \Delta)/2 
\end{eqnarray*}
Moreover,
\begin{eqnarray*}
&& \left\Vert  
\sum_{ij \in \pq} \left[ - \indz S_{ij}  1\{\rho^\prime \{B(X_{ij})^\T \tilde \lambda_{1z}\} \geq 0 \} \rho^\prime \{B(X_{ij})^\T \tilde \lambda_{1z}  \} B(X_{ij}) \right] + B^* \right\Vert_2 \\
&=& \left\Vert  \sumq [(n^*)^{-1} (1-S_{ij}) -  n^{-1} S_{ij}1\{Z_{ij}=z\} \tilde w(X_{ij}) ] B(X_{ij}) \right\Vert_2 \\
&=& O_p \left\{ K^{1/4} (\log K)^{1/2}n^{-1/2} \right\}.
\end{eqnarray*}
Here, the last equality follows from Lemma~\ref{lem::onestep8.5}. This combined with Assumption~\ref{cond::regularity}(e) implies that with probability tending to one,
\begin{eqnarray*}
&& G(\tilde \lambda_{1z} + \Delta) - G(\tilde \lambda_{1z})\\
& \geq & (\Delta^\T \underline{M} \Delta)/2 - r O_p \left\{ K^{1/4} (\log K)^{1/2}n^{-1/2} \right\} \\
&=& \sum_{ij \in \pq} \left[ - \indz S_{ij} 1\{\rho^\prime \{B(X_{ij})^\T \tilde {\tilde \lambda} \} \geq 0 \} \rho^{\prime\prime} \{B(X_{ij})^\T \tilde {\tilde \lambda}  \} (\Delta^\T B(X_{ij}))^2\right]/2 \\
&& - r O_p \left\{ (K^{1/4} (\log K)^{1/2}n^{-1/2} \right\} \\
&\geq& C n^{-1} \sum_{ij \in \pq} \indz S_{ij} (\Delta^\T B(X_{ij}))^2 - r O_p \left\{ K^{1/4} (\log K)^{1/2}n^{-1/2} \right\} \\
& =& C \Delta^\T \left\{ \sum_{ij \in \pp: Z_{ij}=z} n^{-1} ( B(X_{ij})) B(X_{ij}))^\T \right\} \Delta - r O_p \left\{ K^{1/4} (\log K)^{1/2}n^{-1/2} \right\} \\
& \geq & C r^2 \nu_{\min} \left\{ \sum_{ij \in \pp: Z_{ij}=z} n^{-1} ( B(X_{ij})) B(X_{ij}))^\T \right\} - r O_p \left\{ K^{1/4} (\log K)^{1/2}n^{-1/2} \right\} \\
& \geq & C^{\prime \prime} r^2  - r O_p \left\{ K^{1/4} (\log K)^{1/2}n^{-1/2} \right\} \\
& = & C^{\prime \prime} (C^*)^2 \left\{ K^{1/4} (\log K)^{1/2}n^{-1/2} \right\}^2
- C^* O_p \left\{  \left\{ K^{1/4} (\log K)^{1/2}n^{-1/2} \right\}^2 \right\} > 0.
\end{eqnarray*}
Here, the second inequality follows from Assumption~\ref{cond::regularity}(a).
The third inequality holds since for a square matrix $A$, $x^\T A x \geq \nu_{\min}(A) \Vert x \Vert_2^2$. The forth inequality follows from Lemma~\ref{lem::onestep8.6}. 
Finally, the last inequality holds for a choice of $C^*$ large enough. 
This completes the proof.
\end{proof}

\begin{proof}
[Proof of Theorem~\ref{thm::unif_convg_prob}]
Lemma~\ref{lem:lambda_diff} and inequality~\eqref{eq:wdiff_bound} together complete the proof.
\end{proof}

\begin{proof}
[Proof of Theorem~\ref{thm::consistency}]
We first show that when Assumption~\ref{cond::outcome_model} holds, $\sumz \hat{w}^\pb_{z, ij} Y_{ij}(z) \cp \E_{\dt} \{ Y(z) \}$ as $n \rightarrow \infty$ for $z \in \{0, 1\}$. For simplicity, we write $Y_{ij}(z) = m_z(X_{ij}) + \epsilon_{z,ij}$, where $m_z(x) = \E_{\dt}(Y(z) \mid X=x)$ and $\epsilon_{z,ij} = Y_{ij}(z) - m_z(x)$ with $\E_{\dt} (\epsilon_{z, ij} \mid X_{ij}) = 0$. 
Notice that by Assumption~\ref{cond::outcome_model}, $m_z(x) = B(x)^\T \tilde \lambda_{2z} $.
We have
\begin{eqnarray*}
&& \left| \sumz \hat{w}^\pb_{z,ij} Y_{ij}(z) - \E_{\dt} \{ Y(z) \} \right| \\
& \leq &  \left| {\tilde \lambda_{2z}}^\T \left\{  \sum_{ij \in \pq} \hat{w}^\pb_{z,ij} B(X_{ij}) - B^*  \right\}  \right| 
+ \left| {\tilde \lambda_{2z}}^\T B^* - \E_{\dt} \{ Y(z) \}  \right| 
 + \left|  \sumz \hat{w}^\pb_{z,ij} \epsilon_{z,ij}  \right| \\
&\leq& |\tilde \lambda_{2z}|^\T \delta + o_P(1) +  \left|  \sumz \hat{w}^\pb_{z,ij} \epsilon_{z,ij}  \right| \\
&\leq& \Vert \tilde \lambda_{2z} \Vert_2 \Vert \delta \Vert_2 + o_P(1) +  \left|  \sumz \hat{w}^\pb_{z,ij} \epsilon_{z,ij}  \right| \\
&=& o_P(1) + \left|  \sumq \indz S_{ij} \rho^\prime ({\lambda^\dag}^\T B(X_{ij}) ) 1\{ \rho^\prime ({\lambda^\dag}^\T B(X_{ij}) ) \geq 0\} \epsilon_{z,ij}   \right| 
\end{eqnarray*} 
Here, the first inequality follows from the triangle inequality.
The second inequality is due to the imbalance control constraint in the optimization problem.
The third inequality follows from Cauchy--Schwarz inequality.
The last equality is due to Assumption~\ref{cond::outcome_model} and Theorem~\ref{thm:dual}(b).
Notice that Assumption~\ref{cond::regularity}(a) ensures that
\begin{eqnarray*}
- C_0n^{-1} \sumq \indz S_{ij} \epsilon_{z,ij}
\leq
\sumq \indz S_{ij} \rho^\prime ({\lambda^\dag}^\T B(X_{ij})) \epsilon_{z,ij}  \\
\leq
C_0 n^{-1} \sumq \indz S_{ij} \epsilon_{z,ij}.
\end{eqnarray*}
Both the upper and and lower bounds converge to a constant times
\begin{eqnarray*}
\E_{\dq} (\indz S_{ij} \epsilon_{z,ij}) = \E_{\dq} \left\{ \E_{\dq}(S_{ij} \mid X_{ij}) \Pr_{\dpp}( Z_{ij=z} \mid X_{ij})  \E_{\dq}(\epsilon_{z,ij} \mid X_{ij} ) \right\} =0.
\end{eqnarray*}
Therefore, 
\begin{eqnarray*}
\left|  \sumq \indz S_{ij} \rho^\prime ({\lambda^\dag}^\T B(X_{ij}) ) 1\{ \rho^\prime ({\lambda^\dag}^\T B(X_{ij}) ) \geq 0\} \epsilon_{z,ij}  \right| = o_P(1).
\end{eqnarray*}
Now to prove Theorem~\ref{thm::consistency}, it suffices to show that when Assumption~\ref{cond::inverse_prop} holds,  $\sumz \hat{w}^\pb_{z,ij} Y_{ij}(z) \cp \E_{\dt} \{ Y(z) \}$ as $n \rightarrow \infty$ for $z \in \{0, 1\}$.
\begin{eqnarray*}
&& \left| \sumz \hat{w}^\pb_{z,ij} Y_{ij}(z) - \E_{\dt} \{ Y(z) \} \right| \\
& \leq &  \left| \sumz \hat{w}^\pb_{z,ij} Y_{ij}(z) - n^{-1} \sumz \tilde{w}(X_{ij}) Y_{ij}(z) \right| + \left| n^{-1} \sumz \tilde{w}(X_{ij}) Y_{ij}(z) - \E_{\dt} \{ Y(z) \} \right| \\
& \leq& \sup_{x \in \X} \left| n\hat{w}^\pb_{z,ij}  - \tilde{w}(x)\right| \left\{ n^{-1} \sumz |Y_{ij}(z)| \right\}  + o_P(1)\\
&=& o_P(1).
\end{eqnarray*}
The last equality follows from Theorem~\ref{thm::unif_convg_prob}, and this completes the proof.
\end{proof}

\subsubsection{Proof of Theorem~\ref{thm:clt}}
\begin{proof}
[Proof of Theorem~\ref{thm:clt}]
All probabilities and expectations in this proof are computed with respect to the probability measure
$\dq$. We first decompose the estimator $\hat \tau^\pb - \tau$ as
\begin{eqnarray}
\hat\tau^\pb_\id - \tau = R_0 + R_1 + R_2 + \tilde R_1 + \tilde R_2,
\end{eqnarray}
where
\begin{eqnarray*}
R_0 &=& (n^*)^{-1} \sumq (1 - S_{ij}) \left\{ {\tilde \lambda_{21}}^\T B(X_{ij}) - {\tilde \lambda_{20}}^\T B(X_{ij})  \right\} 
+ n^{-1} \sumq S_{ij} Z_{ij} \tilde w_1(X_{ij})  \left\{ Y_{ij} - {\tilde \lambda_{21}}^\T B(X_{ij})  \right\} \\
&& - n^{-1} \sumq S_{ij} (1- Z_{ij}) \tilde w_0(X_{ij})  \left\{ Y_{ij} - {\tilde \lambda_{20}}^\T B(X_{ij})  \right\}
- \tau
\\
R_1 &=& \sumq Z_{ij} S_{ij} \left[ \hat w^\pb_{1, ij} - n^{-1} \tilde w_1(X_{ij}) \right] \{ Y_{ij}(1) - {\tilde \lambda_{21}}^\T B(X_{ij}) \}, \\
R_2 &=& \sumq(Z_{ij} S_{ij} \hat w^\pb_{1,ij} - (n^*)^{-1} (1- S_{ij})) \{ 
{\tilde \lambda_{21}}^\T B(X_{ij}) \} \\
\tilde R_1 &=& \sumq (1-Z_{ij}) S_{ij} \left[ \hat w^\pb_{0,ij} - n^{-1} \tilde w_0(X_{ij}) \right] \{ Y_{ij}(0) - {\tilde \lambda_{20}}^\T B(X_{ij}) \} ,\\
\tilde R_2 &=& \sumq((1 - Z_{ij}) S_{ij} \hat w^\pb_{0,ij} - (n^*)^{-1} (1-S_{ij}) ) \{ {\tilde \lambda_{20}}^\T B(X_{ij}) \},
\end{eqnarray*}
We can re-write
\begin{eqnarray*}
R_0 &=& (n^*)^{-1} \sumpt  \left\{ {\tilde \lambda_{21}}^\T B(X_{ij}) - {\tilde \lambda_{20}}^\T B(X_{ij})  \right\} 
+ n^{-1} \sump Z_{ij} \tilde w_1(X_{ij})  \left\{ Y_{ij} - {\tilde \lambda_{21}}^\T B(X_{ij})  \right\} \\
&& - n^{-1} \sump  (1- Z_{ij}) \tilde w_0(X_{ij})  \left\{ Y_{ij} - {\tilde \lambda_{20}}^\T B(X_{ij})  \right\}
- \tau.
\end{eqnarray*}
Central limit theorem and Slutsky's theorem ensure that as $n \rightarrow \infty$, we have
\begin{eqnarray*}
\sqrt{n} R_0 \cd \N (0, V)
\end{eqnarray*}
with
\begin{eqnarray}
\label{eq::var}
V &=&  \var_{\dt} \left(  {\tilde \lambda_{21}}^\T B(X) - {\tilde \lambda_{20}}^\T B(X)   \right)/{\alpha} \nonumber \\
&& + \var_{\dpp} \left( Z \tilde w_1(X)  \left\{ Y - {\tilde \lambda_{21}}^\T B(X)  \right\} -(1- Z) \tilde w_0(X)  \left\{ Y - {\tilde \lambda_{20}}^\T B(X)  \right\} \right).
\end{eqnarray}
Recall $\alpha  > 0$ is the odds of not being selected into the study compared to the odds of being selected  ($\alpha = \Pr_{\mathbb{Q}}(S=0)/\Pr_{\mathbb{Q}}(S=1)$). Furthermore, we have
\begin{eqnarray*}
&& \var_{\dt} \left(  {\tilde \lambda_{21}}^\T B(X) - {\tilde \lambda_{20}}^\T B(X)   \right) \\
&=& \var_{\dt} \left(  \E_{\dt} \left[ Y(1) - Y(0) \mid X\right]   \right), \\
&& \var_{\dpp} \left( Z \tilde w_1(X)  \left\{ Y - {\tilde \lambda_{21}}^\T B(X)  \right\} -(1- Z) \tilde w_0(X)  \left\{ Y - {\tilde \lambda_{20}}^\T B(X)  \right\} \right)\\
&=& \E_{\dpp} \left( \left( Z \tilde w_1(X)  \left\{ Y - {\tilde \lambda_{21}}^\T B(X)  \right\} -(1- Z) \tilde w_0(X)  \left\{ Y - {\tilde \lambda_{20}}^\T B(X)  \right\} \right)^2 \right)\\
&=& \E_{\dpp} \left( Z^2 \tilde w_1(X)  \tilde w_1(X) \var_{\dpp} \left[Y(1) \mid X \right] + (1- Z)^2 \tilde w_0(X) \tilde w_0(X)  \var_{\dpp} \left[Y(0) \mid X \right]  \right)\\
&=& \E_{\dt} \left(   \tilde w_1(X) \var_{\dt} \left[Y(1) \mid X \right] +  \tilde w_0(X)  \var_{\dt} \left[Y(0) \mid X \right]  \right).
\end{eqnarray*}
Therefore, we can further write the variance term as
\begin{eqnarray*}
V = \var_{\dt} \left(  \E_{\dt} \left[ Y(1) - Y(0) \mid X\right]   \right)/\alpha + \E_{\dt} \left(   \tilde w_1(X) \var_{\dt} \left[Y(1) \mid X \right] +  \tilde w_0(X)  \var_{\dt} \left[Y(0) \mid X \right]  \right).
\end{eqnarray*}

We first show that $\sqrt{n} R_1 \cp 0$ as $n \rightarrow \infty$.
It suffices to show that $\E_{\dq} (n R_1^2) \rightarrow 0$ due to Markov inequality $\Pr_{\dq}(\sqrt{n} |R_1| \geq a) \leq \E_{\dq} (n R_1^2)/a^2$.
We observe that $\hat w_{ij} \indep (Y_{ij}(0), Y_{ij}(1)) \mid (X_{ij})$ because of Assumption~\ref{cond::1} and the fact that $\hat w_{ij}$ is a function of $(X_{ij})$ and $(Z_{ij})$. 
\begin{eqnarray*}
\E_{\dq} (n R_1^2) & = &
\frac{1}{n} \E_{\dq} \left( \left\{ \sumq Z_{ij} S_{ij} \left[n \hat w^\pb_{1,ij} - \tilde w_1(X_{ij}) \right] \epsilon_{1,ij} \right\}^2  \right) \\
&=& \frac{1}{n} \E_{\dq} \left( \E_{\dq} \left( \left\{ \sumq Z_{ij} S_{ij} \left[n \hat w^\pb_{1,ij} - \tilde w_1(X_{ij}) \right] \epsilon_{1,ij} \right\}^2 \mid (X_{ij}) \right)  \right) \\
&=& \frac{1}{n} \E_{\dq} \left( \E_{\dq} \left(\sumq Z_{ij} S_{ij} \left[n \hat w^\pb_{1,ij} - \tilde w_1(X_{ij}) \right]^2 \epsilon_{1,ij}^2  \mid (X_{ij}) \right)  \right)\\
&\leq& \frac{1}{n} \E_{\dq} \left( \E_{\dq} \left(\sump \left[n \hat w^\pb_{1,ij} - \tilde w_1(X_{ij}) \right]^2 \epsilon_{1,ij}^2  \mid (X_{ij}) \right)  \right) \\
&=& \E_{\dq} \left( \E_{\dq} \left(  \left[n \hat w^\pb_{1,11} - \tilde w_1(X_{11}) \right]^2 \epsilon_{1,11}^2  \mid (X_{ij}) \right)  \right)\\
&=& \E_{\dq} \left( \E_{\dq} \left(  \left[n \hat w^\pb_{1,11} - \tilde w_1(X_{11}) \right]^2 \mid (X_{ij}) \right)\E_{\dq} \left(   \epsilon_{1,11}^2  \mid (X_{ij}) \right)  \right)\\
&\leq& C \E_{\dq} \left( \left[n \hat w^\pb_{1,11} - \tilde w_1(X_{11}) \right]^2 \right) \\
&=& o(1)
\end{eqnarray*}
Here, the second equality is due to the tower property of conditional expectation; 
the final equality follows from Theorem~\ref{thm::unif_convg_prob}. 
The third equality holds since by Assumption~\ref{cond::1}, for $ij \neq \tilde i \tilde j$, we have
\begin{eqnarray*}
&& \E_{\dq} \left( Z_{ij} S_{ij} Z_{\tilde i \tilde j} S_{\tilde i \tilde j} \left[n \hat w^\pb_{1,ij} - \tilde w_1(X_{ij}) \right] \left[n \hat w^\pb_{1,\tilde i \tilde j} - \tilde w_1(X_{\tilde i \tilde j}) \right] \epsilon_{1,ij} \epsilon_{1,\tilde i \tilde j}  \mid (X_{ij}) \right) \\
&=& \E_{\dq} \left( Z_{ij} S_{ij} Z_{\tilde i \tilde j} S_{\tilde i \tilde j} \left[n \hat w^\pb_{1,ij} - \tilde w_1(X_{ij}) \right] \left[n \hat w^\pb_{1,\tilde i \tilde j} - \tilde w_1(X_{\tilde i \tilde j}) \right]  \mid (X_{ij}) \right) \\
&& \E_{\dq} \left( \epsilon_{1,\tilde i \tilde j}  \mid (X_{ij}) \right) 
\E_{\dq} \left(\epsilon_{1,\tilde i \tilde j}  \mid (X_{ij}) \right) =0.
\end{eqnarray*}
Therefore, $\sqrt{n} R_1 \cp 0$ as $n \rightarrow \infty$. Similarly we can show that $\sqrt{n} \tilde R_1 \cp 0$ as $n \rightarrow \infty$.

We then show that $\sqrt{n} R_2 \cp 0$ as $n \rightarrow \infty$. Notice that
\begin{eqnarray*}
n^{1/2} |R_2| \leq \Vert \tilde \lambda_{21} \Vert_2 \left\Vert \sumt \hat w^\pb_{1,ij} B(X_{ij}) -  B^* \right\Vert_2 \leq  \Vert \tilde \lambda_{21} \Vert_2 \Vert \delta \Vert_2 = o(1).
\end{eqnarray*}
Here, the first inequality is due to Cauchy-Schwarz; the second inequality follows from the imbalance control constraints; the final equality holds by Assumption~\ref{cond::outcome_model}. Similarly, we can show  $\sqrt{n} \tilde R_2 \cp 0$ as $n \rightarrow \infty$.
This completes the proof.
\end{proof}

\subsubsection{Proof of Proposition~\ref{prop:heur_var}}
\begin{proof}
[Proof of Proposition~\ref{prop:heur_var}]
Following the proof of Proposition~\ref{prop:one-stage}, the linear mixed effect model~\eqref{eq:one-stage} can be re-written as the linear regression $\lmt \left( Y_{ij} \sim Z_{ij} + \underline{\tilde X}_{ij}  \right)$.
Let $\hat\tau^\ols$ denote the resulting OLS estimator from this regression.
Proposition~\ref{prop:one-stage} establishes that 
\begin{eqnarray*}
\hat\tau^\pb_\id &=& \sum_{ij:Z_{ij}=1} \hat w^\pb_{1,ij} Y_{ij} - \sum_{ij:Z_{ij}=0} \hat w^\pb_{0,ij} Y_{ij} \\
&=& \sump (2Z_{ij}-1) \hat w^\pb_{Z_{ij}} Y_{ij}.
\end{eqnarray*}
The corresponding OLS variance estimator is given by
\begin{eqnarray*}
\widehat{\mathrm{s.e.}} (\hat\tau^\ols) = \tilde s^2 \sump (2Z_{ij}-1)^2 \left(\hat w^\pb_{Z_{ij}}\right)^2 = \tilde s^2 \sump  \left(\hat w^\pb_{Z_{ij}}\right)^2,
\end{eqnarray*}
where $\tilde s^2$ denotes the residual variance from fitting the linear regression. Thus, it suffices to prove that $\tilde s^2 = s^2$. 
This equivalence holds if the coefficients on $\underline{\tilde X}_{ij}$ from the regression $\lmt \left( Y_{ij} \sim Z_{ij} + \underline{\tilde X}_{ij}  \right)$ are the same as those from the regression $\lmt \left( Y_{ij} - \hat\tau^\pb_\id Z_{ij} \sim  \underline{\tilde X}_{ij}  \right)$. 
Notice that the joint OLS minimization problem is
\begin{eqnarray*}
(\hat\eta^\ols, \hat\tau^\ols ) = \argmin_{\tau, \eta} \sump ( Y_{ij} - \tau Z_{ij} - \eta^\T \underline{\tilde X}_{ij} )^2.
\end{eqnarray*}
Given that $\hat\tau^\pb = \hat\tau^\ols$, it follows that
\begin{eqnarray*}
\hat\eta^\ols = \argmin_{\tau, \eta} \sump ( Y_{ij} - \tau^\pb Z_{ij} - \eta^\T \underline{\tilde X}_{ij} )^2.
\end{eqnarray*}
This completes the proof.

\end{proof}


\subsubsection{Proof of Theorem~\ref{thm::throw}}
\begin{proof}
[Proof of Theorem~\ref{thm::throw}]
To show that $\Pr_{\dpp}( \hat w^\pb(X) > 0 \mid X \in V) \rightarrow 1$, it suffices to show that $\Pr_{\dpp}( \hat w^\pb(X) = 0, X \in V) \rightarrow 0$.
Theorem~\ref{thm:dual} ensures that 
\begin{eqnarray*}
\{ \hat w^\pb(X) = 0 \} \Leftrightarrow \{ \rho^\prime\{ B(X)^\T \lambda^\dag_z \} \leq 0 \}.
\end{eqnarray*}
On the other hand, Assumption~\ref{cond::inverse_prop} ensures that
\begin{eqnarray*}
\{ X \in V \} \Leftrightarrow \{ \rho^\prime\{ B(X)^\T \tilde\lambda_{1z} \} > 0 \}.
\end{eqnarray*}
Thus,
\begin{eqnarray*}
\Pr_{\dpp}( \hat w^\pb(X) = 0, X \in V)
&\leq&
\Pr_{\dpp}(  \rho^\prime\{ B(X)^\T \lambda^\dag_z\} \rho^\prime\{ B(X)^\T \tilde\lambda_{1z}\} \leq 0) \\
&=& \Pr_{\dpp}( \rho^\prime\{ B(X)^\T  \lambda^\dag_z \} \rho^\prime\{ B(X)^\T \tilde\lambda_{1z}\} < 0),
\end{eqnarray*}
where the last equality is due to Assumption~\ref{cond::regularity}(a). 
Since Assumption~\ref{cond::regularity}(a) also requires $\rho^\prime(\cdot)$ to be strictly decreasing, we get 
\begin{eqnarray*}
&& \left\{  \rho^\prime\{ B(X)^\T  \lambda^\dag_z \} \rho^\prime\{ B(X)^\T \tilde\lambda_{1z}\} < 0 \right\} \\
&\subseteq& \left\{ |B(X)^\T  \lambda^\dag_z -  B(X)^\T \tilde\lambda_{1z}\}| > t  \right\} \cup
\left\{ \left|  B(X)^\T \tilde\lambda_{1z}\ - (\rho^\prime)^{-1}(0)\right| < t  \right\}
\end{eqnarray*}
for all $t>0$.
Notice that
\begin{eqnarray*}
\left|\lambda^\dag_z B(X) -  {\tilde\lambda_{1z} B(X)} \right| 
&\leq& \sup_{x \in \X} \Vert B(x) \Vert_2 \Vert\lambda^\dag_z - \tilde\lambda_{1z}\Vert_2 \\
&\leq& C K^{1/2} \Vert \lambda^\dag_z - \tilde\lambda_{1z}\Vert_2 = o_p(1).
\end{eqnarray*}
The first inequality follows from the Cauchy-Schwarz inequality.
The second inequality follows from Assumption~\ref{cond::regularity}(b). The final equality follows from Lemma~\ref{lem:lambda_diff} and Assumption~\ref{cond::regularity}(c).
Therefore,
\begin{eqnarray*}
&& \Pr( \hat w^\pb(X) = 0, X \in V) \\
&\leq& \Pr(|B(X)^\T  \lambda^\dag_z -  B(X)^\T \tilde\lambda_{1z} | > t) + \Pr( |   B(X)^\T \tilde\lambda_{1z}\ - (\rho^\prime)^{-1}(0)| < t)  \rightarrow 0
\end{eqnarray*}
as $t \rightarrow 0$. This completes the proof for $\Pr_{\dpp}( \hat w^\pb(X) > 0 \mid X \in V) \rightarrow 1$. Similarly, we can prove $\Pr_{\dpp}( \hat w^\pb(X) = 0 \mid X \in \supp(\dpp)\backslash V) \rightarrow 1$.

\end{proof}

\subsubsection{Proof of Theorem~\ref{thm:ad_consistency}}
We first introduce the regularity conditions for consistency of the proposed estimator $\hat\tau^\pb_\ad$ when only aggregate-level data are available. For simplicity we denote $\tau_i =\E_{\dpp}[Y(1)-Y(0) \mid G=i ]$ as the causal effect for source study population $i$.
\begin{appassumption}
\label{cond:reg_ad}
Assume the following conditions hold
\begin{itemize}
    \item [(a)] There exist constants $C_0, C_1, C_2$ with $C_0 >0$ and $C_1 < C_2 < 0$, such that $C_1 \leq m \rho^{\prime\prime}(v) \leq C_2$ for all $v$ in a neighborhood of $B(x)^\T \tilde\lambda_{1z}$. Also, $|m \rho^\prime(v)| \leq C_0$ for all $v = B(x)^\T \lambda, x \in \X, \lambda$.
    \item [(b)] $\sup_{x \in \X} \Vert B(x) \Vert_2 \leq CK^{1/2}$ and $\Vert \E_{\dq} \{ B(X)B(X)^\T\} \Vert_F, \Vert \E_{\dq} \{ \bar B_i \bar B_i^\T\} \Vert_F \leq C$ for some $C > 0$, where $\Vert \cdot \Vert_F$ denotes the Frobenius norm. $\{B(x): x \in \mathcal{X} \}$ is a convex hull.
    \item [(c)] $K = O\{ (n/m)^{\alpha_1} \}$ and $K = O\{ m^{\alpha_2} \}$ for some $0 < \alpha_1 < 1/4$, $0 < \alpha_2 < 1$.
    \item [(d)] The smallest eigenvalue, $\nu_{\min}$, of $\E_{\dq} \{ B(X)B(X)^\T\}$ satisfies $\nu_{\min} > C$ for some constant $C>0$.
    \item [(e)] $\Vert \delta\Vert_2 = O_p\left\{ \sqrt{K^3 m \log m /n} \right\}$
    \item [(f)] The ratios $n^*/n$ and $m(\inf_i n_i)/n$ approaches a finite positive constant as $n \rightarrow \infty$. $m = O(n^\alpha)$ for some $0<\alpha<1$.
    \item[(g)] $\sqrt{n_i}(\hat \tau_i - \tau)$ is sub-Gaussian with variance proxy bounded by a constant $C$.
    \item[(h)] The study-level scaling factors are uniformly bounded with $c_i \in [a_1,a_2]$ with $a_1>0$.
    \item[(i)] $\Vert \tilde\lambda_3\Vert_2, \Vert \tilde\lambda_4\Vert_2 \leq C K^{1/2}$ for some constant $C>0$.
\end{itemize}
\end{appassumption}

The proof of Theorem~\ref{thm:ad_consistency} is analogous to the proof of Thereom~\ref{thm::consistency}. We will first show that when the model of $\tilde w_{(i)}$ is correctly-specified, $\sup_i |m\hat w^\pb_{(i)} - m\tilde w_{(i)}| = o_p(1)$. 
\begin{lemma}
\label{lem:w_cnvg_ad}
Under Assumption~\ref{cond::1}--~\ref{cond:ad}, ~\ref{cond:ad_weight} and~\ref{cond:reg_ad}, the study-level weights satisfy 
$$\sup_i |m\hat w^\pb_{(i)} - m\tilde w_{(i)}| = o_p(1).$$
\end{lemma}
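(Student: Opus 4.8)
The plan is to mirror the proof of Theorem~\ref{thm::unif_convg_prob} for the individual-level case, but with one essential new ingredient: in the aggregate setting the inputs to the weight function are the noisy study-level averages $\bar B_i = n_i^{-1}\sum_j B(X_{ij})$ rather than exactly observed basis values, so I must separately control their deviation from the population means $\mu_i := \E_\dpp\{B(X)\mid G=i\}$. First I would apply Theorem~\ref{thm:dual}(b) to the aggregate program~\eqref{eq:AD}, which yields the explicit form $\hat w^\pb_{(i)} = \rho'(\bar B_i^\T \lambda^\dag/c_i)\,1\{\rho'(\bar B_i^\T \lambda^\dag/c_i)>0\}$ for the dual solution $\lambda^\dag$, while Assumption~\ref{cond:ad_weight} gives $\tilde w_{(i)} = \rho'(\mu_i^\T \tilde\lambda_3/c_i)\,1\{\rho'(\mu_i^\T\tilde\lambda_3/c_i)\ge 0\}$.

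Using the elementary fact (from the proof of Theorem~\ref{thm::unif_convg_prob}) that $|a|,|b|\le|a-b|$ whenever $ab<0$, I can drop the indicators and bound $|m\hat w^\pb_{(i)}-m\tilde w_{(i)}|\le |m\rho'(\bar B_i^\T\lambda^\dag/c_i)-m\rho'(\mu_i^\T\tilde\lambda_3/c_i)|$. The mean value theorem together with the bound $|m\rho''|\le|C_1|$ from regularity condition~\ref{cond:reg_ad}(a) and $c_i\ge a_1>0$ from~\ref{cond:reg_ad}(h) reduces this to $\tfrac{|C_1|}{a_1}|\bar B_i^\T\lambda^\dag-\mu_i^\T\tilde\lambda_3|$, which splits as $\|\bar B_i\|_2\|\lambda^\dag-\tilde\lambda_3\|_2+\|\bar B_i-\mu_i\|_2\|\tilde\lambda_3\|_2$ by adding and subtracting $\bar B_i^\T\tilde\lambda_3$.

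It then remains to control the two factors uniformly in $i$. For the sampling error I would establish $\sup_i\|\bar B_i-\mu_i\|_2 = O_p(\sqrt{Km\log m/n})$ by a vector Bernstein / sub-Gaussian maximal inequality applied within each study (each $\bar B_i$ averages $n_i$ i.i.d.\ bounded terms, cf.~\ref{cond:reg_ad}(b)), a union bound over the $m$ studies, and $\inf_i n_i\asymp n/m$ with $m=O(n^\alpha)$ from~\ref{cond:reg_ad}(f); the convex-hull assumption in~\ref{cond:reg_ad}(b) gives $\|\bar B_i\|_2\le CK^{1/2}$, and $\|\tilde\lambda_3\|_2\le CK^{1/2}$ is~\ref{cond:reg_ad}(i). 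For the dual error I would prove the aggregate analogue of Lemma~\ref{lem:lambda_diff}, namely $\|\lambda^\dag-\tilde\lambda_3\|_2=o_p(K^{-1/2})$, by the same convexity argument: a Taylor expansion of the folded dual objective $G(\lambda)=-\sum_i c_i\rho(\bar B_i^\T\lambda/c_i)+(B^*)^\T\lambda+|\lambda|^\T\delta$ around $\tilde\lambda_3$, a lower bound on its curvature from an eigenvalue bound on $m^{-1}\sum_i\bar B_i\bar B_i^\T$ (the analogue of Lemma~\ref{lem::onestep8.6}), and an upper bound on the gradient at $\tilde\lambda_3$. Combining these rates with the basis-growth conditions~\ref{cond:reg_ad}(c) and the tolerance condition~\ref{cond:reg_ad}(e) makes every term $o_p(1)$.

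The main obstacle is the score bound inside the aggregate analogue of Lemma~\ref{lem:lambda_diff}. In the individual-level proof (Lemma~\ref{lem::onestep8.5}) the gradient at the truth is a mean-zero sum of bounded random vectors controlled by the matrix Bernstein inequality (Lemma~\ref{lem::mat_inequal}). Here the population score $\sum_i\tilde w_{(i)}\mu_i-B^*$ vanishes by the aggregate balancing identity~\eqref{eq:ad_weights_bal}, but the empirical score $\sum_i\rho'(\bar B_i^\T\tilde\lambda_3/c_i)1\{\cdot\}\bar B_i-B^*$ carries the replacement error $\bar B_i\mapsto\mu_i$ in both the argument of $\rho'$ and the outer multiplication. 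I would therefore add and subtract the population weights, use the Lipschitz property of $m\rho'$ (again via~\ref{cond:reg_ad}(a)) together with $\sup_i\|\bar B_i-\mu_i\|_2$, and separately treat the concentration of $\sum_i\tilde w_{(i)}\bar B_i$ around its mean by matrix concentration as in Lemma~\ref{lem::onestep8.5}; the residual discrepancy between the sample profile $B^*$ and $\E_\dt\{B(X)\}$ is lower order by~\ref{cond:reg_ad}(f). Balancing these two error layers against the curvature lower bound is the delicate step.
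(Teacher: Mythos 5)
Your proposal is correct and follows essentially the same route as the paper: the same dual-representation decomposition $\sup_i|m\hat w^\pb_{(i)}-m\tilde w_{(i)}|\leq CK^{1/2}\Vert\lambda_3^\dag-\tilde\lambda_3\Vert_2+CK^{1/2}\sup_i\Vert\bar B_i-\tilde B_i\Vert_2$, the same vector-Bernstein-plus-union-bound control of $\sup_i\Vert\bar B_i-\tilde B_i\Vert_2$ (the paper's Lemma~\ref{lem:ad_wdiff_term2}), and the same convexity argument for the dual error, including your "main obstacle" resolution of splitting the empirical score into a mean-zero matrix-concentration term and a Lipschitz-in-$\rho'$ replacement term (the paper's Lemmas~\ref{lem:ad_onestep8.5}, \ref{lem:ad_onestep8.6}, and~\ref{lem:ad_wdiff_term1}). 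The rates you target, $\sup_i\Vert\bar B_i-\tilde B_i\Vert_2=O_p(\sqrt{Km\log m/n})$ and $\Vert\lambda_3^\dag-\tilde\lambda_3\Vert_2=o_p(K^{-1/2})$, match what the paper establishes and combines under regularity condition~\ref{cond:reg_ad}(c).
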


For simplicity denote $\tilde B_i = \E_{\dpp}[B(X) \mid G=i]$.
Let $\lambda^\dag_3$ denote the solution to the dual form of~\eqref{eq:AD}. 
Since the optimization problem~\eqref{eq:AD} can be regarded as a special case of~\eqref{eq:fr_theory_c} with every study only having one observation, from Theorem~\ref{thm:dual} we have
\begin{eqnarray}
\label{eq:ad_w}
\hat w^\pb_{(i)} = \rho^\prime\{\bar B_i \lambda_3^\dag/c_i\} 1\{ \rho^\prime\{ \bar B_i \lambda_3^\dag/c_i\} \geq 0 \}
\end{eqnarray}
This gives
\begin{eqnarray}
&& \sup_i |m\hat w^\pb_{(i)} - m\tilde w_{(i)}|  \nonumber \\
&=& \sup_i \left| m \rho^\prime\{ \bar B_i \lambda_3^\dag/c_i\} 1\{ \rho^\prime\{ \bar B_i \lambda_3^\dag/c_i\} \geq 0 \} -   m \rho^\prime\{ \tilde B_i \tilde\lambda_3 /c_i \} 1\{ \tilde B_i \tilde\lambda_3 /c_i  \geq 0 \} \right| \nonumber \\
& \leq & \sup_i \left| m \rho^\prime\{ \bar B_i \lambda_3^\dag/c_i\}  -   m \rho^\prime\{ \tilde B_i \tilde\lambda_3 /c_i \}  \right|  \nonumber \\
& \leq &  C \sup_i \left| \bar B_i \lambda_3^\dag - \tilde B_i \tilde\lambda_3   \right|  \nonumber \\
& \leq & C \sup_i \left| \bar B_i^\T (\lambda_3^\dag - \tilde\lambda_3) \right| + C \sup_i \left| (\bar B_i - \tilde B_i )^\T \tilde\lambda_3  \right|  \nonumber \\
& \leq & C K^{1/2} \Vert \lambda_3^\dag - \tilde\lambda_3 \Vert_2 + C K^{1/2} \sup_i \Vert| \bar B_i - \tilde B_i \Vert_2 
\label{eq:ad_w_bound}
\end{eqnarray}
The first equality is due to~\eqref{eq:ad_w} and Assumption~\ref{cond:ad_weight}. 
The first inequality holds as in~\eqref{eq:wdiff_bound}.
The second inequality follows from the mean value theorem and Assumption~\ref{cond:reg_ad}(a), (h).
The third inequality follows from the triangle inequality.
The last inequality is due to the Cauchy–Schwarz inequality and Assumption~\ref{cond:reg_ad}(b), (i).
Lemma~\ref{lem:ad_wdiff_term1} and~\ref{lem:ad_wdiff_term2} will bound the two terms in~\eqref{eq:ad_w_bound}, respectively.

\begin{lemma}
\label{lem:ad_wdiff_term2}
Under regularity condition~\ref{cond:reg_ad}, we have $\sup_i \Vert| \bar B_i - \tilde B_i \Vert_2  = O_p\{ \sqrt{K m \log m /n} \}$.
\end{lemma}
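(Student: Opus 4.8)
The plan is to treat $\bar B_i - \tilde B_i$ as the deviation of an empirical mean of $n_i$ i.i.d.\ bounded vectors from its expectation, to apply the matrix Bernstein inequality (Lemma~\ref{lem::mat_inequal}) \emph{within} each study, and then to take a union bound over the $m$ studies. First I would fix a study $i$ and, conditioning on $G=i$, write
\[
\bar B_i - \tilde B_i = \frac{1}{n_i}\sum_{j=1}^{n_i} W_{ij}, \qquad W_{ij} := B(X_{ij}) - \tilde B_i,
\]
where the $W_{ij}$ are i.i.d., mean-zero, $K\times 1$ random vectors, which I view as $K\times 1$ matrices so that their spectral norm equals their Euclidean norm.

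Next I would verify the two inputs required by Lemma~\ref{lem::mat_inequal}. For the almost-sure bound, condition~\ref{cond:reg_ad}(b) gives $\sup_x \Vert B(x)\Vert_2 \le CK^{1/2}$, and Jensen's inequality gives $\Vert\tilde B_i\Vert_2 \le CK^{1/2}$, so $\Vert W_{ij}\Vert_2 \le R := 2CK^{1/2}$ a.s. For the variance proxy $\sigma^2_{n_i} = \max\{\Vert \sum_{j}\E[W_{ij}W_{ij}^\T]\Vert_2,\ \Vert \sum_{j}\E[W_{ij}^\T W_{ij}]\Vert_2\}$, the scalar term equals $n_i\,\E\Vert W\Vert_2^2 \le n_i\,\tr(\E[B B^\T\mid G=i]) \le n_i\sqrt{K}\,\Vert \E[BB^\T\mid G=i]\Vert_F \le C n_i K^{1/2}$, using the Cauchy--Schwarz bound $\tr(A)\le\sqrt K\Vert A\Vert_F$ on the eigenvalues together with condition~\ref{cond:reg_ad}(b), while the matrix term is bounded by $n_i\Vert\Sigma_i\Vert_F \le C n_i$, which is of smaller order. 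Hence $\sigma^2_{n_i} \le C n_i K^{1/2}$.

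Applying Lemma~\ref{lem::mat_inequal} with $t=n_i s$ then yields
\[
\Pr\big(\Vert\bar B_i - \tilde B_i\Vert_2 \ge s\big) \le (K+1)\exp\!\left(\frac{-n_i^2 s^2/2}{C n_i K^{1/2} + (2C/3)K^{1/2} n_i s}\right),
\]
and a union bound gives $\Pr(\sup_i\Vert\bar B_i-\tilde B_i\Vert_2\ge s)\le m(K+1)$ times the same exponential factor. Taking $s = C^*\sqrt{Km\log m/n}$ and using condition~\ref{cond:reg_ad}(f), which forces $n_i \gtrsim n/m$, the leading (variance) part of the exponent is of order $n_i s^2/(CK^{1/2}) \asymp (C^*)^2 K^{1/2}\log m$. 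Since $K^{1/2}\log m$ dominates $\log(m(K+1))=\log m + \log(K+1)$ under the growth restrictions in condition~\ref{cond:reg_ad}(c),(f), choosing $C^*$ large makes the union-bound probability tend to $0$, which is exactly the claimed $O_p\{\sqrt{Km\log m/n}\}$ rate.

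The main obstacle is the simultaneous control of two growing quantities---the dimension $K$ and the number of studies $m$---inside a single exponential tail bound. I expect the delicate points to be (i) confirming that it is the variance term $C n_i K^{1/2}$, rather than the linear Bernstein term $(2C/3)K^{1/2}n_i s$, that governs the rate (which requires checking $s\to 0$, equivalently $Km\log m/n\to0$, under conditions~\ref{cond:reg_ad}(c),(f)); and (ii) ensuring that the conditional second-moment matrix $\E[BB^\T\mid G=i]$ inherits the Frobenius-norm bound of condition~\ref{cond:reg_ad}(b) uniformly in $i$, so that the variance-proxy bound holds for every study in the union.
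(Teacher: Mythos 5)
Your proposal is essentially the paper's own proof: the paper also writes $\bar B_i - \tilde B_i$ as a normalized sum of i.i.d.\ centered vectors within study $i$, applies a Bernstein-type inequality with the a.s.\ bound $2CK^{1/2}$, takes a union bound over the $m$ studies, and closes by choosing $t \asymp \sqrt{Km\log m/n}$ together with condition~\ref{cond:reg_ad}(f) (so that $\inf_i n_i \gtrsim n/m$) and the growth conditions in~\ref{cond:reg_ad}(c). The only cosmetic difference is that the paper invokes a dimension-free vector Bernstein inequality (prefactor $2$) rather than Lemma~\ref{lem::mat_inequal} (prefactor $K+1$); your check that the extra $\log(K+1)$ is dominated, using $K = O(m^{\alpha_2})$ with $\alpha_2 < 1$, correctly neutralizes this difference.

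The one step you should repair is your variance proxy. You bound $\E\Vert W_{ij}\Vert_2^2 \le \tr\bigl(\E[BB^\T \mid G=i]\bigr) \le \sqrt{K}\,\Vert \E[BB^\T \mid G=i]\Vert_F \le CK^{1/2}$, but condition~\ref{cond:reg_ad}(b) only bounds $\Vert \E_{\dq}\{B(X)B(X)^\T\}\Vert_F$, i.e.\ the second moment under the \emph{marginal} measure $\dq$; the conditional second-moment matrix given $G=i$ is only controlled by $\E_{\dq}[BB^\T]/\Pr_{\dq}(G=i, S=1)$, and since $\Pr_{\dq}(G=i,S=1)$ is typically of order $1/m$, the conditional Frobenius norm could be as large as $Cm$, so your sharper proxy $Cn_iK^{1/2}$ is not licensed uniformly in $i$ (you flagged exactly this in your point (ii)). The fix is what the paper does: bound $\E[\Vert W_{ij}\Vert_2^2 \mid G=i] \le \E[\Vert B(X)\Vert_2^2 \mid G=i] \le C^2K$ directly from the sup-norm bound in~\ref{cond:reg_ad}(b), giving variance proxy $Cn_iK$ instead of $Cn_iK^{1/2}$. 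Your rate calculation survives this substitution: the exponent becomes of order $(C^*)^2\log m$ rather than $(C^*)^2K^{1/2}\log m$, which still dominates $\log\{m(K+1)\}$ because $\log(K+1) = O(\log m)$, so choosing $C^*$ large enough yields the same $O_p\{\sqrt{Km\log m/n}\}$ conclusion.
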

\begin{proof}
[Proof of Lemma~\ref{lem:ad_wdiff_term2}]
For each study $i$, Assumption~\ref{cond:reg_ad}(b) ensures that
\begin{eqnarray*}
\Vert  B(X) - \tilde B_i \Vert_2 &\leq& 2CK^{1/2}, \\
\E_{\dpp} \left[ \sum_{j=1}^{n_i} \Vert  B(X) - \tilde B_i \Vert_2^2 \mid G= i  \right] 
&\leq& n_i \E_{\dpp} \left[ \Vert B(X) \Vert_2^2 \mid G= i  \right]  \leq n_i C^2 K.
\end{eqnarray*}
Apply 
the vector Bernstein inequality
to $\sum_{j=1}^{n_i} (B(X_{ij}) - \tilde B_i)$ and we get
\begin{eqnarray*}
\Pr_{\dpp} \left( \left\Vert \sum_{j=1}^{n_i} (B(X_{ij}) - \tilde B_i) \right\Vert_2 \geq t \right) \leq 2 \exp \left( - \frac{t^2/2}{n_i C^2 K + (2CK^{1/2})t/3}\right).
\end{eqnarray*}
Rescaling by $1/n_i$ we have
\begin{eqnarray*}
\Pr_{\dpp} \left( \left\Vert \bar B_i - \tilde B_i \right\Vert_2 \geq t \right) \leq 2 \exp \left( - \frac{n_i t^2/2}{ C^2 K + (2CK^{1/2})t/(3 n_i)}\right).    
\end{eqnarray*}
By union bound we have
\begin{eqnarray*}
\Pr_{\dpp} \left( \sup_i \left\Vert \bar B_i - \tilde B_i \right\Vert_2 \geq t \right) \leq 2m \exp \left( - \frac{\inf_i n_i t^2/2}{ C^2 K + (2CK^{1/2})t/3 }\right).
\end{eqnarray*}
By Assumption~\ref{cond:reg_ad}(f), the right hand side goes to zero if $t = C^\prime \sqrt{K m \log m/n}$ for some $C^\prime>0$. Therefore, $\sup_i \Vert| \bar B_i - \tilde B_i \Vert_2 = O_p\{ \sqrt{K m \log m /n} \}$.

\end{proof}

To bound $\Vert \lambda_3^\dag - \tilde\lambda_3 \Vert_2 $ in~\eqref{eq:ad_w_bound}, we introduce Lemma~\ref{lem:ad_onestep8.5}--\ref{lem:ad_onestep8.6}, which are analogues to Lemma~\ref{lem::onestep8.5}--\ref{lem::onestep8.6}.

\begin{lemma}
\label{lem:ad_onestep8.5}
If Assumption~\ref{cond::1}--~\ref{cond:ad}, ~\ref{cond:ad_weight} and ~\ref{cond:reg_ad} hold, we have
\begin{eqnarray*}
\left\Vert  
\sum_{i=1}^m  - \tilde w_{(i)}  \bar B_i  + B^* \right\Vert_2 = O_p \left\{ K^{1/4} (\log K)^{1/2}n^{-1/2} \right\}.
\end{eqnarray*}
\end{lemma}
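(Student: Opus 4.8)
The plan is to mirror the proof of its individual-level counterpart, Lemma~\ref{lem::onestep8.5}: after subtracting the deterministic part of the target quantity using the covariate-balancing identity guaranteed by Assumption~\ref{cond:ad}(c), what remains is a sum of (conditionally) independent, mean-zero random vectors, to which the matrix Bernstein inequality of Lemma~\ref{lem::mat_inequal} applies. Throughout I write $\tilde B_i = \E_{\dpp}\{B(X)\mid G=i\}$ for the within-study population mean of the basis functions, and abbreviate $\E_{\dt}(B)=\E_{\dt}\{B(X)\}$.

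First I would set up the decomposition. Taking $f=B$ in the balancing identity~\eqref{eq:ad_weights_bal}, which Assumption~\ref{cond:ad}(c) supplies for the true study-level weights $\tilde w_{(i)}$, gives $\sum_{i=1}^m \tilde w_{(i)}\tilde B_i = \E_{\dt}(B)$. With $\bar B_i = n_i^{-1}\sum_{j=1}^{n_i}B(X_{ij})$ and $B^* = (n^*)^{-1}\sumpt B(X_{ij})$, this lets me split
\[
B^* - \sum_{i=1}^m \tilde w_{(i)}\bar B_i = \bigl\{B^* - \E_{\dt}(B)\bigr\} - \sum_{i=1}^m \tilde w_{(i)}\bigl\{\bar B_i - \tilde B_i\bigr\},
\]
where the deterministic bias has cancelled. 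Conditioning on the group memberships $G_{ij}$ and sizes $\{n_i\},n^*$, both remaining pieces form a single sum over the individuals of $\pq$ of conditionally independent, mean-zero $K\times 1$ vectors $\underline{W}_{ij}$: on target units $\underline{W}_{ij}=(n^*)^{-1}\{B(X_{ij})-\E_{\dt}(B)\}$, and on the units of study $i$, $\underline{W}_{ij}=-(\tilde w_{(i)}/n_i)\{B(X_{ij})-\tilde B_i\}$. The quantity to be bounded is exactly $\Vert \sumq \underline{W}_{ij}\Vert_2$.

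Next I would verify the inputs of Lemma~\ref{lem::mat_inequal}. From Assumptions~\ref{cond:ad_weight} and~\ref{cond:reg_ad}(a), the bound $|m\rho'(\cdot)|\leq C_0$ yields $|\tilde w_{(i)}|\leq C_0/m$; together with $\sup_x\Vert B(x)\Vert_2\leq CK^{1/2}$ (Assumption~\ref{cond:reg_ad}(b)) and the size conditions that $n^*/n$ and $m(\inf_i n_i)/n$ tend to positive constants (Assumption~\ref{cond:reg_ad}(f)), each $\Vert\underline{W}_{ij}\Vert_2=O(K^{1/2}/n)$, so $R_n=O(K^{1/2}/n)$. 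For the variance proxy, the bound $\Vert\E_{\dq}\{B(X)B(X)^\T\}\Vert_F\leq C$ forces $\tr(\E BB^\T)\leq K^{1/2}\Vert\E BB^\T\Vert_F\leq CK^{1/2}$ and $\Vert\E BB^\T\Vert_2\leq C$; carrying these through the target and study contributions (the latter summed against $\sum_i \tilde w_{(i)}^2/n_i=O(1/n)$, where the comparable study sizes from Assumption~\ref{cond:reg_ad}(f) relate the conditional traces to $\tr(\E_{\dpp}BB^\T)$) gives $\Vert\sum\E(\underline{W}^\T\underline{W})\Vert_2=O(K^{1/2}/n)$ and $\Vert\sum\E(\underline{W}\,\underline{W}^\T)\Vert_2=O(1/n)$, hence $\sigma_n^2=O(K^{1/2}/n)$. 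Plugging into Lemma~\ref{lem::mat_inequal} with $d_1+d_2=K+1$ and taking $t=\bar C K^{1/4}(\log K)^{1/2}n^{-1/2}$ makes the exponent of order $-\bar C^2\log K$ (the $R_n t$ term being negligible relative to $\sigma_n^2$ by Assumption~\ref{cond:reg_ad}(c)), so for $\bar C$ large the tail bound $(K+1)\exp(\cdot)\to 0$, yielding $\Vert\sumq\underline{W}_{ij}\Vert_2=O_P(K^{1/4}(\log K)^{1/2}n^{-1/2})$ as claimed.

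The main obstacle is securing the $K^{1/4}$ rate rather than a cruder $K^{1/2}$: this hinges on treating the entire weighted sum jointly through matrix Bernstein, which exploits the bounded-Frobenius-norm assumption to keep $\tr(\E BB^\T)$ at order $K^{1/2}$ and thus the variance proxy at order $K^{1/2}/n$. A study-by-study argument --- for instance bounding $\sup_i\Vert\bar B_i-\tilde B_i\Vert_2$ as in Lemma~\ref{lem:ad_wdiff_term2} and then summing against the weights --- would lose an extra $\sqrt{m}$ factor and fail to reach the target rate. The remaining effort is bookkeeping: justifying the conditioning on the random group sizes, verifying the weight bound $|\tilde w_{(i)}|=O(1/m)$, and using the comparisons $n^*\asymp n$ and $n_i\asymp n/m$ so that the target and study contributions land at the same order.
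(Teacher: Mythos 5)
Your proposal is correct and follows essentially the same route as the paper: both rewrite $B^* - \sum_{i=1}^m \tilde w_{(i)}\bar B_i$ as a single sum over all individuals in $\pq$ of independent mean-zero $K\times 1$ vectors, bound $R_n = O(K^{1/2}/n)$ via $|m\rho'(\cdot)|\leq C_0$ and $\sup_x\Vert B(x)\Vert_2 \leq CK^{1/2}$, obtain the variance proxy $\sigma_n^2 = O(K^{1/2}/n)$ from the Frobenius-norm bound, and invoke the matrix Bernstein inequality of Lemma~\ref{lem::mat_inequal} with threshold $t \asymp K^{1/4}(\log K)^{1/2}n^{-1/2}$. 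The only (immaterial) difference is that you center each summand conditionally on group membership, using $\sum_i \tilde w_{(i)}\tilde B_i = \E_{\dt}\{B(X)\}$ from Assumption~\ref{cond:ad}(c), whereas the paper works with uncentered summands $\{(n^*)^{-1}(1-S_{ij}) - S_{ij}n_i^{-1}\tilde w_{(i)}\}B(X_{ij})$ whose mean-zero property holds marginally over the selection indicators; your version is, if anything, slightly cleaner on this point.
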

\begin{proof}[Proof of Lemma~\ref{lem:ad_onestep8.5}]
Following the proof structure for Lemma~\ref{lem::onestep8.5}, we will apply the inequality in Lemma~\ref{lem::mat_inequal}.
Notice that 
\begin{eqnarray*}
 \sum_{i=1}^m  - \tilde w_{(i)}  \bar B_i  + B^*
= \sumq \left\{(n^*)^{-1} - S_{ij} \frac{1}{n_i} \tilde w_{(i)}   \right\} B(X_{ij}).
\end{eqnarray*}
Denote $\underline{W}_{ij} = \left\{(n^*)^{-1} - S_{ij} \frac{1}{n_i} \tilde w_{(i)}   \right\} B(X_{ij})$, for $ij \in \pq$. By Assumption~\ref{cond:ad}(c), we have $\E_{\dq}(\underline{W}_{ij}) = 0$. Moreover, $\Vert \underline{W}_{ij} \Vert_2$ is bounded as
\begin{eqnarray*}
\Vert \underline{W}_{ij} \Vert_2 &=& \left| (n^*)^{-1} - S_{ij} (m n_i)^{-1} m \tilde w_{(i)} \right| \times \Vert B(X_{ij}) \Vert_2 \\
&=& \left| (n^*)^{-1} - S_{ij} \frac{1}{mn_i} m  \right| \times \Vert B(X_{ij}) \Vert_2 \\
&\leq& \left( \min(n^*, m \inf n_i) \right)^{-1} \left[ 1 + |m \rho^\prime(\tilde B_i^\T \tilde \lambda_3)| \right] \sup_{x \in \X} \Vert B(x) \Vert_2 \\
&\leq&  \left(\min(n^*, m \inf n_i) \right)^{-1} C K^{1/2}.
\end{eqnarray*}
The second inequality follows from Assumption~\ref{cond:reg_ad}(a)--(b).
Next, we have
\begin{eqnarray*}
&& \left\Vert \sumq \E \left(  \underline{W}_{ij}^\T \underline{W}_{ij} \right)  \right\Vert_2 \\
&=& \sumq \E \left\{  \left\{ (n^*)^{-1}(1-S_{ij}) - (mn_i)^{-1} m\tilde w_{(i)} \right\}^2 B(X_{ij})^\T B(X_{ij})  \right\} \\
&\leq& C \left(\min(n^*, m \inf n_i) \right)^{-2} (n+n^*) \tr \left[ \E_{\dq} \left\{  B(X)^\T B(X)  \right\} \right] \\
&\leq& C K^{1/2} \left(\min(n^*, m \inf n_i) \right)^{-2} (n+n^*)  \Vert \E_{\dq} \left\{  B(X)^\T B(X)  \right\} \Vert_F \\
&\leq& C^\prime K^{1/2} \left(\min(n^*, m \inf n_i) \right)^{-2} (n+n^*).
\end{eqnarray*}
The first inequality follows from Assumption~\ref{cond:ad_weight}, ~\ref{cond:reg_ad}(a). The second inequality follows from the Cauchy-Schwartz inequality. The third inequality follows from Assumption~\ref{cond:reg_ad}(b). We also have
\begin{eqnarray*}
&& \left\Vert \sumq \E \left(  \underline{W}_{ij} \underline{W}_{ij}^\T \right)  \right\Vert_2 \\
&\leq& \sumq \left\Vert \E \left\{ \left\{ (n^*)^{-1} (1-S_{ij}) - (mn_i)^{-1} S_{ij} m\tilde w_{(i)} \right\}^2 B(X_{ij}) B(X_{ij})^\T  \right\} \right\Vert_2 \\
&\leq& C \left(\min(n^*, m \inf n_i) \right)^{-2} (n+n^*)\left\Vert \E_{\dq} \left\{  B(X) B(X)^\T  \right\} \right\Vert_2\\
&\leq& C\left(\min(n^*, m \inf n_i) \right)^{-2} (n+n^*) \left\Vert \E _{\dq}\left\{  B(X) B(X)^\T  \right\} \right\Vert_F \\
&\leq& C^{\prime \prime} (\min(n^*,  m \inf n_i))^{-2}(n+n^*).
\end{eqnarray*}
Here the first inequality is due to the triangle inequality. The second inequality holds by bounding $\left\{ (1-S_{ij}) - S_{ij} m\tilde w(X_{ij}) \right\}^2$ as before.
The third inequality holds because spectral norm is dominated by the Frobenius norm.
The final inequality again follows from Assumption~\ref{cond::regularity}(b).

Combining the two results we have 
\begin{eqnarray*}
\sigma^2_{n+n^*} := \max \left\{ 
\left\Vert \sumq \E_{\dt} \left(  \underline{W}_{ij}^\T \underline{W}_{ij} \right)  \right\Vert_2,
\left\Vert \sumq \E_{\dt} \left(  \underline{W}_{ij} \underline{W}_{ij}^\T \right)  \right\Vert_2 \right\} \leq C K^{1/2} \frac{n+n^*}{(\min(n^*,  m \inf n_i))^{2}}.
\end{eqnarray*}
Applying Lemma~\ref{lem::mat_inequal}, we have
\begin{eqnarray}
\label{eq:inequal_W2}
&& \Pr \left( \left\Vert \sumq \underline{W}_{ij} \right\Vert_2 \geq t \right) \nonumber \\
&\leq& (K+1) \exp \left[ \frac{t^2/2}{CK^{1/2}\frac{n+n^*}{(\min(m \inf n_i,  n^*))^{2}} + C^\prime K^{1/2}t \frac{n+n^*}{3(\min(m \inf n_i,  n^*))^{2}} } \right]
\end{eqnarray}
By Assumption~\ref{cond::regularity}(c), the right hand side of equation~\eqref{eq:inequal_W2} goes to 0 if $$t = \bar C K^{1/4}(\log K)^{1/2} \frac{\sqrt{n+n^*}}{\min(m \inf n_i,  n^*)} \quad \text{for some $\bar C >0$.}$$ This, together with Assumption~\ref{cond:reg_ad}(f), gives $\left\Vert \sumq \underline{W}_{ij} \right\Vert_2 
= O_P( K^{1/4}(\log K)^{1/2} n^{-1/2})$.

\end{proof}

\begin{lemma}
\label{lem:ad_onestep8.6}
If Assumption~\ref{cond:reg_ad} holds,
with probability tending to one, $\nu_{\min} \left( \sum_{ij \in \pp: } m^{-1} \sumi \bar B_i \bar B_i^\T \right) > \tilde C$ for some constant $\tilde C >0$.
\end{lemma}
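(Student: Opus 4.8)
The plan is to prove this lower bound on the smallest eigenvalue of the study-level Gram matrix $\hat\Sigma := \sumi \bar B_i \bar B_i^\T / m$ in the same spirit as the individual-level analogue Lemma~\ref{lem::onestep8.6}, except that the averaging is now over the $m$ studies rather than over the $n$ individuals. Writing $\Sigma_B := \E_{\dq}\{\bar B_i \bar B_i^\T\}$ for the population second-moment matrix of the study-averaged basis vectors, I would combine two ingredients: (i) a matrix concentration bound $\|\hat\Sigma - \Sigma_B\|_2 = o_P(1)$, and (ii) a population lower bound $\nu_{\min}(\Sigma_B) \geq c_0$ for some fixed $c_0 > 0$. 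Weyl's inequality then gives $\nu_{\min}(\hat\Sigma) \geq \nu_{\min}(\Sigma_B) - \|\hat\Sigma - \Sigma_B\|_2 \geq c_0/2 =: \tilde C$ with probability tending to one.

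For ingredient (i), I would treat the study summaries $\bar B_1, \dots, \bar B_m$ as independent across studies and apply the matrix Bernstein inequality of Lemma~\ref{lem::mat_inequal} to the centered matrices $W_i := (\bar B_i \bar B_i^\T - \Sigma_B)/m$. Since each $\bar B_i$ is a convex combination of the $B(X_{ij})$, the convex-hull and boundedness content of Assumption~\ref{cond:reg_ad}(b) gives $\|\bar B_i\|_2 \le \sup_{x\in\X}\|B(x)\|_2 \le CK^{1/2}$, so $\|W_i\|_2 \lesssim K/m$, and the variance proxy obeys $\|\sumi \E(W_i W_i^\T)\|_2 \lesssim m^{-1}\,\E\{\|\bar B_i\|_2^2\,\bar B_i\bar B_i^\T\}\big\|_2 \lesssim K/m$, using $\|\Sigma_B\|_2 \le \|\E_{\dq}\{\bar B_i\bar B_i^\T\}\|_F \le C$. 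The inequality then yields $\|\hat\Sigma - \Sigma_B\|_2 = O_P(\sqrt{K\log K/m})$, which is $o_P(1)$ because Assumption~\ref{cond:reg_ad}(c) forces $K = O(m^{\alpha_2})$ with $\alpha_2 < 1$. An equivalent route replaces each $\bar B_i$ by its study-level mean $\tilde B_i = \E_{\dpp}\{B(X)\mid G=i\}$: writing $\bar B_i = \tilde B_i + \Delta_i$ with $\sup_i\|\Delta_i\|_2 = O_P(\sqrt{Km\log m/n})$ from Lemma~\ref{lem:ad_wdiff_term2}, the induced perturbation of $\hat\Sigma$ has spectral norm $O_P(\sqrt{K^2 m\log m/n})=o_P(1)$ under Assumption~\ref{cond:reg_ad}(c),(f), and this gives the cleaner population target $\E_G\{\tilde B_G \tilde B_G^\T\}$.

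For ingredient (ii), I would use the variance decomposition $\Sigma_B = \E_G\{\tilde B_G \tilde B_G^\T\} + \E_G\{n_G^{-1}\,\mathrm{Cov}_{\dpp}(B(X)\mid G)\}$, whose second term is positive semidefinite, so $\Sigma_B \succeq \E_G\{\tilde B_G \tilde B_G^\T\}$. It therefore suffices to bound the between-study moment $\E_G\{\tilde B_G \tilde B_G^\T\}$ from below, i.e. to show that the study means $\tilde B_i$ are not confined to a proper subspace of $\R^K$. This non-degeneracy of the study-level design is precisely what makes the study-level balancing problem well posed, and I would draw it from the eigenvalue condition in Assumption~\ref{cond:reg_ad}(d), read at the study level.

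I expect this last step to be the \emph{main obstacle}. Unlike the individual-level case, where Assumption~\ref{cond:reg_ad}(d) directly controls $\nu_{\min}(\E_{\dq}\{B(X)B(X)^\T\})$, here the marginal second moment splits as $\E_{\dq}\{B B^\T\} = \E_G\{\tilde B_G\tilde B_G^\T\} + \E_G\{\mathrm{Cov}_{\dpp}(B\mid G)\}$, and the within-study covariance can absorb eigenvalue mass that averaging over individuals strips away from $\bar B_i$. Moreover, under Assumption~\ref{cond:reg_ad}(f) the study sizes grow ($\inf_i n_i \asymp n/m \to \infty$), so the within-study term $\E_G\{n_G^{-1}\mathrm{Cov}_{\dpp}(B\mid G)\}$ is asymptotically negligible and offers no help; the uniform lower bound must rest entirely on the between-study moment being non-degenerate. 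Establishing that enough of the spectrum of $\E_{\dq}\{BB^\T\}$ survives at the between-study level is thus the substantive content of the argument, and it is the one place where study-level, rather than individual-level, structure is genuinely essential.
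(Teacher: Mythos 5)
Your proposal follows essentially the same route as the paper's proof: center $\underline{W}_i = m^{-1}\{\bar B_i \bar B_i^\T - \E(\bar B_i \bar B_i^\T)\}$, apply the matrix Bernstein inequality of Lemma~\ref{lem::mat_inequal} with $\|\underline{W}_i\|_2 = O(K/m)$ and variance proxy $O(K/m)$ (both from Assumption~\ref{cond:reg_ad}(b)), obtain $\|D - D^*\|_2 = O_P\{(K\log K/m)^{1/2}\} = o_P(1)$ using $K = O(m^{\alpha_2})$ with $\alpha_2 < 1$, and finish with Weyl's inequality --- the same bounds, rates, and assumptions as the paper. The one point of divergence is the step you single out as the main obstacle, and your instinct there is correct: the paper dispatches the population bound $\nu_{\min}(D^*) \geq C$ simply by citing Assumption~\ref{cond:reg_ad}(d), i.e., it reads the eigenvalue condition as applying to the study-level matrix $D^* = \E(\bar B_i \bar B_i^\T)$, which is exactly the ``read at the study level'' interpretation you propose. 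As you observe, this is a genuine strengthening rather than a consequence of (d) as written: since $\E_{\dq}\{B(X)B(X)^\T\} = \E\{\tilde B_G \tilde B_G^\T\} + \E\{\mathrm{Cov}_{\dpp}(B(X)\mid G)\}$ whereas $D^* = \E\{\tilde B_G\tilde B_G^\T\} + \E\{n_G^{-1}\mathrm{Cov}_{\dpp}(B(X)\mid G)\}$, the within-study covariance that can support the individual-level eigenvalue bound is attenuated by $1/n_G$ at the study level, and $D^*$ could be nearly rank one (for instance, if all study means coincide) even when the individual-level second moment is well conditioned. So your argument is complete under the same implicit reading of (d) that the paper itself relies on; the obstacle you flag is not a gap in your proof but an imprecision in the paper's stated regularity conditions, and your secondary route through Lemma~\ref{lem:ad_wdiff_term2} (perturbation $O_P\{(K^2 m \log m/n)^{1/2}\} = o_P(1)$ under Assumption~\ref{cond:reg_ad}(c) and (f)) is also valid and makes the between-study target $\E\{\tilde B_G\tilde B_G^\T\}$ explicit.
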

\begin{proof}[Proof of Lemma~\ref{lem:ad_onestep8.6}]
We follow the proof structure of \citet[Lemma 8.6]{chattopadhyay2024one}.
Let $D:= m^{-1} \sumi \bar B_i \bar B_i^\T$ and $D^* = \E_{\dpp}[m^{-1} \sumi \bar B_i \bar B_i^\T]$. 
We will first use Lemma~\ref{lem::mat_inequal} to show that $\Vert D - D^*\Vert_2 = o_p(1)$.
Denote $\underline{W}_{i} = m^{-1} \bar B_i \bar B_i^\T -  m^{-1}\E_{\dpp}[ \bar B_i \bar B_i^\T]$. By construction $\E_{\dpp}[\underline{W}_{i} ] = 0$. Moreover,
\begin{eqnarray*}
\Vert  \bar B_i \bar B_i^\T \Vert_2 \leq \Vert  \bar B_i \bar B_i^\T \Vert_F \leq C^\prime \sup_{x \in \X} \Vert B(x)\Vert_2^2 \leq C^{\prime\prime} K,
\end{eqnarray*}
where the last inequality holds due to Assumption~\ref{cond:reg_ad}(b). 
We also have
\begin{eqnarray*}
\Vert \E_{\dpp}[\bar B_i \bar B_i^\T]  \Vert_2 \leq \Vert \E_{\dpp}[\bar B_i \bar B_i^\T]  \Vert_F \leq C^\prime,
\end{eqnarray*}
where the first inequality is due to the monotonicity of the spectral norm and the last inequality holds due to Assumption~\ref{cond:reg_ad}(b).
This implies $\Vert \underline{W}_{i} \Vert_2 \leq \{C(K+1)\}/m$ for some constant $C>0$. Next, it follows that
\begin{eqnarray*}
\left\|\sumi \E[\underline{W}_i \underline{W}_i^\top] \right\|_2
&\leq& \sumi \left\| \E[\underline{W}_i \underline{W}_i^\top] \right\|_2 \\
&\leq& m^{-2} \sumi \left( CK \Vert \E_{\dpp}[\bar B_i \bar B_i^\T ] \Vert_2 + \Vert \E_{\dpp}[\bar B_i \bar B_i^\T ] \E_{\dpp}[\bar B_i \bar B_i^\T ] \Vert_2 \right) \\
&\leq& m^{-2} \sumi \left( CK \Vert \E_{\dpp}[\bar B_i \bar B_i^\T ] \Vert_2 + \Vert \E_{\dpp}[\bar B_i \bar B_i^\T ]  \Vert_2^2 \right) \\
&\leq& \{C^\prime(K+1)\}/m
\end{eqnarray*}
for some large $C^\prime >0$.
Here the first inequality is due to the triangle inequality; the second inequality is due to Assumption~\ref{cond:reg_ad}(b) and monotonicity of the spectral norm; the third inequality is
due to the submultiplicativity of the spectral norm. Since $\underline{W}_{i}$ is symmetric we also have
\begin{eqnarray*}
\left\|\sumi \E[\underline{W}_i^\T \underline{W}_i] \right\|_2 \leq \{C^\prime(K+1)\}/m.
\end{eqnarray*}
By Lemma~\ref{lem::mat_inequal} we get
\begin{eqnarray*}
\Pr \left( \left\Vert \sumi \underline{W}_{i} \right\Vert_2   \geq t \right) 
&\leq& 2K \exp \left(   \frac{t^2/2}{C^\prime(K+1)m^{-1} + C(K+1)t(2m)^{-1}}\right) \\
&\leq& 2K \exp \left(   m t^2 /\{C^{\prime\prime} k (1+t) \}\right)
\end{eqnarray*}
for a large constant $C^{\prime\prime}>0$. Notice that the RHS goes to zero for $t = \bar C \{(K \log K)/m \}^{1/2}$ with a constant $\bar C  > 0$. This implies
\begin{eqnarray*}
    \Vert D -D^*\Vert_2 = O_p\left[ \{(K \log K)/m \}^{1/2} \right] = o_p(1).
\end{eqnarray*}
The last equality is due to Assumption~\ref{cond:reg_ad}(c). Weyl’s inequality then ensures
\begin{eqnarray*}
\nu_{\min} \left( D \right)  \geq
\nu_{\min} \left( D^* \right) - \Vert D -D^*\Vert_2 \geq C - \Vert D -D^*\Vert_2,
\end{eqnarray*}
where the last inequality follows from Assumption~\ref{cond:reg_ad}(d). Since $\Vert D -D^*\Vert_2 = o_p(1)$, we have for $m$ large enough, $\nu_{\min} \left( D \right) \geq C/2 >0$.

\end{proof}

\begin{lemma}
\label{lem:ad_wdiff_term1}
Under Assumption~\ref{cond::1}--~\ref{cond:ad}, \ref{cond:ad_weight} and~\ref{cond:reg_ad},
there exists a dual solution $\lambda^\dag_3$ to~\eqref{eq:AD} s.t. 
$\Vert \lambda_3^\dag - \tilde\lambda_3 \Vert_2 = O_p\left\{ \sqrt{K^3 m \log m /n} \right\}$.
\end{lemma}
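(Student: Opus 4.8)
The plan is to adapt the convexity argument from the proof of Lemma~\ref{lem:lambda_diff} to the study-level dual of Program~\eqref{eq:AD}. First I would set $r = C^* \sqrt{K^3 m \log m / n}$ for a sufficiently large constant $C^* > 0$, write $\Delta = \lambda - \tilde\lambda_3$, and let $\mathcal{C} = \{\Delta \in \R^K : \Vert \Delta \Vert_2 \leq r\}$ with boundary $\partial\mathcal{C} = \{\Delta : \Vert \Delta \Vert_2 = r\}$. By Theorem~\ref{thm:dual}(a), with each study playing the role of a single observation carrying covariate summary $\bar B_i$ and scaling factor $c_i$, the profiled dual objective $G(\lambda)$ (defined as in the proof of Lemma~\ref{lem:lambda_diff}) is convex and continuous in $\lambda$. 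Consequently, it suffices to show that $\Pr\{\inf_{\Delta \in \partial\mathcal{C}} [G(\tilde\lambda_3 + \Delta) - G(\tilde\lambda_3)] > 0\} \to 1$, since this forces the dual minimizer $\lambda_3^\dag$ to lie inside $\mathcal{C}$.

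Next I would Taylor-expand $G(\tilde\lambda_3 + \Delta) - G(\tilde\lambda_3)$ to second order, obtaining a gradient (linear) term, a Hessian (quadratic) term, and the imbalance contribution involving $\delta$. By Cauchy--Schwarz and the reverse triangle inequality, the linear and imbalance terms are bounded below by $-r(\Vert \nabla G(\tilde\lambda_3)\Vert_2 + \Vert \delta \Vert_2)$, where $\Vert \delta \Vert_2 = O_P(\sqrt{K^3 m \log m / n})$ by Assumption~\ref{cond:reg_ad}(e). The quadratic term, using the strong concavity of $\rho$ from Assumption~\ref{cond:reg_ad}(a) together with the scaling bound $c_i \in [a_1, a_2]$ from Assumption~\ref{cond:reg_ad}(h), is bounded below by $C\, r^2\, \nu_{\min}(m^{-1}\sumi \bar B_i \bar B_i^\T)$, and Lemma~\ref{lem:ad_onestep8.6} keeps this smallest eigenvalue above a positive constant with probability tending to one; hence the quadratic term is at least $C'' r^2$.

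The key step, and the one I expect to be the main obstacle, is controlling $\Vert \nabla G(\tilde\lambda_3) \Vert_2$. Because the empirical dual is evaluated at the sample study means $\bar B_i$ while $\tilde\lambda_3$ is defined through the population conditional means $\tilde B_i = \E_{\dpp}\{B(X)\mid G=i\}$ via Assumption~\ref{cond:ad_weight}, the gradient does not reduce directly to the centered quantity handled by Lemma~\ref{lem:ad_onestep8.5}. I would therefore split
\[
\nabla G(\tilde\lambda_3) = \Big\{ -\sumi \tilde w_{(i)} \bar B_i + B^* \Big\} + \sumi \big\{ \tilde w_{(i)} - \rho'(\bar B_i^\T \tilde\lambda_3 / c_i)\, 1\{ \rho'(\bar B_i^\T \tilde\lambda_3 / c_i) \geq 0 \} \big\} \bar B_i.
\]
The first bracket is $O_P(K^{1/4}(\log K)^{1/2} n^{-1/2})$ by Lemma~\ref{lem:ad_onestep8.5}. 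For the second sum I would apply the mean value theorem with the bound on $m\rho''$ from Assumption~\ref{cond:reg_ad}(a) to get $|\tilde w_{(i)} - \rho'(\bar B_i^\T \tilde\lambda_3/c_i)\,1\{\cdot\}| \leq C m^{-1} \Vert \bar B_i - \tilde B_i \Vert_2 \Vert \tilde\lambda_3 \Vert_2$, then combine $\Vert \tilde\lambda_3 \Vert_2 \leq C K^{1/2}$ (Assumption~\ref{cond:reg_ad}(i)), $\Vert \bar B_i \Vert_2 \leq C K^{1/2}$ (Assumption~\ref{cond:reg_ad}(b)), and $\sup_i \Vert \bar B_i - \tilde B_i \Vert_2 = O_P(\sqrt{K m \log m / n})$ (Lemma~\ref{lem:ad_wdiff_term2}), summing over the $m$ studies so that the factor $m$ cancels the $m^{-1}$. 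This yields a bound of order $\sqrt{K^3 m \log m / n}$, which dominates the first bracket and matches the imbalance rate.

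Putting these pieces together, on $\partial\mathcal{C}$ we obtain $G(\tilde\lambda_3 + \Delta) - G(\tilde\lambda_3) \geq C'' r^2 - r\, O_P(\sqrt{K^3 m \log m / n})$, and substituting $r = C^* \sqrt{K^3 m \log m / n}$ makes the quadratic term $C'' (C^*)^2 (K^3 m \log m / n)$ dominate the linear term $C^* O_P(K^3 m \log m / n)$ once $C^*$ is chosen large enough. Strict positivity on the boundary then holds with probability tending to one, which establishes $\Vert \lambda_3^\dag - \tilde\lambda_3 \Vert_2 = O_P(\sqrt{K^3 m \log m / n})$. The subtlety worth flagging is the $1\{\rho'(\cdot)\geq 0\}$ selection indicator appearing in the Hessian, which I would handle as in the individual-level argument by noting that, on the target support, $\rho'(\bar B_i^\T \tilde{\tilde\lambda}/c_i)$ is asymptotically nonnegative since the intermediate point $\tilde{\tilde\lambda}$ lies within $O_P(r)$ of $\tilde\lambda_3$.
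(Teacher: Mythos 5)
Your proposal follows essentially the same route as the paper's own proof: the same localization radius $r = C^*\sqrt{K^3 m \log m/n}$ with the convexity-plus-boundary-positivity argument for the profiled dual $G$, the same decomposition of the gradient into the centered term controlled by Lemma~\ref{lem:ad_onestep8.5} and the $\bar B_i$-versus-$\tilde B_i$ discrepancy term controlled via the mean value theorem, Lemma~\ref{lem:ad_wdiff_term2}, and Assumptions~\ref{cond:reg_ad}(a), (h), (i), and the same quadratic lower bound $C'' r^2$ via Lemma~\ref{lem:ad_onestep8.6}. The argument is correct, and your flagging of the $1\{\rho'(\cdot)\geq 0\}$ indicator in the Hessian and gradient terms is a genuine subtlety that the paper itself glosses over, handled exactly as you suggest by the sign-comparison device used in the individual-level case.
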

\begin{proof}[Proof of Lemma~\ref{lem:ad_wdiff_term1}]
We follow the proof structure of Lemma~\ref{lem:lambda_diff}. 

Set $r = C^* \left\{ \left\{ \sqrt{K^3 m \log m /n} \right\} \right\}$ for a sufficiently large constant $C^* >0$. Let $\Delta = \lambda - \tilde\lambda_{3}$. Also, set $\mathcal{C} = \{ \Delta \in \R^K: \Vert \Delta \Vert_2 \leq r  \}$. To show that there exists a dual solution $\lambda_3^\dag$ s.t. $\Vert \lambda^\dag_3 - \tilde\lambda_{3} \Vert_2 = O_p\left\{ \sqrt{K^3 m \log m /n} \right\}$, it suffices to show that there exists a $\Delta^\dag \in \R^K$ s.t.$\Pr(\Delta^\dag \in \mathcal{C}) \rightarrow 1$ as $n \rightarrow \infty$.

Recall from Theorem~\ref{thm:dual} the dual objective of~\eqref{eq:PBM0} is given by
\begin{eqnarray*}
g(\lambda, u) = \sum_{ij \in \pq} \left[ - S_{ij} \frac{c_i}{n_i} \rho \left\{{ (\bar B_i^\T \lambda  - u_i)}/{c_i}   \right\} \right] + (B^*) ^\T  \lambda_{3} + |\lambda_{3} |^\T \delta.
\end{eqnarray*}
We further denote $G(\lambda):= \min_{u \geq 0} g(\lambda, u)$. Since $\psi(\cdot)$ is convex, $\rho(\cdot)$ is concave, $G(\tilde \lambda_{3} + \Delta)$ is also convex in $\Delta$. Moreover, $G(\tilde \lambda_{3} + \Delta)$ is also continuous in $\Delta$. Therefore, it suffices to show that as $n \rightarrow \infty$
\begin{eqnarray}
\label{eq:S5.1}
\Pr \left( \inf_{\Delta \in \partial \mathcal{C}} G(\tilde \lambda_{3} + \Delta) - G(\tilde \lambda_{3}) > 0   \right) \rightarrow 1,
\end{eqnarray}
where $\partial \mathcal{C}$ is the boundary set of $\mathcal{C}$ given by $\partial \mathcal{C} = \{\Delta \in \R^K:  \Vert \Delta \Vert_2 = r \}$.

Similarly to~\eqref{eq:G_diff}, by multivariate Taylor's Theorem, Cauchy--Schwarz inequality and the triangle inequality, for some intermediate $\tilde{\tilde \lambda}$ we have
\begin{eqnarray*}
&& G(\tilde \lambda_{3} + \Delta) - G(\tilde \lambda_{3}) \\
&\geq& - r \left\Vert  
\sum_{ij \in \pq} \left[ - S_{ij} \frac{1}{n_i} 1\left \{\rho^\prime \left\{\bar B_i^\T \tilde \lambda_{3} / c_i \right\} \geq 0 \right\} \rho^\prime \left\{\bar B_i^\T \tilde \lambda_{3} / c_i \right\} \bar B_i \right] + B^* \right\Vert_2 \nonumber \\
&& + (\Delta^\T \underline{M} \Delta)/2  - r \Vert\delta\Vert_2,
\end{eqnarray*}
where $\underline{M} = \sum_{ij \in \pq} \left[ -  S_{ij} \frac{1}{n_i c_i} 1\{\rho^\prime \{\bar B_i^\T \tilde {\tilde \lambda}/c_i \} \geq 0 \} \rho^{\prime\prime} \{\bar B_i^\T \tilde {\tilde \lambda} /c_i  \} \bar B_i \bar B_i^\T \right]$.
Recall for simplicity we define $\tilde B_i = \E_{\dpp}[B(X) \mid G=i]$. By Cauchy-Schwartz we have
{ \small
\begin{eqnarray}
&& \left\Vert  
\sum_{ij \in \pq} \left[ - S_{ij} \frac{1}{n_i} 1\left \{\rho^\prime \left\{\bar B_i^\T \tilde \lambda_{3} / c_i \right\} \geq 0 \right\} \rho^\prime \left\{\bar B_i^\T \tilde \lambda_{3} / c_i \right\} \bar B_i \right] + B^* \right\Vert_2 \nonumber \\
&=& \left\Vert  
\sum_{i=1}^m \left[ - 1\left \{\rho^\prime \left\{\bar B_i^\T \tilde \lambda_{3} / c_i \right\} \geq 0 \right\} \rho^\prime \left\{\bar B_i^\T \tilde \lambda_{3} / c_i \right\} \bar B_i \right] + B^* \right\Vert_2 \nonumber \\
&\leq&  \left\Vert  
\sum_{i=1}^m \left[ - 1\left \{\rho^\prime \left\{\tilde B_i^\T \tilde \lambda_{3} / c_i \right\} \geq 0 \right\} \rho^\prime \left\{\tilde B_i^\T \tilde \lambda_{3} / c_i \right\} \bar B_i \right] + B^* \right\Vert_2 \nonumber \\
&& + \left\Vert  
\sum_{i=1}^m  - 1\left \{\rho^\prime \left\{\bar B_i^\T \tilde \lambda_{3} / c_i \right\} \geq 0 \right\} \rho^\prime \left\{\bar B_i^\T \tilde \lambda_{3} / c_i \right\} \bar B_i + 1\left \{\rho^\prime \left\{\tilde B_i^\T \tilde \lambda_{3} / c_i \right\} \geq 0 \right\} \rho^\prime \left\{\tilde B_i^\T \tilde \lambda_{3} / c_i \right\} \bar B_i \right\Vert_2 \nonumber \\
&\leq& \left\Vert  
\sum_{i=1}^m  - \tilde w_{(i)}  \bar B_i  + B^* \right\Vert_2 + \sum_{i=1}^m  \left| 
\rho^\prime \left\{\tilde B_i^\T \tilde \lambda_{3} / c_i \right\}   -  \rho^\prime \left\{\bar B_i^\T \tilde \lambda_{3} / c_i \right\} \right| \Vert \bar B_i \Vert_2 \nonumber \\
&\leq& \left\Vert  
\sum_{i=1}^m  - \tilde w_{(i)}  \bar B_i  + B^* \right\Vert_2 + \sum_{i=1}^m   \left |  
\rho^\prime \left\{\tilde B_i^\T \tilde \lambda_{3} / c_i \right\}   -  \rho^\prime \left\{\bar B_i^\T \tilde \lambda_{3} / c_i \right\} \right| CK^{1/2} 
\label{eq:G_diff_ad_1}
\end{eqnarray}
}
Here the first inequality follows from the triangle inequality. 
The second inequality holds by Assumption~\ref{cond:ad_weight}, the triangle inequality and the Cauchy-Schwartz inequality {\color{red} and ...}. The third inequality is due to Assumption~\ref{cond:reg_ad}(b).

Lemma~\ref{lem:ad_onestep8.5} ensures that the first part in~\eqref{eq:G_diff_ad_1} satisfies
\begin{eqnarray*}
\left\Vert  
\sum_{i=1}^m  - \tilde w_{(i)}  \bar B_i  + B^* \right\Vert_2 = O_p \left\{ K^{1/4} (\log K)^{1/2}n^{-1/2} \right\} = o_p\left\{ \sqrt{K^3 m \log m /n} \right\}.
\end{eqnarray*}
For the second part in~\eqref{eq:G_diff_ad_1}, we have 
\begin{eqnarray*}
&& \sum_{i=1}^m   \left|  
\rho^\prime \left\{\tilde B_i^\T \tilde \lambda_{3} / c_i \right\}   -  \rho^\prime \left\{\bar B_i^\T \tilde \lambda_{3} / c_i \right\} \right| \\
&=& \sum_{i=1}^m \frac{1}{m}   \left|  
m\rho^\prime \left\{\tilde B_i^\T \tilde \lambda_{3} / c_i \right\}   - m \rho^\prime \left\{\bar B_i^\T \tilde \lambda_{3} / c_i \right\} \right| \\
&=& \sum_{i=1}^m \frac{1}{m} \left| \frac{m}{c_i} \rho^{\prime\prime}  \left\{\tilde {\tilde B}_i^\T \tilde{ \lambda}_3 / c_i \right\} (\tilde B_i - \bar B_i)^\T \tilde \lambda_3  \right|  \\
&\leq& C \sup_i \Vert \tilde B_i - \bar B_i \Vert_2 \Vert \tilde \lambda_3 \Vert_2 = O_p\left\{ K\sqrt{ m \log m /n} \right\}.
\end{eqnarray*}
The second equality holds by multivariate Taylor's theorem for some intermediate $\tilde {\tilde B}_i$. The inequality is due to Cauchy-Schwartz inequality and Assumption~\ref{cond:reg_ad}(a) and (h).
The final equation follows from Assumption~\ref{cond:reg_ad}(i) and Lemma~\ref{lem:ad_wdiff_term2}.
Therefore, we get
\begin{eqnarray*}
\left\Vert  
\sum_{ij \in \pq} \left[ - S_{ij} \frac{1}{n_i} 1\left \{\rho^\prime \left\{\bar B_i^\T \tilde \lambda_{3} / c_i \right\} \geq 0 \right\} \rho^\prime \left\{\bar B_i^\T \tilde \lambda_{3} / c_i \right\} \bar B_i \right] + B^* \right\Vert_2 
= O_p\left\{ \sqrt{K^3 m \log m /n} \right\}.    
\end{eqnarray*}
This combined with Assumption~\ref{cond:reg_ad}(e) implies that with probability tending to one,
\begin{eqnarray*}
&& G(\tilde \lambda_{3} + \Delta) - G(\tilde \lambda_{3})\\
& \geq & (\Delta^\T \underline{M} \Delta)/2 - r O_p \left\{  K \sqrt{m \log m/n} \right\} \\
&=& \sum_{ij \in \pq} \left[ - S_{ij} \frac{1}{n_i c_i} 1\{\rho^\prime \{\bar B_i^\T \tilde {\tilde \lambda}/c_i \} \geq 0 \} \rho^{\prime\prime} \{\bar B_i^\T \tilde {\tilde \lambda} /c_i  \} (\Delta^\T \bar B_i)^2\right]/2 \\
&& - r O_p\left\{ \sqrt{K^3 m \log m /n} \right\} \\
&\geq& C m^{-1} \sum_{i=1}^m (\Delta^\T \bar B_i)^2 - r  O_p\left\{ \sqrt{K^3 m \log m /n} \right\} \\
& =& C \Delta^\T \left\{ m^{-1} \sumi \bar B_i \bar B_i^\T \right\} \Delta - r O_p\left\{ \sqrt{K^3 m \log m /n} \right\} \\
& \geq & C r^2 \nu_{\min} \left\{ m^{-1} \sumi \bar B_i \bar B_i^\T \right\} - r O_p O_p\left\{ \sqrt{K^3 m \log m /n} \right\} \\
& \geq & C^{\prime \prime} r^2  - r O_p\left\{ \sqrt{K^3 m \log m /n} \right\} \\
& = & C^{\prime \prime} (C^*)^2 \left\{ K^3 m \log m /n \right\}
- C^* O_p \left\{  K^3 m \log m /n \right\} > 0.
\end{eqnarray*}
Here, the second inequality follows from Assumption~\ref{cond:reg_ad}(a) and (h).
The third inequality holds since for a square matrix $A$, $x^\T A x \geq \nu_{\min}(A) \Vert x \Vert_2^2$. 
The forth inequality follows from Lemma~\ref{lem:ad_onestep8.6}. 
Finally, the last inequality holds for a choice of $C^*$ large enough. 
This completes the proof.
\end{proof}

\begin{proof}[Proof of Lemma~\ref{lem:w_cnvg_ad}]
Recall from~\eqref{eq:ad_w_bound} we have
\begin{eqnarray*}
\sup_i |m\hat w^\pb_{(i)} - m\tilde w_{(i)}| \leq C K^{1/2} \Vert \lambda_3^\dag - \tilde\lambda_3 \Vert_2 + C K^{1/2} \sup_i \Vert| \bar B_i - \tilde B_i \Vert_2 .
\end{eqnarray*}
Lemma~\ref{lem:ad_wdiff_term1} and~\ref{lem:ad_wdiff_term2}, together with Assumption~\ref{cond:reg_ad}(c), ensure that $\sup_i |m\hat w^\pb_{(i)} - m\tilde w_{(i)}| = o_p(1)$
\end{proof}

\begin{proof}[Proof of Theorem~\ref{thm:ad_consistency}]
We first show that when Assumption~\ref{cond:ad_outcome} holds, $\sumi \hat{w}^\pb_{(i)} \hat\tau_i \cp \tau$ as $n \rightarrow \infty$. For simplicity we write $\epsilon_i = \hat\tau_i - \tau_i$ with $\E_{\dq}[\epsilon_i] = 0$ by Assumption~\ref{cond:ad}(b). We get
\begin{eqnarray*}
&& \left| \sumi\hat{w}^\pb_{(i)} \hat\tau_i - \tau \right| \nonumber\\
&\leq& \left|\tilde \lambda_4^\T \left\{ \sumi \hat{w}^\pb_{(i)} \tilde B_i - B^* \right\} \right| + \left| \tilde\lambda_4 B^* - \tau \right| + \left|\sumi \hat{w}^\pb_{(i)} \tilde\lambda_4^\T  \left\{ \bar B_i - \tilde B_i  \right\} \right|   + \left| \sumi\hat{w}^\pb_{(i)} \epsilon_i \right|  \nonumber\\
&\leq& \Vert \tilde \lambda_4 \Vert_2 \Vert \delta\Vert_2 + o_p(1) \left|\sumi \hat{w}^\pb_{(i)} \tilde\lambda_4^\T  \left\{ \bar B_i - \tilde B_i  \right\} \right|   + \left| \sumi\hat{w}^\pb_{(i)} \epsilon_i \right| \nonumber\\ 
&\leq& \Vert \tilde \lambda_4 \Vert_2 \Vert \delta\Vert_2 + o_p(1) + \Vert\tilde\lambda_4\Vert_2 \sup_i \Vert \bar B_i - \tilde B_i \Vert_2     + \sup_i\left| \epsilon_i \right| \nonumber\\
&\leq& o_p(1) + \sup_i\left| \epsilon_i \right|.
\label{eq:ad_con_outcome}
\end{eqnarray*}
The first inequality follows from the triangle inequality and Assumption~\ref{cond:ad_outcome}. The second inequality is due to the imbalance control constraint in the optimization problem. 
The third inequality follows from the Cauchy-Schwarz inequality, and the constraints that $\hat{w}^\pb_{(i)}$ are non-negative and sum to one.
The last inequality holds because of Assumption~\ref{cond:ad_outcome}, Lemma~\ref{lem:ad_wdiff_term2} and Assumption~\ref{cond:reg_ad}(c), (i). We now show that $\sup_i\left| \epsilon_i \right|  = o_p(1)$. Assumption~\ref{cond:reg_ad} and uniform bound ensure that 
\begin{eqnarray*}
\Pr\left( \sup_i \sqrt{n_i}| \hat\tau_i - \tau_i| \geq t  \right) \leq 2m \exp(-t^2/2C),
\end{eqnarray*}
where the RHS goes to 0 if $t = \sqrt{2\bar C\log m}$ for some $\bar C$.
This implies $\sup_i \sqrt{n_i}| \hat\tau_i - \tau_i| = O_p(\sqrt{\log m})$. Thus,
\begin{eqnarray}
\label{eq:sup_epsi}
\sup_i |\epsilon_i| = \sup_i |\hat\tau_i - \tau_i| = O_p\left\{\sqrt{\log m/ \inf_i n_i}\right\} 
= o_p(1),
\end{eqnarray}
where 
the last equality follows from Assumption~\ref{cond:reg_ad}(f). This completes the first half of the proof.

Now to prove Theorem~\ref{thm:ad_consistency}, it suffices to show that when Assumption~\ref{cond:ad_weight} holds,  $\sumi \hat{w}^\pb_{(i)} \hat\tau_i \cp \tau$ as $n \rightarrow \infty$.
\begin{eqnarray*}
 \left| \sumi \hat{w}^\pb_{(i)} \hat\tau_i - \tau \right| 
& \leq &  \left| \sumi \tilde{w}_{(i)} \hat\tau_i - \sumi \tilde{w}_{(i)} \tau_i \right| + \left| \sumi (\tilde{w}_{(i)} -\hat{w}^\pb_{(i)})\hat\tau_i \right|\\
&\leq& \sup_i |\epsilon_i| + \sup_i |m\hat{w}^\pb_{(i)} - m\tilde{w}_{(i)}|\left| \sumi \hat\tau_i/m \right|  
= o_P(1).
\end{eqnarray*}
The first inequality follows from Assumption~\ref{cond:ad} and the triangle inequality.
The final equality is ensured by~\eqref{eq:sup_epsi} and Lemma~\ref{lem:w_cnvg_ad}. This completes the proof.

\end{proof}


\end{document}